\newcommand{\algorithmfootnote}[2][\footnotesize]{%
  \let\old@algocf@finish\@algocf@finish
  \def\@algocf@finish{\old@algocf@finish
    \leavevmode\rlap{\begin{minipage}{\linewidth}
    #1#2
    \end{minipage}}%
}}%
\theoremstyle{plain}
\newtheorem{theorem}{Theorem}[section]
\theoremstyle{definition}
\theoremstyle{remark}
\newtheorem{prop}[theorem]{Proposition}
\newcommand{\myalpha}{\boldsymbol{\alpha}}
\newcommand{\mybeta}{\boldsymbol{\beta}}
\newcommand{\mymu}{\boldsymbol{\mu}}
\newcommand{\mytheta}{\boldsymbol{\theta}}
\newcommand{\myTheta}{\boldsymbol{\Theta}}
\newcommand{\myeta}{\boldsymbol{\eta}}
\newcommand{\mygamma}{\boldsymbol{\gamma}}
\newcommand{\myB}{\mathbf{B}}
\newcommand{\myW}{\mathbf{W}}
\newcommand{\myX}{\mathbf{X}}
\newcommand{\myY}{\mathbf{Y}}
\newcommand{\myZ}{\mathbf{Z}}
\newcommand{\myh}{\mathbf{h}}
\newcommand{\myn}{\mathbf{n}}
\newcommand{\myx}{\mathbf{x}}
\newcommand{\myy}{\mathbf{y}}
\newcommand{\myz}{\mathbf{z}}
\newcommand{\myr}{\mathbf{r}}
\providecommand{\keywords}[1]
{
  \small	
  \textbf{\textit{Keywords---}} #1
}
\begin{document}

\title{Hot-spots Detection in Count Data by Poisson Assisted Smooth Sparse Tensor Decomposition}

\author[1]{Yujie Zhao}
\author[2]{Xiaoming Huo}
\author[2]{Yajun Mei}
\affil[1]{Biostatistics and Research Decision Sciences Department, Merck \& Co., Inc, PA, USA}
\affil[2]{School of Industrial and Systems Engineering, Georgia Institute of Technology, Atlanta, GA, USA}

\maketitle

\begin{abstract}
Count data occur widely in many bio-surveillance and healthcare applications, e.g., the numbers of new patients of different types of infectious diseases from different cities/counties/states repeatedly over time, say, daily/weekly/monthly. 
For this type of count data, one important task is the quick detection and localization of hot-spots in terms of unusual infectious rates so that we can respond appropriately.
In this paper, we develop a method called \textit{Poisson assisted Smooth Sparse Tensor Decomposition (PoSSTenD)}, which not only detect when hot-spots occur but also localize where hot-spots occur.
The main idea of our proposed PoSSTenD method is articulated as follows.
First, we represent the observed count data as a three-dimensional tensor including 
  (1) a spatial dimension for location patterns, e.g., different cities/countries/states;
  (2) a temporal domain for time patterns, e.g., daily/weekly/monthly;
  (3) a categorical dimension for different types of data sources, e.g., different types of diseases.
Second, we fit this tensor into a Poisson regression model, and then we further decompose the infectious rate into two components: smooth global trend and local hot-spots.
Third, we detect when hot-spots occur by building a cumulative sum (CUSUM) control chart and localize where hot-spots occur by their LASSO-type sparse estimation.
The usefulness of our proposed methodology is validated through numerical simulation studies and a real-world dataset, which records the annual number of 10 different infectious diseases from $1993$ to $2018$ for $49$ mainland states in the United States.
\end{abstract}

\keywords{hot-spots detection, spatio-temporal model, tensor decomposition, CUSUM, Poisson regression}

\section{Introduction}
  
Count data are frequently seen in many real-world applications in fields such as biosurveillance \cite[see][]{chen2021natural}, epidemiology and sociology \cite[see][]{zhao2019rapid}.
  Usually, the count data are collected from multiple data sources and from many spatial locations repeatedly over time, say, daily, monthly, or annually.
  For instance, the daily number of people infected by some types of diseases in 50 states in the United States. 
  We call this type of data as \textit{multivariate spatio-temporal count data}, which has three domains: 
  (1) the spatial domain recording the locations, 
  (2) the temporal domain recording the time, and 
  (3) the categorical domain recording the different types of data sources.
When such count data are non-stationary over time, there are two standard research areas: one is model fitting and prediction, and the other is the detection of global/system-wise changes, seeing the classical statistical process control (SPC) or sequential change-point detection literature. 
  
  In this paper, we investigate a new research area of multivariate spatio-temporal count data by focusing on the detection and localization of \textit{hot-spots}.
  Here, the hot-spots are defined as the anomalies/outliers in term of infectious rates occurring in the temporal, spatial, or  categorical domains. There are a couple of subtasks: the \textit{hot-spots detection} aims to detect when hot-spots occur (in the temporal domain) and the \textit{hot-spots localization} aims to localize where hot-spots occur (in the spatial and categorical domain).
  
  It is informative to point out that there are two kinds of anomaly detection for multivariate data: one is global-level change (e.g., the \textit{first-order} changes), and the other is local-level hot-spots (e.g., \textit{second-order} changes). 
  The former is the focus of the classical SPC or sequential change-point detection methods, whereas the latter is the main interest of this paper. 
  Here we assume that local-level hot-spots have 
  (1) spatial sparsity, i.e., the local changes are sparse in the spatial and categorical domains; 
  (2) temporal consistency, i.e., the local changes last for a while once they occur. 
  
  In the remainder of this section, we present the literature review in Section \ref{sec: Literature Review} and articulate our proposed method and our contribution in Section \ref{sec: method overview and contribution}.

\subsection{Literature Review}
\label{sec: Literature Review}

  In this subsection, we review the existing literature on hot-spots detection based on three types of count data: univariate, multivariate, and multivariate spatio-temporal. 
  
  For univariate count data, i.e., when we observe the data of the form $\{y_t\}_{t = 1, \ldots, T}$ with $y_t \in \mathbbm N$ (a set of non-negative integers) denoting the observed count data at time $t$, there are mainly four methods to detect hot-spots:
  (1) time series based method 
  \cite[see][]{killick2014changepoint};
  (2) control charts based method by using the likelihood functions, such as the cumulative sum (CUSUM) control chart 
  \cite[see][]{han2010comparison} 
  and exponentially weighted moving average (EWMA) control chart 
  \cite[see][]{han2010comparison};
  (3) scan statistics based method \cite[see][]{kulldorff2009scan}, including both the frequentist scan statistics
  \cite[see][]{kulldorff1995spatial} 
  and Bayesian scan statistics \cite[see][]{neill2005bayesian};
  (4) Poisson process-based method that is often combined with Bayesian methods \cite[see][]{ihler2006adaptive}.
  
  For multivariate count data, i.e., when we observe the data of the form $\{y_{it}\}_{i = 1,\ldots, n, t = 1, \ldots, T}$ with $y_{it} \in  \mathbbm N$, there are mainly three methods widely-used hot-spots detection:
  (1) control chart based method, including CUSUM or EWMA control charts  \cite[see][]{liu2019scalable};
  (2) Poisson process based method \cite[see][]{turcotte2016poisson};
  (3) hidden Markov model (HMM) based method \cite[see][]{conesa2015bayesian}.
  
  For multivariate spatio-temporal count data, i.e., when the observed data are of the form $y_{ijt} \in \mathbbm N$ that denotes the observed count data over the $i$-th spatial location for the $j$-th type of disease at time $t$ with $i = 1,\ldots, n_1, j = 1, \ldots, n_2, t = 1, \ldots, T$, research on hot-spots detection is rather limited, although there are three types of closely related methods developed for other purposes or other contexts:
  (1) scan statistics for anomalous clusters,
  (2) tensor decomposition for Gaussian data, and
  (3) change point detection for global changes. 
  Since our paper deals with multivariate spatio-temporal count data, below we will provide a more detailed review of these three methods. 
  
  For the scan statistics based method, it was first developed in the 1960s \cite[see][]{naus1963clustering} and later extended by \cite{kulldorff1997spatial} to detect anomalous clusters in spatio-temporal data.
  The main idea of scan statistics is to detect the abnormal clusters by utilizing a maximal log-likelihood ratio.
  It is worth noting that the scan statistics-based method is a parametric method that assumes the data distributions are known.   
  For instance, authors of \cite{tango2011space} assumes the negative binomial distribution, 
  \cite{kulldorff2001prospective}
  investigate the Poisson distribution.
  A limitation of the scan statistics is that it assumes that the background is independent and identically distributed (i.i.d.) or follows a rather simple probability distribution, which might not be suitable to handle non-stationary spatio-temporal data.
  
  For the tensor decomposition-based detection, one representative is \cite{SSD} for Gaussian data, and below is the main idea.
  First, it assumes that the raw data follows the normal distribution and then decomposes it into three components: global trend mean, local hot-spots, and residuals.
  Then, it uses the Tucker decomposition
  \cite[see][]{hitchcock1927expression}
  to decompose the first two components.
  Finally, it detects the hot-spots by monitoring the model residuals.
  Similar ideas can also be found in elsewhere, for example, \cite{AnomalyInImage}.
  The limitation of this method is that it does not consider the effect of population size and the normal distribution is clearly inappropriate for count data, especially those small values.
  Another representative is \cite{eren2020multi}, where the authors first assume the count data follows the Poisson distribution and then decompose the Poisson mean by canonical polyadic decomposition (CPD)
  \cite[see][]{hitchcock1927expression}, 
  and finally localize the hot-spots by small p-values.
  The limitation of this approach is that it does not remove the global trend mean from the raw data, which leads to its incapacity to detect the local hot-spots.
  Besides, it can only realize the hot-spots localization, and the analysis of the detection delay is not reported.
  
  For the change-point detection framework, there are two related categories, the \textit{Least Absolute Shrinkage and Selection} (Lasso) based methods and \textit{dimension reduction based methods}. 
  For the Lasso-based change-point detection method, one of the existing research representative is \cite{zou2009multivariate}.
  Note that Lasso has been demonstrated to be an effective method for variable selection to address sparsity issues for high-dimensional data in the past decades since its developments in \cite{tibshirani1996regression}, and thus it is natural to apply it to detect sparse changes in high-dimensional data 
  \cite[see][]{wei2018u, zou2009multivariate, zou2008LASSO, zhao2021identification}.
  While the sparse change of Lasso is similar to the hot-spots,  unfortunately, as our extensive simulation studies will demonstrate, the Lasso-based control chart is unable to separate the local hot-spots from the non-stationary global trend mean in the spatio-temporal data.     
  For the dimension reduction-based method, the representative is \cite{PCA}.
  The main idea of this type of method is that it first uses principal component analysis (PCA) or other dimension reduction methods to extract the features from the high-dimensional data, and then it combines the reduced dimension information with change-point models to detect the hot-spots.
  For other dimension-reduction-methods, please see \cite{ paynabar2013PCA, PCA, chen2018pseudo}
  for more details.
  The drawbacks of the dimension reduction-based method are the restriction of the change-point detection problem and the failure to consider the spatial sparsity and temporal consistency of hot-spots.


\subsection{Overview of Our Proposed Method and its Contributions}
\label{sec: method overview and contribution}
The essential idea of our proposed method is to combine the tensor decomposition-based method and the Lasso-based method over the multivariate spatio-temporal data.
  Mathematically speaking, the  multivariate spatio-temporal data can be a \textit{tensor} of order three, which is a multi-dimensional array, i.e., $\mathcal Y \in \mathbb R^{n_1 \times n_2 \times n_3}$.
  Here $n_1$ is the number of the spatial locations,  $n_2$ is the number of different diseases or variates,  and $n_3$ is the number of time points.
  In our proposed method, we assume the tensor $\mathcal Y$ comes from the Poisson distribution and then consider the additive model that decomposes its Poisson rate into two components:
  (1) smooth but non-stationary global trend mean,
  and (2) sparse local hot-spots.
  Next, we fit the raw data with a penalty function, i.e., a Lasso type penalty to guarantee the spatial sparsity of hot-spots.
  This allows us to not only detect when the hot-spots happen over the temporal domain (i.e.,  hot-spots detection problem) but also localize where and which types/attributes of the hot-spots occur if the change happens (i.e., hot-spots localization problem).
  We term our proposed decomposition method as Poisson-assisted Smooth Sparse Tensor Decomposition (PoSSTenD).
  
  It is useful to highlight the novelty of our proposed method as compared to those existing methods on multivariate spatio-temporal count data.
  First, our proposed PoSSTenD method takes into account of the effect of population size to focus on the anomalous infectious diseases rates or Poisson rates, which is important in many real-world applications. 
  Second, our proposed PoSSTenD method can detect hot-spots when the global trend of the spatio-temporal data is dynamic (i.e., non-stationary or non-i.i.d).
  That is, our method is robust no matter whether the global trend is decreasing, stationary, or increasing, in the sense that it can detect hot-spots with positive or negative local mean shifts on top of the global trend of raw data. 
  In comparison, existing SPC or change-point detection methods often assume that the background is i.i.d. and focus on detecting the anomalies under the static and i.i.d. background.
  Finally, while our paper focuses only on the tensor of order three arising from our motivating application in 10 annual infectious diseases in the U.S., our proposed hot-spots method can easily be extended to the tensor of order $d$ ($d \geq 3$), as we can simply add corresponding dimensions and bases in the tensor analysis.
  The capability of extending to high-dimensional tensor data is one of the main advantages of our proposed PoSSTenD method.
  
  We would like to clarify that our contribution lies in the hot-spots detection and localization, instead of model fitting/prediction of the multivariate spatio-temporal count data.
  For the fitness of the count data, there are lots of well-established methods.
  One widely used one is the generalized linear regression model \cite[see][Chapter 3]{hastie2015statistical}.
  Another popular method is the time series models such as Autoregressive Moving Average (ARMA).
  In this paper, we will not compare our proposed PoSSTenD method to these methods, because our objective is to detect and localize hot-spots, rather than model fitting.

  The remainder of this paper is as follows.
  In Section \ref{sec: data description}, we introduce and visualize a motivating dataset.
  In Section \ref{sec: model}, we present our proposed PoSSTenD method,  discuss how to estimate model parameters from data, and describe how to use our proposed PoSSTenD method to detect and localize  hot-spots.
  Our proposed method is compared with several benchmark methods in Section \ref{sec: simulation}, demonstrating its usefulness through extensive simulations. 
  The application of our proposed method in the dataset (described in Section \ref{sec: data description}) is reported in Section \ref{sec: data description}.

\section{Motivating Example \& Tensor Background }
\label{sec: data description}

  In this section, we provide a detailed description of our motivating dataset, and some tensor background to help readers better understand our method in Section \ref{sec: model}.

\subsection{Motivating Example}
Our motivated data are from CDC on the number of annual new cases infected by 10 different diseases of 49 mainland US states from 1993 to 2018.
   The 10 infectious diseases are
   (1) mumps,
   (2) legionellosis,
   (3) tuberculosis,
   (4) syphilis,
   (5) shigellosis,
   (6) pertussis,
   (7) hepatitis A,
   (8) hepatitis B,
   (9) rabies,
   (10) malaria.
 The dataset is publicly available in \url{https://www.cdc.gov/mmwr/mmwr_nd/index.html} (for data from 1993 to 2015) and \url{https://wonder.cdc.gov/nndss/nndss_annual_tables_menu.asp} (for data from 2016-2018).  
 For illustration, we present the a selected subset of this dataset in Table \ref{table: motivating dataset}, where the states are listed in alphabetical order.
   \begin{table}[htbp]
     \centering
     \caption{The head of our motivating dataset, which records number of people get infected by ten different types of diseases for all states in U.S. from 1993 to 2018
     \label{table: motivating dataset}}
     \begin{adjustbox}{max width=0.98\textwidth}
     \centering
     \begin{threeparttable}
     \begin{tabular}{cccccccccccc}
       \hline
       state & year & Mumps & Legionellosis & Tuberculosis & Syphilis & Shigellosis & Pertussis & Hepatitis A & Hepatitis B & Rabies & Malaria \\
       \hline
       Alabama & 1993 & 22 & 2  & 487 & 2333 & 375 & 65 & 58  & 107 & 116 & 7\\
       Arizona & 1993 & 19 & 17 & 231 & 557  & 693 & 70 & 1493& 96  & 60  & 1\\
       $\vdots$& $\vdots$ & $\vdots$ & $\vdots$ & $\vdots$ & $\vdots$  & $\vdots$ & $\vdots$ & $\vdots$ & $\vdots$  & $\vdots$  & $\vdots$\\
       Wyoming & 1993 & 5&    7  &  7   &   9 &  22   &  2   &  17   &34&25 &1\\
       Alabama & 1994 & 12 & 13 & 433 &  1933    &  418   &  35  &  93   &   91  &  128   &9\\
       $\vdots$& $\vdots$ & $\vdots$ & $\vdots$ & $\vdots$ & $\vdots$  & $\vdots$ & $\vdots$ & $\vdots$ & $\vdots$  & $\vdots$  & $\vdots$\\
       Wyoming & 2018 & 1&  2    &  1   & 42   &  3   &  62   &  5   &2&39&2\\
       \hline
     \end{tabular}
    \end{threeparttable}
    \end{adjustbox}
   \end{table}
   

   Mathematically, we can organize the above motivating dataset into a tensor of order three, i.e., $\mathcal Y \in \mathbb R^{n_1 \times n_2 \times n_3}$, where $n_1 = 49$ is the number of states, $n_2 = 10$ is the number of diseases, and $n_3 = 26$  is the number of observed years.
   For this tensor of order three $\mathcal Y$, its $(i,j,t)$-th entry, i.e., $\mathcal Y_{i,j,t}$ is the number of people get infected by the $j$-th type of disease in the $i$-th state $i$ in year $t$.

   To better capture the character of the data $\mathcal Y$, we provide some data visualization plots in Figure \ref{fig: data description of three dimensions}.
   In Figure \ref{fig: data description of three dimensions}(a), we show the bar plot for the total number of affected subjects for these ten different diseases.
   Here the $j$-th bar represents the $j$-th type of disease and the height of $j$-th bar is the value of
   $
     \sum_{i=1}^{n_1} \sum_{t=1}^{n_3} \mathcal Y_{i,j,t}.
   $
   In Figure \ref{fig: data description of three dimensions}(b), we plot the cumulative number of infected subjects of 49 states in a map, where the color of the $i$-th state is decided by the total value
   $
     \sum_{j=1}^{n_2} \sum_{t=1}^{n_3} \mathcal Y_{i,j,t}
   $
   and a larger value leads to a deeper color.
   From this plot, we find that generally speaking, California, Texas, and New York observe a larger number of infected subjects  as compared with other states, possibly due to a large population size.
   In Figure \ref{fig: data description of three dimensions}(c), we display a time series plot, where the x-axis is the year and the y-axis is the logarithm of the value of
   $
     \sum_{i=1}^{n_1} \sum_{j=1}^{n_2} \mathcal Y_{i,j,t}
   $
   for $t = 1,2,\ldots, n_3$.
   We can see that, there is a trough from 2000 to 2010, but  the number of infected people has an increasing trend otherwise.

   \begin{figure}[htbp]
      \centering
      \begin{tabular}{ccc}
      \includegraphics[width=0.38\textwidth]{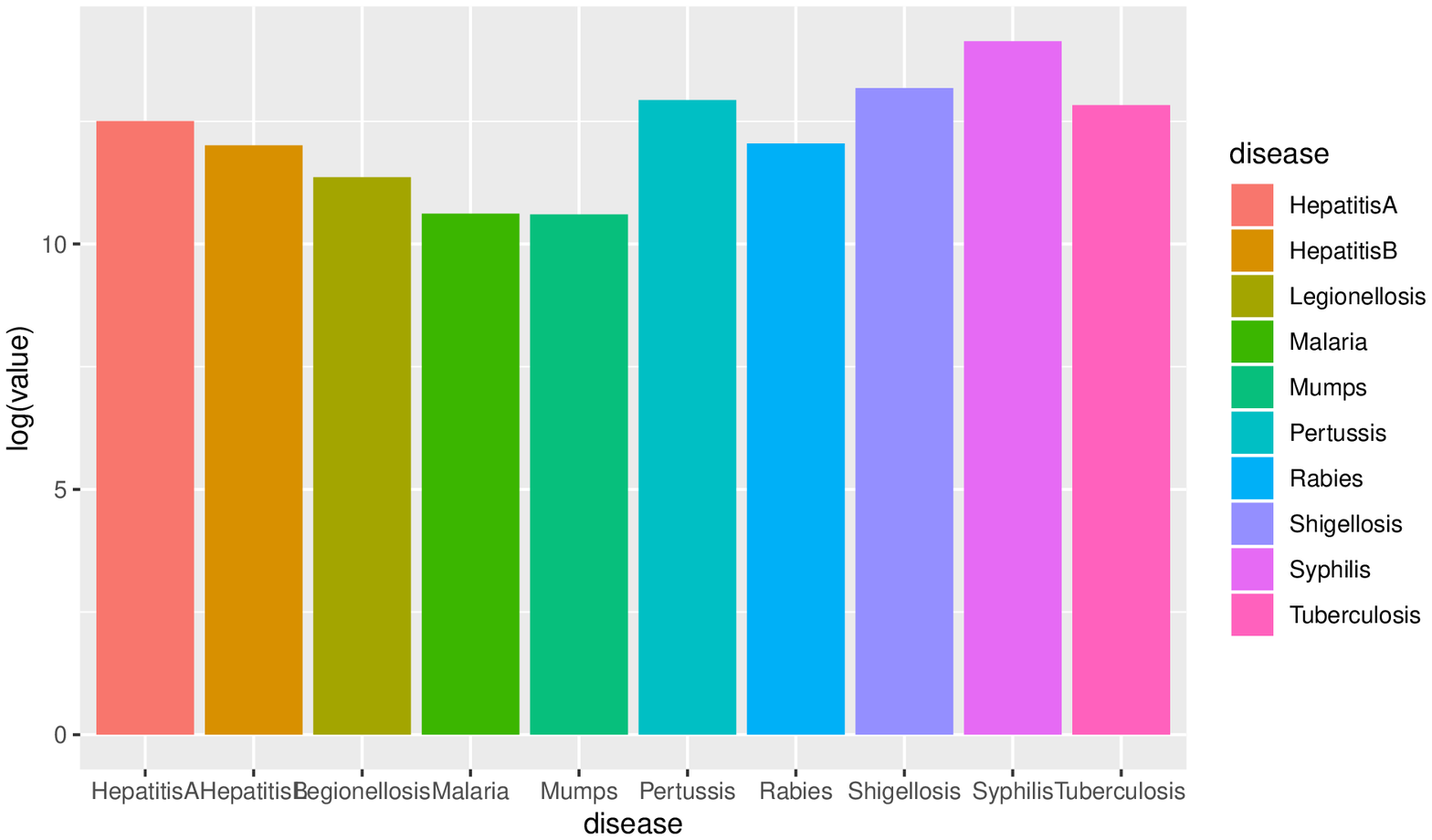}&
      \includegraphics[width=0.3\textwidth]{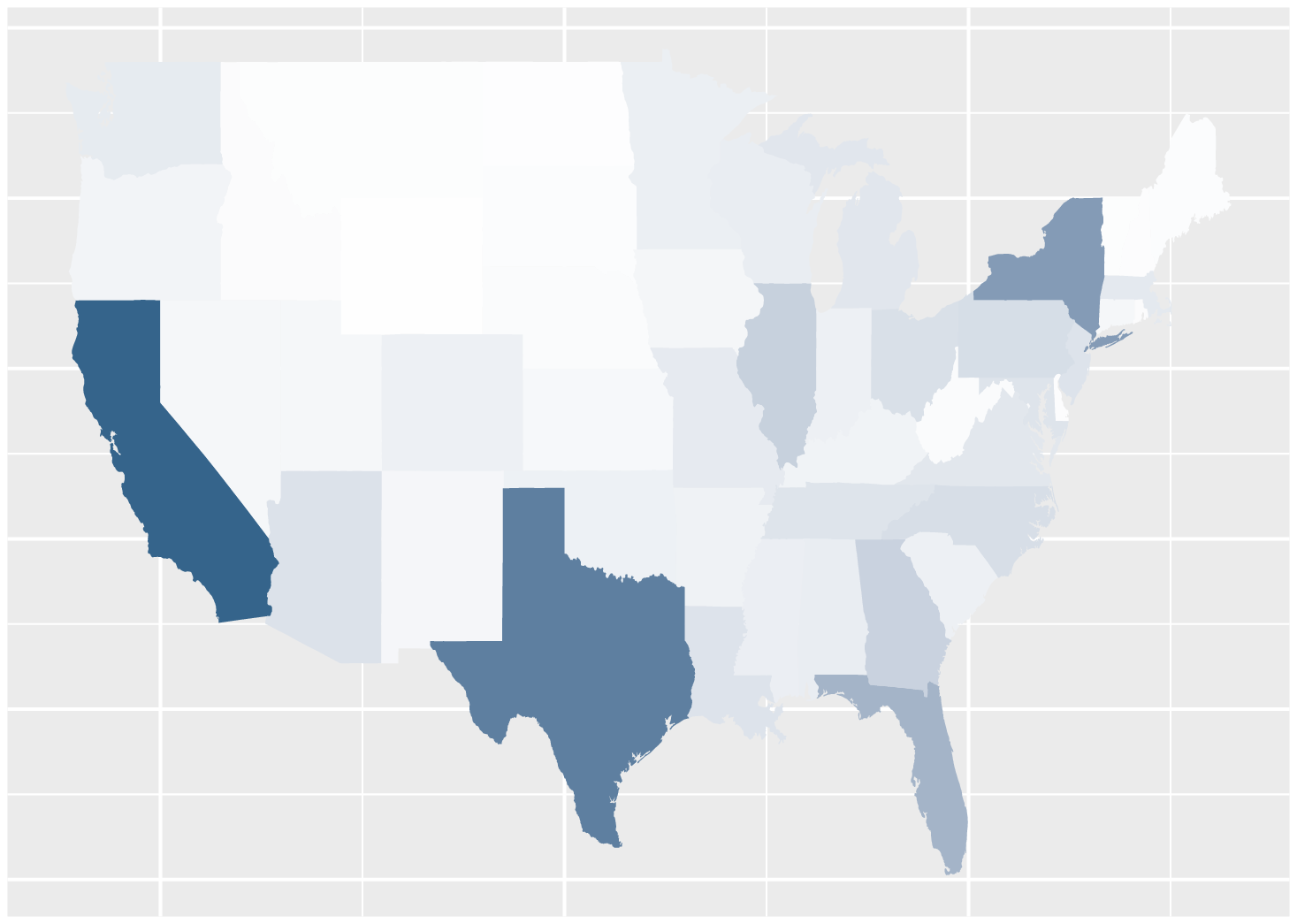}&
      \includegraphics[width=0.3\textwidth]{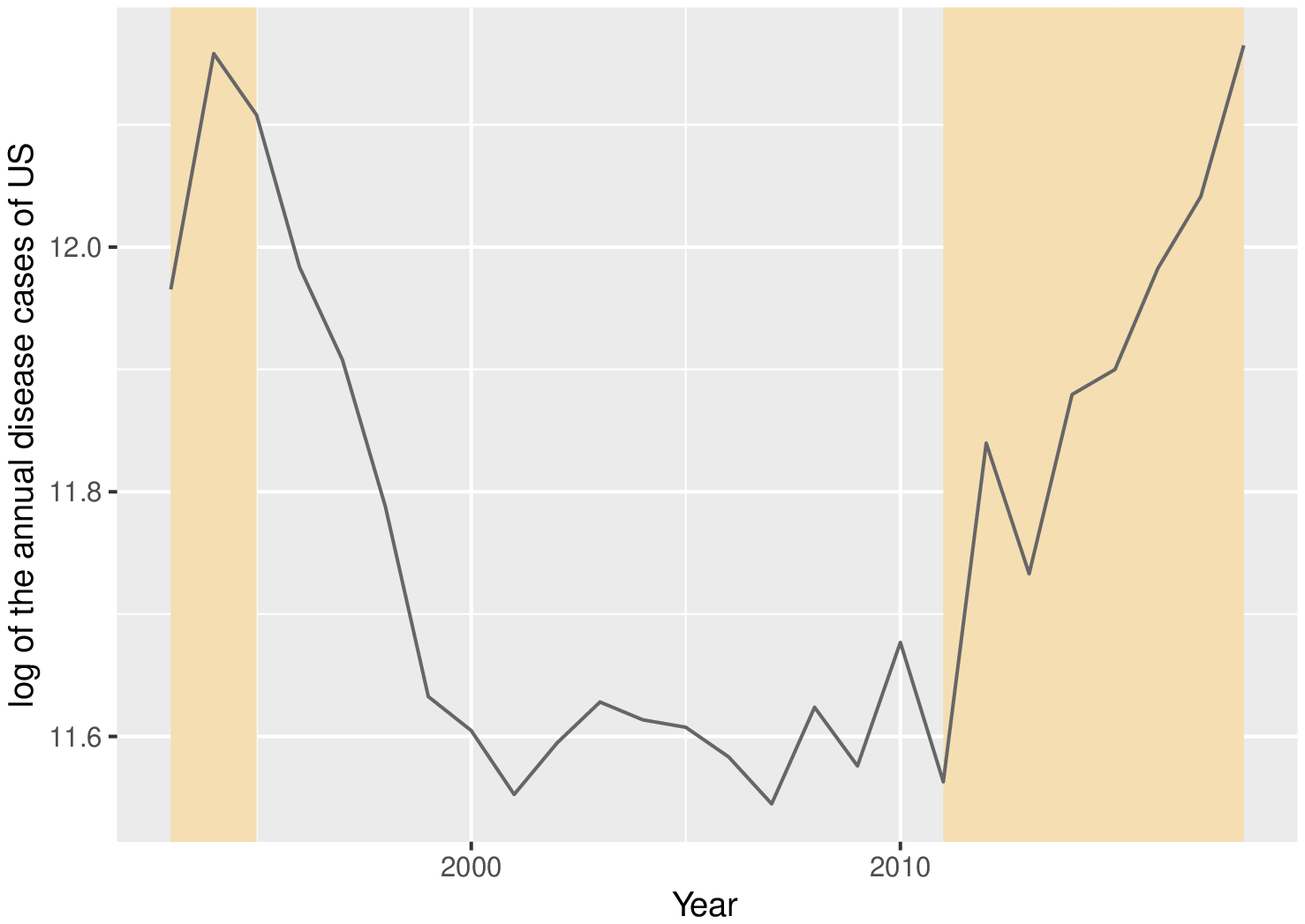}\\
      (a) bar plot of 10 diseases &
      (b) map of 49 states &
      (b) time series of 26 years
   \end{tabular}
   \caption{Data visualization of our motivating dataset
   \label{fig: data description of three dimensions} }
   \end{figure}


\subsection{Tensor Background}
\label{sec: tensor algebra}
  
  In this subsection, we present necessary tensor notations and tensor algebra so as to help readers understand our proposed  methodology in Section \ref{sec: model}.

  First, for the notations throughout the paper,
  scalars are denoted by lowercase letters (e.g., $\theta, y$), and vectors are denoted by lowercase boldface letters ($\mytheta, \myy$), whose $i$-th entry is denoted by the subscript ($\mytheta_i, \myy_i$).
  Matrices are denoted by uppercase boldface letter ($\boldsymbol{\Theta}, \myY$), whose $(i,j)$-th entry is denoted by the subscript ($\myTheta_{i,j}, \myY_{ij}$).
  Tensors are denoted by curlicue letter ($\vartheta, \mathcal Y$).
  For example, an $K$-dimension tensor is represented by
  $
    \vartheta \in \mathbb{R}^{I_{1} \times \ldots \times I_{K}},
  $
  where $I_{k}$ represent the mode-$K$ dimension of $\vartheta$ for $k = 1, \ldots, K$.
  Its $(i_1,i_2, \ldots, i_K)$-th entry is denoted by the subscript ($\vartheta_{i_1,i_2, \ldots, i_K}, \mathcal Y_{i_1,i_2, \ldots, i_K}$).

  Next, we introduce the definition of \textit{slice} in tensor.
  The slice is a two-dimensional section of a tensor by fixing all but two indices.
  Let us take a 3-dimension tensor
  $
    \mathcal Y \in \mathbb R^{n_1 \times n_2 \times n_3}
  $
  as an example.
  Its horizontal, lateral, and frontal slides are denoted by
  $
    \mathcal Y_{i::} (\forall i = 1,\ldots, n_1),
    \mathcal Y_{:j:} (\forall j = 1,\ldots, n_2),
  $
  and
  $
    \mathcal Y_{::t} \; (\forall t = 1,\ldots, n_3),
  $
  respectively.
  A visualization of the aforementioned three types of slides are available in Figure 3 of \cite{zhao2021rapid}.
 
  Finally, we introduce the \textit{Tucker decomposition}.
  It can be regarded as an extension of singular value decomposition (SVD) in the matrix.
  Mathematically, it decomposes a tensor $\mathcal Y \in \mathbb R^{n_1 \times n_2 \times \ldots \times n_d}$ into a product of a ``core tensor'' $\mathcal C$ and several matrices $\myB_1, \ldots, \myB_d$ which correspond to different core scalings along each mode, i.e.,
  $$
    \mathcal Y 
    = 
    \mathcal C 
    \times_1 \myB_1 
    \times_2 \myB_2 
    \ldots 
    \times_d \myB_d,
  $$
  where $\times_n$ is the mode-n product for any $n = 1, \ldots, d$ \cite[see][Section 2]{zhao2021rapid}. And the core tensor $\mathcal C$ is of dimension $\mathcal C \in \mathbb R^{p_1 \times p_2 \times \ldots \times p_d}$.
  The matrix $\myB_i$ is of dimension $\myB_i \in \mathbb R^{n_i \times p_i}$ for any $i = 1, \ldots, d$.
  In most of the applications, we assume the matrices $\myB_1, \ldots, \myB_d$ are known \cite[see][]{yan2018TECH, AnomalyInImage, yan2014image}.
  The high-level criterion to select $\myB_1, \ldots, \myB_d$ are listed as follows.
  If the data $\mathcal Y$ has a smooth pattern on mode-n, then one can use the either spline basis \cite[see][]{yan2018TECH, gahrooei2021multiple} or Gaussian kernel \cite[see][]{yan2018TECH} for $\myB_n$.
  If the data $\mathcal Y$ has no prior information on mode-n, then one can use the identity matrix for $\myB_n$ \cite[see][]{zhao2021rapid}. 
  And we call $\min\{p_1, p_2, \ldots, p_d\}$ as the rank of the Tucker decomposition \cite[see][]{TensorAlgebra}.

\section{Our Proposed Method }
\label{sec: model}
    
In this section, we present our proposed PoSSTenD method, including the optimization algorithm for the model parameter estimation, and statistical procedures  for hot-spots detection and localization.

    The structure of this section is organized as follows.
    Subsection \ref{sec: model details} develops our proposed method.
    Subsection \ref{sec: Estimation and Optimization Algorithm} describes the parameters estimation and the algorithm to solve the estimation problems.
    Subsection \ref{sec: hot-spots detection and localization} discusses the procedure to detect and localize the hot-spots.

\subsection{Model}
\label{sec: model details}
    In this subsection, we present the mathematical model of our proposed PoSSTenD method.
    For better illustration, we take the motivating dataset described in Section \ref{sec: data description} as an example, which is a tensor of order three, i.e., $\mathcal Y \in \mathbb R^{n_1 \times n_2 \times n_3}$.
    The $(i,j,t)$-th entry of $\mathcal Y$ is the number of people who are infected by the $j$-th type of disease in the $i$-th state in year $t$.
    Since the number of infected patients is count data, we assume that it follows the Poisson distribution:
    \begin{equation}
    \label{equ: poisson model}
      \mathcal{Y}_{i,j,t} \sim \text{Poisson}( \mathcal R_{i,j,t} \mathcal{N}_{i,j,t})
    \end{equation}
    where $\mathcal R_{i,j,t}$ is the infectious rate of the $j$-th type of disease in the $i$-th state in year $t$, and $\mathcal N_{i,j,t}$ denotes the corresponding population size, for $i =1\ldots n_1, j = 1\ldots n_2, t = 1\ldots n_3$. In our simulation studies and case study, we assume that the population size $\mathcal N_{i,j,t}$ is the same over different disease type $j$, although they can be different under other scenarios, e.g., when certain diseases will affect certain sub-populations depending on gender, age, race, etc.  

    In our paper, we are interested in detecting and localizing the hot-spots in infectious rate $\mathcal R_{i,j,t}$, and thus we further decomposed the logarithm of the infectious rates into two components: the smooth global trend mean $\mathcal{U}_{i,j,t}$ and the local hot-spots $\mathcal{H}_{i,j,t}$.
    Mathematically, it is an additive model:
    \begin{eqnarray}
    \label{equ: decompose infectious rate into 3 components}
        \log( \mathcal R_{i,j,t} )
        =
        \mathcal{U}_{i,j,t} + \mathcal{H}_{i,j,t},
    \end{eqnarray}
    where
    $
      \mathcal U_{i,j,t}$ and $\mathcal H_{i,j,t}
    $
    are the $(i,j,t)$-th entry of tensors
    $
      \mathcal U$ and $\mathcal H \in \mathbb R^{n_1 \times n_2 \times n_3},
    $
    respectively.

    For the first component -- the global trend mean $\mathcal U \in \mathbb R^{n_1 \times n_2 \times n_3}$ --  we apply Tucker decomposition for dimension reduction of tensors, i.e.,
    \begin{equation}
    \label{equ: decompose global mean}
      \mathcal U
      =
      \vartheta_m \times_1 \myB_{m,1} \times_2 \myB_{m,2} \times_3 \myB_{m,3},
    \end{equation}
    where
    $
      \myB_{m,1} \in \mathbb R^{n_1 \times p_1},
      \myB_{m,2} \in \mathbb R^{n_2 \times p_2},
      \myB_{m,3} \in \mathbb R^{n_3 \times p_3}
    $
    are the pre-assigned basis describing the within-state correlation, within-disease-type correlation, and within-year correlation in $\mathcal U$, respectively.
    For the selection of $\myB_{m,1}, \myB_{m,2}, \myB_{m,3}$, we postpone its discussion at the end of this subsection.
    The operators $\times_1, \times_2, \times_3$ are the mode-$n$ product reviewed in Section \ref{sec: tensor algebra} with $n = 1,2,3$.
    These three mode-$n$ products are used to model the between-dimension correlations in $\mathcal U$.
    The tensor $\vartheta_m \in \mathbb R^{p_1 \times p_2 \times p_3}$ is an unknown tensor parameter to be estimated.

    For the second component -- local hot-spots $\mathcal H \in \mathbb R^{n_1 \times n_2 \times n_3}$ -- we apply the similar Tucker decomposition as in $\mathcal U$:
    \begin{equation}
    \label{equ: decompose local hotspots}
      \mathcal H
      =
      \vartheta_h \times_1 \myB_{h,1} \times_2 \myB_{h,2} \times_3 \myB_{h,3},
    \end{equation}
    where the $\vartheta_h \in \mathbb R^{q_1 \times q_2 \times q_3}$ is the unknown parameter to be estimated and basis
    $
      \myB_{h,1} \in \mathbb R^{n_1 \times q_1},
      \myB_{h,2} \in \mathbb R^{n_2 \times q_2},
      \myB_{h,3} \in \mathbb R^{n_3 \times q_3}
    $
    are pre-assigned matrices describing the within-state correlation, within-disease-type correlation, and within-year correlation in $\mathcal H$, respectively. The suitable choices of these bases will be discussed in more details in Section \ref{sec: simulation}.
    The operators $\times_1, \times_2, \times_3$ are the mode-$n$ product used to model the between-dimension correlations in $\mathcal H$.

    By combining \eqref{equ: poisson model}, \eqref{equ: decompose infectious rate into 3 components}, \eqref{equ: decompose global mean}, \eqref{equ: decompose local hotspots}, we summarize our method as
    \begin{eqnarray*}
    \label{equ: model2}
          \left\{
          \begin{array}{l}
              \mathcal{Y}_{i,j,t}
              \sim
              \mbox{Poisson}( \mathcal{R}_{i,j,t} \mathcal{N}_{i,j,t} ) \\
              \log( \mathcal{R}_{i,j,t} )
              =
              (\vartheta_m \times_1 \myB_{m,1} \times_2 \myB_{m,2} \times_3 \myB_{m,3} )_{i,j,t}+
              (\vartheta_h \times_1 \myB_{h,1} \times_2 \myB_{h,2} \times_3 \myB_{h,3})_{i,j,t}
          \end{array}
          \right.,
    \end{eqnarray*}
    where
    $
      (\vartheta_m \times_1 \myB_{m,1} \times_2 \myB_{m,2} \times_3 \myB_{m,3} )_{i,j,t},
      (\vartheta_h \times_1 \myB_{h,1} \times_2 \myB_{h,2} \times_3 \myB_{h,3})_{i,j,t}
    $
    denote the $(i,j,t)$-th entry in tensor
    $
      \vartheta_m \times_1 \myB_{m,1} \times_2 \myB_{m,2} \times_3 \myB_{m,3} ,
      \vartheta_h \times_1 \myB_{h,1} \times_2 \myB_{h,2} \times_3 \myB_{h,3}
      \in
      \mathbb R^{n_1 \times n_2 \times n_3},
    $
    respectively.

    This tensor representation allows us to develop computationally efficient methods for estimation and prediction.
    Under tensor algebra, the above model can be rewritten as
    \begin{equation}
    \label{equ: model -- vector form}
      \left\{
      \begin{array}{l}
        \myy
        \dot \sim
        \mbox{Poisson}( \myr \odot \myn ) \\
        \log(\myr)
        =
        \left( \myB_{m,1} \otimes \myB_{m,2} \otimes \myB_{m,3} \right)
        \mytheta_m +
        \left( \myB_{h,1} \otimes \myB_{h,2}  \otimes \myB_{h,3} \right)
        \mytheta_h
      \end{array}
      \right..
    \end{equation}
    Here vectors
    $
      \myy, \myr, \myn \in \mathbb R^{n_1 n_2 n_3},
      \mytheta_m \in \mathbb R^{p_1 p_2 p_3},
      \mytheta_h \in \mathbb R^{q_1 q_2 q_3}
    $
    are the vectorized form of tensor
    $
      \mathcal Y, \mathcal R, \mathcal N \in \mathbb R^{n_1 \times n_2 \times n_3},
      \vartheta_m \in \mathbb R^{p_1 \times p_2 \times p_3},
      \vartheta_h \in \mathbb R^{q_1 \times q_2 \times q_3},
    $
    respectively.
    The operator $\odot$ is the Hadamard product, i.e., the output of $\myr \odot \myn$ is a vector of length $n_1 n_2 n_3$, whose $i$-th entry is $r_i \times n_i$ with $r_i, n_i$ as the $i$-th elements of vector $\myr, \myn$, respectively.
    The operator $\otimes$ is the Kronecker product whose mathematical definition can be found in \cite{zhao2021rapid, zhao2021new}.
    Besides, the operator $\dot \sim$ means that the $i$-th entry of $\myy$ follows the Poisson distribution with parameter $r_i n_i$ for any $i = 1, \ldots, n_1 n_2 n_3$.

    In the reminder of this subsection, we articulate the selection of basis for global trend mean $\myB_{m,1}, \myB_{m,2}, \myB_{m,3}$, and basis for local hot-spots $\myB_{h,1}, \myB_{h,2}, \myB_{h,3}$, particularly in our case study.
    
    For the selection of $\myB_{m,1}, \myB_{m,2}, \myB_{m,3}$, given $\mathcal Y$'s smooth pattern, we use the cubic B-spline basis with 8, 7, 7 knots in the x-direction (spatial domain), y-direction (disease-categorical domain) and z-direction (temporal domain).
    This selection strategy is also adopted in Section 6 of \cite{yan2018TECH}, and the above three sets of knots are reasonable in our case study.
    In general, a smaller number of knots may result in the loss of detection accuracy, and a larger number of knots will lead to the selection of normal regions with large noises.
    As pointed out by \cite{ruppert2002selecting}, as long as the number of knots is sufficiently large to capture the variation of the background, a further increase in knots will have little effect due to the regulation of smoothness.
    After verifying the above three sets of knots by generalized cross validation (GCV), as suggested by \cite{ruppert2002selecting}, we find they are sufficient large to capture the variation of the global trend mean.
    Other choice of $\myB_{m,1}, \myB_{m,2}, \myB_{m,3}$ is welcomed, for example, the Gaussian kernel basis \cite[see][]{zhao2021rapid, yan2018TECH}.
    In this paper, we will use the B-spline basis for illustration of our proposed PoSSTenD method. 
    Given the above selection of $\myB_{m,1}, \myB_{m,2}, \myB_{m,3}$, the rank of the Tucker decomposition of  $\mathcal U$ is $\min\{p_1, p_2, p_3\} = 4$. 

    For the selection of $\myB_{h,1}, \myB_{h,2}, \myB_{h,3}$, we set them as the identity matrices, due to the following two reasons. 
    First, there is no prior information about the hot-spots distribution, which makes identity matrices a reasonable choice.
    Second, we aim at detecting sparse hot-spots.
    If one in interested in detecting clustered hot-spots, then a spline basis is a better choice \cite[see][Section 3.4]{AnomalyInImage}. 
    Given the above selection of $\myB_{h,1}, \myB_{h,2}, \myB_{h,3}$, the rank of the Tucker decomposition of $\mathcal H$ is $\min\{n_1, n_2, n_3\} = 10$ in the our case.
    This rank is not reduced much, because we do not want to lose much information considering the hot-spots detection accuracy. 
    
\subsection{Estimation and Optimization}
\label{sec: Estimation and Optimization Algorithm}

    This subsection discusses the parameter estimation of our proposed PoSSTenD method.
    We will first formulate the parameter estimation as an optimization problem in Subsection \ref{sec: estimation}, and then develop an efficient algorithm to solve this optimization problem in Subsection \ref{sec: Optimization Algorithm}.

\subsubsection{Estimation of the Model Parameters}
\label{sec: estimation}

    To estimate the two unknown vector $\mytheta_m \in \mathbb R^{p_1 p_2 p_3}, \mytheta_h \in \mathbb R^{q_1 q_2 q_3}$, it is natural to first consider  the maximal likelihood estimating (MLE) method.
    We begin with the likelihood function
    \begin{equation*}
        L(\mytheta_m, \mytheta_h)
        =
        \prod_{i=1}^{n_1 n_2 n_3}
        \left(n_i r_i \right)^{y_i} e^{ -r_i n_i } / y_i!,
    \end{equation*}
    where $n_i, r_i, y_i$ are the $i$-th elements of vector $\myn, \myr, \myy \in \mathbb R^{n_1 n_2 n_3}$, respectively.
    Accordingly, the log-likelihood function of $(\mytheta_m, \mytheta_h)$, i.e., $\ell(\mytheta_m, \mytheta_h) = \log L(\mytheta_m, \mytheta_h)$ can derived as
    \begin{eqnarray*}
        \ell(\mytheta_m, \mytheta_h)
        & = &
        \sum_{i=1}^{n_1 n_2 n_3}
        y_i \log\left(n_i r_i \right)  - n_i r_i - \log(y_i!)
        \propto
        \sum_{i=1}^{n_1 n_2 n_3}  y_i \log \left( r_i \right)  - n_i r_i ,
    \end{eqnarray*}
    where $\propto$ is done by removing the constant unrelated to $(\mytheta_m, \mytheta_h)$.
    If we denote the
    $
      \ell^*(\mytheta_m, \mytheta_h)
      :=
      \sum_{i=1}^{n_1 n_2 n_3}  y_i \log \left( r_i \right)  - n_i r_i
    $
    then by plugging \eqref{equ: model -- vector form} in, we have
    $$
      \ell^*(\mytheta_m, \mytheta_h)
      =
      \sum_{i=1}^{n_1 n_2 n_3}
      y_i
      \left( \myB_{m} \mytheta_m +  \myB_{h} \mytheta_h \right)_i
      -
      n_i e^{\left( \myB_{m} \mytheta_m +  \myB_{h} \mytheta_h \right)_i }.
    $$
    By denoting the $i$-th row of matrix $\myB_m, \myB_h$ as $\myx_i^\top, \myz_i^\top$, respectively, we can rewrite  the above equation as
    $$
      \ell^*(\mytheta_m, \mytheta_h)
      =
      \sum_{i=1}^{n}
      y_i
      \left( \myx_i^\top \mytheta_m +  \myz_i^\top \mytheta_h \right)
      -
      n_i e^{ \left( \myx_i^\top \mytheta_m +  z_i^\top \mytheta_h \right) },
    $$
    where $n = n_1 n_2 n_3$.
    Given the above log-likelihood function, on the one hand, we would like to minimize it with respect to $\mytheta_m, \mytheta_h$.
    On the other hand, we would like to detect the hot-spots with sparsity.
    Considering the trade-off between the maximization of a log-likelihood function $\ell^*(\mytheta_m, \mytheta_h)$ and the detection of sparse hot-spots $\mytheta_h$, we propose to  estimate the model parameter $(\mytheta_m, \mytheta_h)$ under the penalized likelihood ration framework, which yields the following Lasso-type optimization problem:
    \begin{eqnarray}
    \label{equ: objective function1}
      F(\mytheta_m, \mytheta_h)
      & = &
      \sum_{i=1}^{n_1 n_2 n_3}
      \left[
        - y_i
        \left( \myx_i^\top \mytheta_m +  \myz_i^\top \mytheta_h \right)
        +
        n_i e^{\left( \myx_i^\top \mytheta_m +  \myz_i^\top \mytheta_h \right)  }
      \right]
      +
      \lambda \left\| \mytheta_h \right\|_1.
    \end{eqnarray}
    Here the first term is the negative log-likelihood function and the second term is a $\ell_1$ regularization term, which gives sparsity of the hot-spots estimators.
    The parameter $\lambda>0$ trades off the minimization of the negative log-likelihood function and the sparsity of the estimators, and is often chosen by cross-validation of the training data.


\subsubsection{Optimization Algorithm}
\label{sec: Optimization Algorithm}

    In this subsection, we discuss the optimization algorithm to minimize the objective function $F(\mytheta_m, \mytheta_h)$ in \eqref{equ: objective function1}.
    The main tools we use to minimize $F(\mytheta_m, \mytheta_h)$ is the Iteratively Reweighted Least Square (IRLS) method \cite[see][Chapter 4.4]{friedman2001elements} and the Fast Iterative Shrinkage Threshold algorithm (FISTA) \cite[see][]{FISTA}.
    
    The proposed  optimization algorithm is an iterative algorithm with two loops: the outer-loop for updating $\mytheta_m$ and the inner-loop for updating $\mytheta_h$.
    The optimization procedure is visualized in in Figure \ref{fig: min F tilde} and described as follows. 
    We begin with the initial point $(\mytheta_m^{(0)}, \mytheta_h^{(0)})$.
    Then, in the first outer-loop, we minimize $F(\mytheta_m, \mytheta_h^{(0)})$ with respect to $\mytheta_m$ by the IRLS algorithm.
    In this way, we update $\mytheta_m$ from $\mytheta_m^{(0)}$ to $\mytheta_m^{(1)}$.
    Next, we minimize $F(\mytheta_m^{(1)}, \mytheta_h)$ with respect to $\mytheta_h$ by the FISTA algorithm, which allows us to update $\mytheta_h$ from $\mytheta_h^{(0)}$ to $\mytheta_h^{(1)}$.
    The iteration of the FISTA algorithm is called the inner-loop under the first outer-iteration.
    The above outer-loop and inter-loop will be repeated until the convergence.
    The detailed pseudo-code is summarized in Algorithm \ref{alg: pseudo code}.

    \begin{figure}[htbp]
      \centering
      \includegraphics[width=0.9\textwidth]{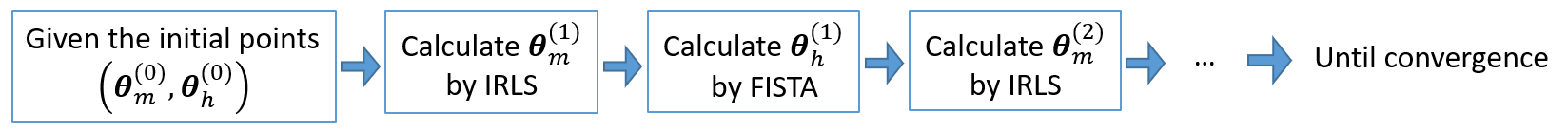}
      \caption{The pipeline of our algorithm to minimize $F(\mytheta_m, \mytheta_h)$ with respect to $\mytheta_m, \mytheta_h$.
      \label{fig: min F tilde}}
    \end{figure}

    \begin{algorithm}[htbp]
	\caption{Pseudo-code of the optimization algorithm in our proposed PoSSTenD method}
    \label{alg: pseudo code}
        \LinesNumbered
        \KwIn{
        $
          \myy, \myn \in \mathbb R^{n_1 n_2 n_3},
          \myB_m \in \mathbb R^{n_1 n_2 n_3 \times p_1 p_2 p_3},
          \myB_h \in \mathbb R^{n_1 n_2 n_3 \times q_1 q_2 q_3},
          \lambda > 0.
        $}
        \KwOut{The estimator of $\mytheta_m, \mytheta_h$, noted as $\widehat\mytheta_m, \widehat\mytheta_h$.}
        \bfseries{Initialization}
        $\mytheta_m^{(0)}, \mytheta_h^{(0)}$. \\
        \OuterIteration{
          \textnormal{ update $\mytheta_m$ from $\mytheta_m^{(k-1)}$ to $\mytheta_m^{(k)}$ by the IRLS algorithm}\\
          \InnerIteration{
          \textnormal{ update $\mytheta_h$ from $\mytheta_h^{(k-1)}$ to $\mytheta_h^{(k)}$ by the FISTA algorithm }\\
          }
        }
    \end{algorithm}

    In the reminder of this section, we discuss the detailed implementation of the IRLS algorithm and FISTA algorithm, as shown in the aforementioned Algorithm \ref{alg: pseudo code}

    First,  the implementation of the IRLS algorithm to estimate the parameter of the global trend mean $\mytheta_m$ is summarized in Proposition \ref{prop: alg IRLS}.
    
    \begin{prop}
    \label{prop: alg IRLS}
        In Algorithm \ref{alg: pseudo code}, given $(\mytheta_m^{(k-1)}, \mytheta_h^{(k-1)})$, one can update 
        $\mytheta_m^{(k)}$ by IRLS algorithm, i.e.,
        \begin{eqnarray*}
            \mytheta_m^{(k)}
            =
            \left( \myX^\top \myW \myX \right)^{-1}
            \myX^\top \myW
            \underbrace{
            \left[
            \myX \mytheta_m^{(k-1)} + \myW^{-1} \left( \myy - \mygamma^{(k-1)} \right)
            \right]
            }_{\myeta^{(k-1)}}.
        \end{eqnarray*}
        Here the matrix $\myX^\top$ is of dimension $p_1 p_2 p_3 \times n_1 n_2 n_3$.
        And its $i$-th column, denoted as $\myx_i$, is the transpose of $i$-th row of matrix $\myB_m$.
        The matrix
        $
          \myW \in \mathbb R^{n_1 n_2 n_3 \times n_1 n_2 n_3}
        $
        is a diagonal matrix.
        And its $(i,i)$-th entry is
        $
          n_i e^{\left( \myx_i^\top \mytheta_m^{(k-1)} +  \myz_i^\top \mytheta_h^{(k-1)} \right)}.
        $
        The vector $\myy$ is of length $n_1 n_2 n_3$, whose $i$-th entry is $y_i$. 
        The vector $\mygamma^{(k-1)}$ is of length $n_1 n_2 n_3$.
        And its $i$-th entry is 
        $
          n_i e^{\left( \myx_i^\top \mytheta_m^{(k-1)} 
          +  
          \myz_i^\top \mytheta_h^{(k-1)} \right)},
        $
        where $\myz_i^\top$ is the $i$-th row of the matrix $\myB_h$.
    \end{prop}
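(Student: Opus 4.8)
The plan is to observe that, with $\mytheta_h$ frozen at $\mytheta_h^{(k-1)}$, the $\ell_1$ penalty $\lambda\|\mytheta_h\|_1$ in $F$ is constant, so the outer step reduces to minimizing the smooth Poisson negative log-likelihood in $\mytheta_m$ alone; the stated formula is precisely one Newton--Raphson step on this objective, rewritten in reweighted-least-squares form. First I would fix $\mytheta_h=\mytheta_h^{(k-1)}$, write the linear predictor $\eta_i=\myx_i^\top\mytheta_m+\myz_i^\top\mytheta_h^{(k-1)}$, and isolate the $\mytheta_m$-dependent part of $F$,
\[
  g(\mytheta_m)=\sum_{i=1}^{n_1 n_2 n_3}\bigl[-y_i\,\eta_i+n_i e^{\eta_i}\bigr].
\]
Each summand is convex in $\eta_i$ and $\eta_i$ is affine in $\mytheta_m$, so $g$ is convex and a Newton step is well defined whenever its Hessian is nonsingular.

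Next I would compute the gradient and Hessian with respect to $\mytheta_m$. Differentiating term by term gives
\[
  \nabla g=\sum_{i=1}^{n_1 n_2 n_3}\bigl(n_i e^{\eta_i}-y_i\bigr)\myx_i=\myX^\top\bigl(\mygamma^{(k-1)}-\myy\bigr),
  \qquad
  \nabla^2 g=\sum_{i=1}^{n_1 n_2 n_3}n_i e^{\eta_i}\,\myx_i\myx_i^\top=\myX^\top\myW\myX,
\]
where the $i$-th entry of $\mygamma^{(k-1)}$ is $n_i e^{\eta_i}$ and $\myW$ is diagonal with $\myW_{ii}=n_i e^{\eta_i}$, matching exactly the objects defined in the statement. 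Because the population sizes $n_i$ and the exponentials are strictly positive, $\myW$ is positive definite, so (under the mild assumption that $\myX=\myB_m$ has full column rank) $\myX^\top\myW\myX$ is invertible and the Newton update
\[
  \mytheta_m^{(k)}=\mytheta_m^{(k-1)}-\bigl(\myX^\top\myW\myX\bigr)^{-1}\nabla g
  =\mytheta_m^{(k-1)}+\bigl(\myX^\top\myW\myX\bigr)^{-1}\myX^\top\bigl(\myy-\mygamma^{(k-1)}\bigr)
\]
is justified.

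Finally I would recast this update into the advertised IRLS form by introducing the working response $\myeta^{(k-1)}=\myX\mytheta_m^{(k-1)}+\myW^{-1}(\myy-\mygamma^{(k-1)})$ and expanding
\[
  \bigl(\myX^\top\myW\myX\bigr)^{-1}\myX^\top\myW\,\myeta^{(k-1)}
  =\bigl(\myX^\top\myW\myX\bigr)^{-1}\myX^\top\myW\myX\,\mytheta_m^{(k-1)}
   +\bigl(\myX^\top\myW\myX\bigr)^{-1}\myX^\top\bigl(\myy-\mygamma^{(k-1)}\bigr),
\]
where the cancellations $\myW\myW^{-1}=\mathbf I$ and $(\myX^\top\myW\myX)^{-1}(\myX^\top\myW\myX)=\mathbf I$ collapse the first term to $\mytheta_m^{(k-1)}$. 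This reproduces the Newton update of the previous step, so the two expressions coincide and the claimed formula holds.

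I do not anticipate a genuine obstacle: this is the standard GLM--IRLS identity specialized to a Poisson log-link with offset $n_i$. The only points deserving care are bookkeeping ones, namely correctly identifying $\myX=\myB_m$ so that $\myx_i^\top$ is its $i$-th row, carrying the offset $n_i$ consistently into both $\mygamma^{(k-1)}$ and the weights $\myW$, and recording the invertibility conditions on $\myW$ and on $\myX^\top\myW\myX$ that make the single Newton step, and hence the closed-form update, legitimate.
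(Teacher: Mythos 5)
Your proposal is correct and takes essentially the same route as the paper's own proof in Appendix A: a single Newton--Raphson step on the $\mytheta_m$-part of $F$ with $\mytheta_h$ frozen, identifying the gradient as $\myX^\top(\mygamma^{(k-1)}-\myy)$ and the Hessian as $\myX^\top\myW\myX$, and then algebraically recasting the update via the working response $\myeta^{(k-1)}$. Your explicit remarks on convexity and on the invertibility of $\myX^\top\myW\myX$ (positive weights plus full column rank of $\myB_m$) are conditions the paper leaves implicit, but they do not change the argument.
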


    \begin{proof}
      The proof of the above proposition is shown in Appendix \ref{proof: IRLS}.
    \end{proof}

    Next, we discuss the implementation of the FISTA algorithm to estimate the parameter of the local hot-spots $\mytheta_h$.
    Suppose in the $k$-th iteration, we already get the iterative estimator $\mytheta_m^{(k)}$ by Proposition \ref{prop: alg IRLS}.
    Given that, we can updating $\mytheta_h$ from $\mytheta_h^{(k-1)}$ to $\mytheta_h^{(k)}$ by the minimizing the following optimization problem with respect to $\mytheta_h$:
    \begin{equation}
    \label{equ: F fix theta m}
        F(\mytheta_m^{(k)}, \mytheta_h) =
        \underbrace{
        \sum_{i=1}^{n}
        \left[
          - y_i
          \left( \myx_i^\top \mytheta_m^{(k)} +  \myz_i^\top \mytheta_h \right)
          +
          n_i
          e^{\left( \myx_i^\top \mytheta_m^{(k)} +  \myz_i^\top \mytheta_h \right)  }
        \right]
        }_{g (\mytheta_h)}
        +
        \underbrace{
        \lambda \left\| \mytheta_h \right\|_1
        }_{h(\mytheta_h)}.
    \end{equation}
    Essentially speaking, the above optimization problem is an optimization problem with a $\ell_1$ regularization term.
    There is a large body of literature available on the analysis of it, where both researchers in optimization research and statistics devote to it.

    In optimization research, one well-known method is the Iterative Shrinkage Threshold Algorithm (ISTA) proposed by \cite{ISTA}, where they approximate $g(\mytheta_h)$ by its second-order Taylor expansion.
    For the Hessian matrix, to speed up the algorithm, they approximate it by a diagonal matrix, whose diagonal entry is the maximal eigenvalue of the Hessian matrix.
    After the quadratic approximation of the objective function, a proximal mapping is used, where the soft-thresholding operator can be easily applied.
    \cite{ISTA} is a representative of this type of method, and the early foundational work of proximal gradient descent can be found in \cite{ISTAhistoryBregman1966}.
    As the techniques mature, they became widely used in several different fields.
    As a result, they have been referred to by a diverse set of names, including proximal algorithm, proximal point, and so on.
    In the survey of \cite{ISTAsurvy}, it can be seen that, many other widely-known methods -- including, Majorization-Minimization (MM) 
    and Alternating Direction Method of Multipliers (ADMM) 
    -- also fall into the proximal framework.
    Following \cite{ISTA}, \cite{FISTA} proposes another algorithm called FISTA, which can be regarded as an accelerated version of \cite{ISTA}.
    The main difference between ISTA and FISTA is that FISTA uses more historical information, i.e., FISTA takes advantage of the gradient of the previous two estimators, while ISTA only uses the gradient of the previous one estimator.
    Accordingly, the convergence rate of FISTA is $O(1/k^2)$ iterations, which is faster than $O(1/k)$ achieved by ISTA \cite[see][]{zhao2020homotopic}.

    In statistics, researchers also do lots of works.
    One of the well-known methods is proposed by \cite{glmnet} with an R package called \textit{glmnet}.
    The main tool used is the coordinate descent algorithm.
    The coordinate descent method has been proposed several times for the optimization with $\ell_1$ regularization term, e.g. see discussions in \cite{glmnet} and its convergence rate is $O(1/k)$ \cite[see][]{zhao2020homotopic}.

    After comparing the existing algorithm to solve \eqref{equ: F fix theta m}, we select FISTA in our paper due to its faster convergence.
    The detailed implementation of the FISTA algorithm can be found in the following proposition.

    \begin{prop}
    \label{prop: alg FISTA}
        In Algorithm \ref{alg: pseudo code}, given $(\mytheta_m^{(k)}, \mytheta_h^{(k-1)})$, one can update $\mytheta_h$ from $\mytheta_h^{(k-1)}$ to $\mytheta_h^{(k)}$ by the FISTA algorithm.
        The updating scheme of the FISTA algorithm is summarized as follows. 
        For a fixed $k>0$, and any $s = 0, 1, \ldots$, one has
        \begin{eqnarray*}
          \mytheta^{(k)[s]}_h
          & = &
          S
          \left(
            \myalpha^{[s]}
            -
            \frac{\partial}{\partial \mytheta_m} g(\mytheta_m^{(k)}, \myalpha^{[s]})
            / L,
            \lambda/L
          \right) \\
          t_{s+1}
          & = &
          \left(1+\sqrt{1 +4 t_s^2} \right) / 2 \\
          \myalpha^{[s+1]}
          & = &
          \mytheta_h^{(k)[s]} + \frac{t_s - 1}{t_{s+1}}
          \left( \mytheta_h^{(k)[s]} - \mytheta_h^{(k)[s-1]} \right),
        \end{eqnarray*}
        where $\mytheta_h^{(k)[s]}$ is the iterative estimator of $\mytheta_h$ after $s$ inner loops under the $k$-th outer loops.
        The function $S(\cdot, \cdot)$ is a thresholding function, i.e., $S(x, a) = ( x - a) \mathbbm 1\{ x \geq a \} + (x+a) \mathbbm 1 \{x \leq - a \}$ for any $x \in \mathbb R$ and $a > 0$.
        The vector $\myalpha^{[s]}$ is an auxiliary vector and is initialized as $\mytheta_h^{(k-1)}$.
        The hyper parameter $t_s$ and $L$ can be initialized as $t_1 = 1$ and set as the maximal eigenvalue of the matrix $\frac{\partial^2}{\partial \mytheta_h \partial \mytheta_h^\top} g(\mytheta_h)$, respectively.
    \end{prop}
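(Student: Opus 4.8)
The plan is to recognize the inner subproblem \eqref{equ: F fix theta m} as an instance of the generic composite minimization $\min_{\mytheta_h}\, g(\mytheta_h) + h(\mytheta_h)$ for which FISTA is designed in \cite{FISTA}, where (with $\mytheta_m$ frozen at $\mytheta_m^{(k)}$) the smooth part $g(\mytheta_h)$ is the negative Poisson log-likelihood and $h(\mytheta_h) = \lambda\|\mytheta_h\|_1$ is the nonsmooth convex regularizer. Once this identification is made, the three update lines in the proposition follow by substituting two problem-specific ingredients---the gradient $\nabla g$ and the proximal map of $h$---into the standard FISTA template, so the bulk of the verification reduces to computing these two objects and a valid step size.

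First I would compute the gradient of $g$ with respect to its $\mytheta_h$-argument. Since each summand depends on $\mytheta_h$ only through the scalar $\myz_i^\top \mytheta_h$, differentiating term by term gives
\[
  \frac{\partial}{\partial \mytheta_h} g(\mytheta_h)
  =
  \sum_{i=1}^{n}
  \left(
    n_i\, e^{\left( \myx_i^\top \mytheta_m^{(k)} + \myz_i^\top \mytheta_h \right)}
    - y_i
  \right)\myz_i ,
\]
which, when evaluated at the extrapolation point $\myalpha^{[s]}$, is exactly the gradient appearing inside the thresholding operator in the first update line (the subscript on the partial derivative in the statement should be read as differentiation in the hot-spot argument). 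Next I would compute the proximal operator of $\tfrac{\lambda}{L}\|\cdot\|_1$: because the $\ell_1$ norm is separable across coordinates, the proximal problem decouples into scalar problems $\min_u\{\tfrac12(u-x)^2 + \tfrac{\lambda}{L}|u|\}$, each solved in closed form by the soft-thresholding value $S(x,\lambda/L)$, matching the piecewise definition $S(x,a) = (x-a)\mathbbm 1\{x \geq a\} + (x+a)\mathbbm 1\{x \leq -a\}$ stated in the proposition.

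Then I would justify the step size. The Hessian of $g$ equals $\myB_h^\top \mathbf{D}\, \myB_h$, where $\mathbf{D}$ is the diagonal matrix with nonnegative entries $n_i\, e^{(\myx_i^\top \mytheta_m^{(k)} + \myz_i^\top \mytheta_h)}$; this matrix is positive semidefinite, so $g$ is convex, and taking $L$ to be its largest eigenvalue certifies the quadratic majorization $g(\myalpha) \le g(\myalpha^{[s]}) + \nabla g(\myalpha^{[s]})^\top(\myalpha - \myalpha^{[s]}) + \tfrac{L}{2}\|\myalpha - \myalpha^{[s]}\|^2$ that underlies the proximal step. With $\nabla g$, the proximal map, and $L$ in hand, the remaining two lines---the momentum recursion $t_{s+1} = (1+\sqrt{1+4t_s^2})/2$ and the extrapolation $\myalpha^{[s+1]} = \mytheta_h^{(k)[s]} + \tfrac{t_s-1}{t_{s+1}}(\mytheta_h^{(k)[s]} - \mytheta_h^{(k)[s-1]})$---are inherited verbatim from the generic FISTA scheme, completing the specialization.

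The main obstacle is that, because of the exponential, $\nabla g$ is only \emph{locally} Lipschitz: the Hessian eigenvalues grow without bound as $\|\mytheta_h\|\to\infty$, so no single global constant $L$ exists and the textbook $O(1/s^2)$ guarantee does not apply off the shelf. The careful point is therefore to argue that the iterates stay within a bounded region (equivalently, to re-estimate $L$ from the current Hessian at each step), so that the largest local Hessian eigenvalue serves as a valid Lipschitz constant for the descent lemma; this is what makes the choice ``$L$ equal to the maximal eigenvalue of $\tfrac{\partial^2}{\partial\mytheta_h\,\partial\mytheta_h^\top} g(\mytheta_h)$'' legitimate and preserves the accelerated convergence.
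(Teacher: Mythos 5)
Your proposal is correct and follows the same route the paper takes: the paper offers no written proof of this proposition at all, simply presenting the update scheme as a direct specialization of FISTA from the cited reference, and your verification (computing $\nabla_{\mytheta_h} g$, identifying soft-thresholding as the proximal map of $\tfrac{\lambda}{L}\|\cdot\|_1$, and taking $L$ from the Hessian $\myB_h^\top \mathbf{D}\,\myB_h$) supplies exactly the details that citation is standing in for. You also correctly read the $\frac{\partial}{\partial\mytheta_m}$ in the statement as a typo for differentiation in the hot-spot argument, and your closing caveat---that the exponential makes $\nabla g$ only locally Lipschitz, so the ``maximal eigenvalue of the Hessian'' prescription needs a boundedness argument or per-step re-estimation to be legitimate---is a genuine gap in the paper's presentation that your write-up addresses and the paper does not.
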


    By combining Proposition \ref{prop: alg IRLS} and Proposition \ref{prop: alg FISTA}, we update $\mytheta_m$ and $\mytheta_h$ iteratively until convergence.
    We summarize the details of our algorithm for parameter estimation as follows:

    \begin{algorithm}[htbp]
	\caption{Estimation of  $\mytheta_m, \mytheta_h$ by IRLS \& FISTA algorithm }
    \label{alg}
        \LinesNumbered
        \KwIn{
        $
          \myy, \myn \in \mathbb R^{n_1 n_2 n_3},
          \myB_m \in \mathbb R^{n_1 n_2 n_3 \times p_1 p_2 p_3},
          \myB_h \in \mathbb R^{n_1 n_2 n_3 \times q_1 q_2 q_3},
          \lambda>0,
          t_1
        $. }
        \KwOut{The estimators of $\mytheta_m, \mytheta_h$, noted as $\widehat\mytheta_m, \widehat\mytheta_h$.}
        \bfseries{Initialization}
        $t_1 = 1, \mytheta_m^{(0)}, \mytheta_h^{(0)}$. \\
        \textit{ $\blacktriangleright$ Outer-Iteration: $\blacktriangleleft$ }
        \For{ $k = 0, 1,2,3, \ldots, K$ }{
            $
              \myW
              =
              diag
              \left(
                e^{\myx_1^\top\mytheta_m^{(k)} + \myz_1^\top\mytheta_h^{(k)} },
                \ldots,
                e^{ \myx_{n_1 n_2 n_3}^\top\mytheta_m^{(k)} + \myz_{n_1 n_2 n_3}^\top\mytheta_h^{(k)} }
              \right)
            $\\
            $
              \mygamma^{(k)}
              =
              \left(
                e^{\myx_1^\top\mytheta_m^{(k)} + \myz_1^\top\mytheta_h^{(k)} },
                \ldots,
                e^{\myx_{n_1 n_2 n_3}^\top \mytheta_m^{(k)} + \myz_{n_1 n_2 n_3}^\top\mytheta_h^{(k)} }
              \right)^\top
            $\\
            $
              \myeta^{(k)}
              =
              \myX \mytheta_m^{(k)} + \myW^{-1} \left( \widetilde\myy - \mygamma^{(k)} \right)
            $\\
            $
              \mytheta_m^{(k+1)}
              =
              \left( \myX^\top \myW \myX  \right)^{-1} \myX^\top \myW \myeta^{(k)}
            $ \\
            $
              \mybeta^{(k)[0]} = \mytheta_h^{(k)}
            $\\
            $
              \myalpha^{[1]} = \mytheta_h^{(k)}
            $\\
            $
              t_1 = 1
            $\\
            \textit{ $\blacktriangleright$ Inner-Iteration: $\blacktriangleleft$ }
            \For{s = 1,2, \ldots, S}{
            $
              \mytheta^{(k)[s]}_h
              =
              S
              \left(
                \myalpha^{[s]}
                -
                \frac{\partial}{\partial \mytheta_m} g(\mytheta_m^{(k+1)}, \myalpha^{[s]})/ L,
                \lambda/L
              \right)
            $ \\
            $
              t_{s+1} =  \left(1+\sqrt{1 +4 t_s^2} \right) / 2
            $ \\
            $
              \myalpha^{[s+1]}
              =
              \mytheta_h^{(k)[s]} + \frac{t_s - 1}{t_{s+1}}
              \left( \mytheta_h^{(k)[s]} - \mytheta_h^{(k)[s-1]} \right)
            $\\
            }
            $
              \mytheta_h^{(k+1)} = \mytheta_h^{(k)[S]}
            $
        }
        $
          \widehat\mytheta_m = \mytheta_m^{(K)}
        $    \\
        $
          \widehat\mytheta_h = \mytheta_h^{(K)}
        $
    \end{algorithm}

\subsection{Hot-spots Detection and Localization}
\label{sec: hot-spots detection and localization}

   In this subsection, we discuss the detection and localization of the hot-spots.
   For the ease of presentation, we first discuss the hot-spots detection, i.e., detect when a hot-spot occurs in Subsection \ref{sec: hot-spots detection}.
   Then, in Subsection \ref{sec: hot-spots localization}, we consider the localization of the hot-spots, i.e., determine which states and which type of diseases are involved for the detected hot-spots.

\subsubsection{Hot-spots Detection}
\label{sec: hot-spots detection}

    To detect when the hot-spots occur, we propose to monitor the time series of the Pearson residuals of count data,
    $
      (\mathbf{y}_t(\lambda) - \widehat{\mymu}_t(\lambda))/\sqrt{\mymu_t(\lambda)},
    $ over time $t = 1, \ldots, n_3.$
    If it has a shift toward the direction of the estimated hot-spots $\myh_t$, then it is highly likely that there is a hot-spot.
    Mathematically speaking, we develop a control chart based on the following hypothesis test problem:
    \begin{equation}
    \label{eq:chaneg_hypothesis_testing}
      H_{0}: \;
      \myr_t(\lambda) = 0 \;\;\;
      \text{v.s.} \;\;\;
      H_{1}:\;
      \myr_t(\lambda)
      =
      \delta \widehat{\myh}_{t}(\lambda)
      \;\;\;
      (\delta>0),
    \end{equation}
    where
    $
        \mathbf{r}_t(\lambda)
        =
        \mathbf{y}_t(\lambda)
        -
        \widehat{\mymu}_t(\lambda)
    $
    is the residual  after removing the global trend mean under the penalty parameters $\lambda$ and the vector
    $
      \widehat{\mymu}_{t}(\lambda)
      =
      \text{vec}
      \left(
        \widehat{\mathcal U}_{::t}(\lambda)
      \right),
      \widehat{\myh}_{t}(\lambda)
      =
      \text{vec} \left(\widehat{\mathcal H}_{::t}(\lambda) \right)
    $
    are the estimated global trend men and local hot-spots in $t$-th year.
    Here, we add $(\lambda)$ to emphasize that,
    $
      \widehat{\mymu}_{t}(\lambda), \widehat{\myh}_{t}(\lambda)
    $
    are the global trend mean and local hot-spots estimation under penalty parameter $\lambda$ respectively.

    The motivation of the above hypothesis test is articulated as follows.
    When there are no hot-spots,  the residual $\myr_t(\lambda)$ is exactly the model noises.
    However, when hot-spots exist, the residual $\myr_t$ includes both hot-spots and noises.
    By including the hot-spots information of $\widehat{\myh}_{t}(\lambda)$ in the alternative hypothesis, we hope to provide a direction in the alternative hypothesis space, which allows one to construct a test with more power \cite[see][]{zou2009multivariate}.

    Next, we construct the likelihood ratio test in the above-mentioned hypotheses testing problem.
    By \cite{hawkins1993regression},  the test statistics monitoring upward shift is
    $$
      P_{t}^{+}(\lambda)
      =
      \widehat{\myh}_{t}^{+}(\lambda)^\top\; \mathbf{r}_{t}(\lambda)
      \bigg/
      \sqrt{\widehat{\myh}_{t}^{+}(\lambda)^\top \;\widehat{\myh}_{t}^{+}(\lambda) }
    $$
    where
    $
      \widehat{\myh}_{t}^{+}(\lambda)
    $
    only takes the positive part of
    $
      \widehat{\myh}_{t}(\lambda)
    $
    with other entries as zero, because our objective is to detect positive hot-spots.
    The superscript ``+'' emphasizes that we aim at detecting upward shift.
    In other words, we focus on the hot-spots that have increasing means, partly because increasing infectious rates are generally more harmful to the societies and communities.
    If one is also interested in detecting decreasing mean shifts, one could modify it by using a two-sided test.

    It remains to discuss how to choose $\lambda$ suitably in our test.
    We propose to follow \cite{zou2009multivariate} to calculate a series of $P_{t}^{+}(\lambda)$ under different combination of
    $
      \lambda
      \in
      \Gamma
      =
      \{
        \lambda^{(1)},\ldots, \lambda^{(n_{\lambda})}
      \}
    $
    and then select the $\lambda$ with the largest power.
    The final chosen test statistics, denoted as $\widetilde{P}_{t}^{+}(\lambda_{t}^*) $, can be computed by
    \begin{equation}
    \label{equ: most power}
      \widetilde{P}_{t}^{+}(\lambda_{t}^*)
      =
      \max_{\lambda \in \Gamma}
      \frac{P_{t}^{+}(\lambda)-E(P_{t}^{+}
      (\lambda))}{\sqrt{ \text{Var}(P_{t}^{+}(\lambda))}},
    \end{equation}
    where
    $
      E(P_{t}^{+}(\lambda)),
    $
    $
      \text{Var}(P_{t}^{+}(\lambda))
    $
    respectively are the mean and variance of
    $
      P_{t}^+(\lambda)
    $
    under $H_{0}$ (e.g., for phase-I in-control samples).
    Here $\lambda_{t}^* \in \Gamma$ is the penalty parameter maximizing the above equation.

    With the test statistic available, we detect when hot-spots occur based on the widely used Cumulative Sum (CUSUM) Control Chart \cite[see][]{page1954continuous, lorden1971procedures}.
    At each time $t,$ we  recursively compute the CUSUM statistics as
    \begin{equation}
    \label{equ: Wt}
      W_{t}^{+}
      =
      \max
      \{
      0,W_{t-1}^{+}+\widetilde{P}_{t}^{+}(\lambda_{t}^{*}) -  d^*
      \},
    \end{equation}
    with the initial value $W_{t=0}^{+}=0$, where $d^*$ is a constant and can be chosen according to the degree of the shift that we want to detect.
    Then we declare that a hot-spot might occurs whenever $W_{t}^{+} > L$  for some pre-specified control limit $L.$

    Note that the CUSUM statistics  $W_{t}^{+}$ leads to the optimal control chart to detecting a mean shift from $\mu_0$ to $\mu_1 = 2 d^* - \mu_0$ for normally distributed data \cite[see][]{lorden1971procedures}.
    When the data are not normally distributed, the optimality properties might not hold, but it can still be a reasonable control chart.
    Also, it is important to choose the control limit $L$  in the CUSUM control chart suitably, and the detailed discussion will be presented in Section \ref{sec: simulation} for our simulation studies and in Section \ref{sec: case study} for our case study.

\subsubsection{Hot-spots Localization}
\label{sec: hot-spots localization}
    In this subsection, we discuss how to localize the hot-spots if the CUSUM control chart in \eqref{equ: Wt} raises an alarm at year $t^*$.
    In other words, we want to determine where and which infectious rates may account for the hot-spots.
    To do so, we propose to utilize the matrix $\widehat{\mathcal H}_{::t^*}(\lambda_{t^*})$, which is the hot-spots estimation in $t^*$-th year.
    If the $(i,j, t^*)$-th entry in $\widehat{\mathcal H}_{::t^*}(\lambda_{t^*})$ is non-zero, then we declare that there is a  hot-spot for the $j$-th type of disease in the $i$-th state at the $t^*$-th year.

    The mathematical procedure to derive
    $
      \widehat{\mathcal H}_{::t^*}(\lambda_{t^*})
    $
    is as follows.
    First,
    $
      \widehat{\mathcal H}(\lambda_{t^*})
    $
    is the tensor format of
    $
      \widehat{\myh}(\lambda_{t^*})
      =
      \myB_h \widehat{\mytheta}_h(\lambda_{t^*}),
    $
    where
    $
      \widehat{\mytheta}_h(\lambda_{t^*})
    $
    is the minimizer in \eqref{equ: objective function1} with penalty parameter as $\lambda_{t^*}$.
    Second,
    $
      \widehat{\mathcal H}_{::t^*}(\lambda_{t^*})
    $
    is the $t^*$-th frontal slices of
    $
      \widehat{\mathcal H}(\lambda_{t^*}).
    $

    It is possible that this approach might lead to a relatively high false positive rate (FPR), since some non-zero entries might not be statistically significant.
    Two possible ways to improve our approach are (1) to conduct the significant test, or (2) to set up a pre-specified threshold and only keep the positive entries that are larger than the threshold.
    In our paper, we use the second approach, i.e., set up a pre-specified threshold for hot-spots localization.
    As for the selection of the threshold, there are several choices:
    \begin{itemize}
      \item Hard-thresholding: if the $(i,j, t^*)$-th entry in the refined estimation of hot-spots
            $
              \widehat{\dot{\mathcal H}}_{::t^*}(\lambda_{t^*})
              =
              \widehat{\mathcal H}_{::t^*}(\lambda_{t^*})
              \mathbbm 1\{ \widehat{\mathcal H}_{::t^*}(\lambda_{t^*}) > h_{\text{hard}} \}
            $
            is larger than zero, then we declare there is a hot-spot in the $j$-th type of disease in the $i$-th state at the $t^*$-th year.
      \item Soft-thresholding: if the $(i,j, t^*)$-th entry in the refined estimation of hot-spots
            $
              \widehat{\dot{\mathcal H}}_{::t^*}(\lambda_{t^*})
              =
              \max\{\widehat{\mathcal H}_{::t^*}(\lambda_{t^*}) - h_{\text{soft}}, 0\}
            $
            is larger than zero, then we declare there is a hot-spot in the $j$-th type of disease in the $i$-th state at the $t^*$-th year.
      \item Order-thresholding: if the $(i,j, t^*)$-th entry in the refined estimation of hot-spots
            $
              \widehat{\dot{\mathcal H}}_{::t^*}(\lambda_{t^*})
              =
              \widehat{\mathcal H}_{::t^*}(\lambda_{t^*})
            \mathbbm 1\{ \widehat{\mathcal H}_{::t^*}(\lambda_{t^*}) \geq h_{\text{order}, r} \}
            $
            is larger than zero, then we declare there is a hot-spot in the $j$-th type of disease in the $i$-th state at the $t^*$-th year.
            Here $h_{\text{order}, r}$ is the $r$-th largest order statistics of
            $
              \{
              \widehat{\mathcal H}_{i,j,t^*}(\lambda_{t^*})
              :
              \forall (i,j), \; s.t. \; \widehat{\mathcal H}_{i,j,t^*}(\lambda_{t^*}) >0
              \}.
            $
    \end{itemize}
    
    In this paper, for simplicity, we use the order-thresholding as an illustration, since it has been used in other contents, such as outlier detection \cite[see][]{nagaraja1982some}, testing problem \cite[see][]{kim2010order}, etc..

\section{Monte Carlo Simulations}
\label{sec: simulation}
    In this section, we report the numerical simulation results of our proposed method as well as its comparison with several benchmark methods in the literature.
    To better present our results, we divide this section into several subsections.
    Subsection \ref{sec: data generation mechanism} includes the data generation mechanism for our simulation studies, and Subsection \ref{sec: compared benchmarks} presents the benchmark methods for the comparison purpose.
    The performance of hot-spots detection and localization are reported in Subsection \ref{sec: simulation result}.

\subsection{Data Generation Mechanism}
\label{sec: data generation mechanism}
   This subsection describes the data generation mechanism we use in the numerical studies.
   In our simulation, the $(i,j,t)$-th entry of tensor $\mathcal Y \in \mathbb R^{n_1 \times n_2 \times n_3}$ is generated as follows:
   \begin{equation}
   \label{equ: sim generative model}
   \left\{
   \begin{array}{lcl}
       \mathcal Y_{i,j,t}
       & \sim &
       \text{Poisson}( \mathcal N_{i,j,t} \mathcal R_{i,j,t} )\\
       \mathcal R_{i,j,t}
       &   =  &
       0.2 + \delta
       \mathbbm 1 \left\{ t  > \tau \right\}
       \mathbbm 1 \left\{(i,j)  \in S \right\}  \\
   \end{array}
   \right..
   \end{equation}

   In the first line of the above generative model,
   $
     \mathcal Y_{i,j,t}, \mathcal N_{i, j, t}, \mathcal R_{i, j, t}
   $
   is the $(i,j,t)$-th entry of tensor
   $
     \mathcal Y, \mathcal N, \mathcal R
     \in
     \mathbb R^{n_1 \times n_2 \times n_3},
   $
   whose physical meaning is the number of infected patients, the population size, and the infectious rate in the $i$-th state in year $t$ under the $j$-th type of disease, respectively.
   In this simulation, to match the dimension of or motivating dataset, we set
   $
     n_1 = 49, n_2 = 10, n_3 = 26.
   $
   
   In the second line of the above generative model in \eqref{equ: sim generative model}, the parameter $\delta$ measures the magnitude of the hot-spots.
   In our paper, we use $\delta = 0.05$ as the small hot-spots and $\delta = 0.2$ as the large hot-spots.
   The reasons we declare $\delta = 0.05$ ($\delta = 0.2$) as small (large) hot-spots is that, the average Kullback-Leibler divergence 
   \cite[see][]{kullback1951information}
   is 0.3474 (4.6388).
   And $\mathbbm 1 \{ x \in A \}$ is an indictor function, i.e., $\mathbbm 1 \{ x \in A \} = 1$ if $x \in A$ and $\mathbbm 1\{ x \in A \} = 0$ otherwise.
   The first indictor function $\mathbbm 1\{ t > \tau\}$ means that the hot-spots only happens after the $\tau$-th year.
   And in our paper, we set $\tau = 15$.
   The second indictor function $\mathbbm 1\left\{(i,j)  \in S \right\}$ means that the hot-spots only occurs in certain spatial location set $S$.
   In our paper, $S$ is randomly sampled without replacement in each simulation and the number of elements of $S$ only accounts for $10\%$ of the total number of elements of $\vartheta_{h,:,:,t}$.

  It remains to discuss how to model the population size $\mathcal N_{i,j,t}$. We propose to follow the common practice to model the curve of the population growth by the logistic model \citep[see][]{pinheiro2006mixed}:
   \begin{equation}
   \label{equ: sim -- population generation}
     \mathcal N_{i,j,t}
     =
     \frac{\phi_{i,j, 1}}{1+\exp[-( t -\phi_{i,j, 2} ) / \phi_{i,j, 3} ]} + \epsilon_{i,j,t}
   \end{equation}
   where $E(\epsilon_{i,j,t}) = 0$ and $Var(\epsilon_{i,j,t}) = \sigma^2$.
   Here $\phi_{i,j,1}$ indicates an asymptotic upper limit of population size in the $i$-th state, $\phi_{i,j,2}$ is  the middle point of the S-shaped curve in the $i$-th state, and $\phi_{i,j,3}$ is the scale adjustment of time periods in the $i$-th state. 
   
   In our numerical studies,
   we consider two scenarios for the population size $\mathcal N$:
   (1) the population size with increasing trend and
   (2) the population size with decreasing trend.
   In the above two scenarios, we assume one state share the uniform characteristic in the population over different type of diseases, i.e.,
   $
     \phi_{i,j_1,1} = \phi_{i,j_2,1},
     \phi_{i,j_1,2} = \phi_{i,j_2,2},
     \phi_{i,j_1,3} = \phi_{i,j_2,3}
   $
   and
   $
     \mathcal N_{i,j_1, t} = \mathcal N_{i,j_2,t}
   $
   for any $j_1 \neq j_2$.
   In our paper, we first fit \eqref{equ: sim -- population generation} to the observed population sizes in our motivating dataset by a nonlinear least-squares method (we treat year 1993 as time 1, and the population sizes are in the units of 10,000).
   Using the mathematical software Matlab R2018b, the estimated parameters for the logistic model \eqref{equ: sim -- population generation} are summarized in Appendix \ref{appendix: logistic model for population}.
   Figure \ref{fig: sim -- population fitting} plots the actual observed population sizes and the estimated growth curve in New Mexico during 1993-2018.
   From the plot, one sees that the two curves are close to each other, implying that the logistic model is reasonable in our application.

   In our simulation, we will consider the following two extreme scenarios of the population sizes, depending on whether the population size in all states are increasing or decreasing.
   This might be unrealistic, but can indicate the performance in more practical scenarios, where the population can be the interested study group, such as specific age/gender/race sub-groups.
   \begin{itemize}
     \item There is an increasing trend in the population.
           The increasing population is generated from
           $
             \mathcal N_{i,j,t}
             =
             \frac{\hat\phi_{i,j, 1}}{1+\exp[-( t -\hat\phi_{i,j, 2} ) / \hat\phi_{i,j, 3} ]}
             +
             \epsilon_{i,j,t},
           $
           where $\{(\hat\phi_{i,j, 1}, \hat\phi_{i,j, 2}, \hat\phi_{i,j, 3})\}$ is estimated by fitting the real-world population into \eqref{equ: sim -- population generation}.
     \item There is a decreasing trend in the population.
           The decreasing population is generated from
           $
             \mathcal N_{i,j,t}
             =
             \frac{\hat\phi_{i,j, 1} / a_{i,j}}{1+\exp[( t -\hat\phi_{i,j, 2} ) / \hat\phi_{i,j, 3} ]} +
             1
             +
             \epsilon_{i,j,t},
           $
           where $\{(\hat\phi_{i,j, 1}, \hat\phi_{i,j, 2}, \hat\phi_{i,j, 3})\}$ is estimated by fitting the real-world population into \eqref{equ: sim -- population generation}.
           Here the parameter $\{a_{i,j}\}$ is  necessary to make sure that the simulated initial population size is the same as the observed value.
   \end{itemize}

   \begin{figure}
     \centering
     \begin{tabular}{cc}
       \includegraphics[width=0.45\textwidth]{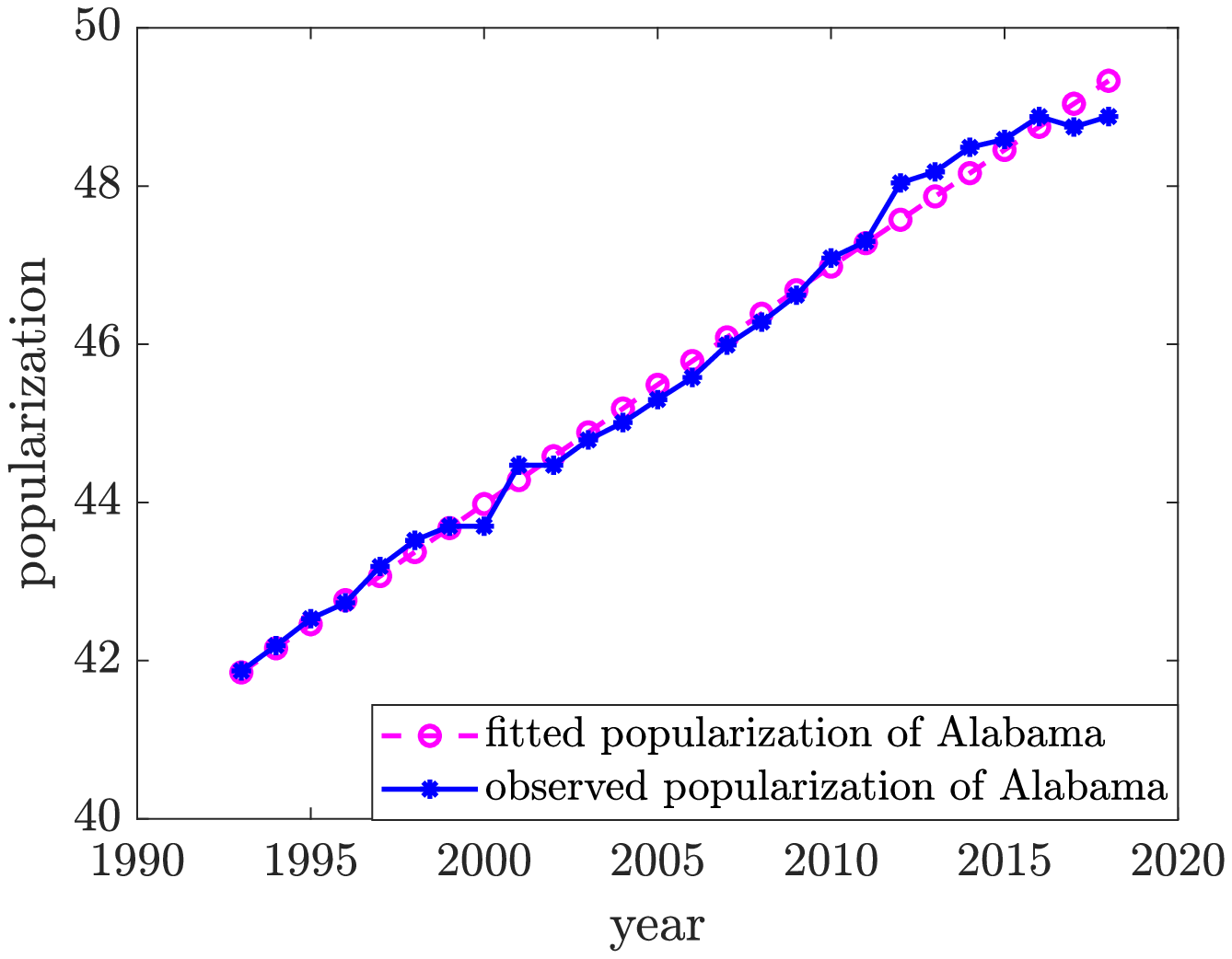} &
       \includegraphics[width=0.45\textwidth]{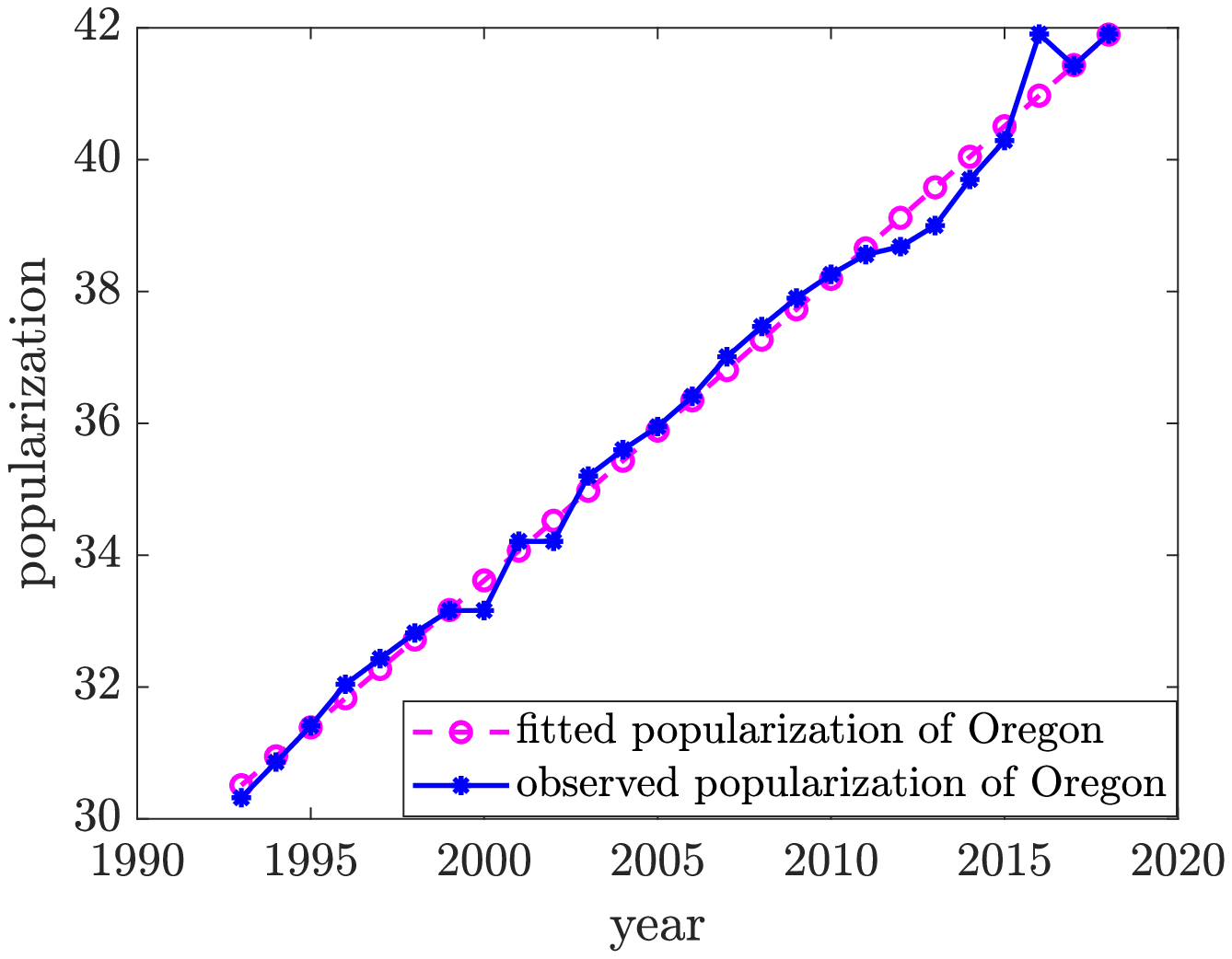} \\
       (a) Alabama & (b) Oregon\\
     \end{tabular}
     \caption{Population and estimate in Alabama and Georgia during 1993-2018
     \label{fig: sim -- population fitting}}
   \end{figure}

\subsection{Benchmark Methods}
\label{sec: compared benchmarks}

    In this subsection, we present the description and implement of five benchmark or baseline methods that will be used to compare with our method.

    The first benchmark method is the scan statistics method in \cite{neill2006bayesian}, which is a Bayesian extension of Kulldorff's scan statistic and we denote it as ``NMC-scan-stat''.
    The reason for us to choose NMC-scan-stat is that it has large power to detect clusters and the fast runtime \cite{neill2006bayesian}.
    In our paper, we use the a R function called \textit{scan\underline{ }bayes\underline{ }negbin()} from the package \textit{scanstatistics}.
    Because \textit{scan\underline{ }bayes\underline{ }negbin()} can only handle one type of disease one time, we apply \textit{scan\underline{ }bayes\underline{ }negbin()} to ten diseases separately and set the probability of an outbreak as $0.02/10$.
    Because scan-statistics-based method does not give the clear calculation of \textit{average run length under the in-control status} ($\text{ARL}_0$) and $\text{ARL}_1$, so we can only use the probability of an outbreak as $0.02/10$ to define the control limit to achieve similar $\text{ARL}_0$ with other benchmarks.

    The second benchmark method is the smooth sparse decomposition method proposed by \cite{SSD}, and we denote it as ``YPS-SSD''.
    The YPS-SSD method assumes the raw data (or the raw data after transformation) follows normal distribution and decompose it into the functional mean, sparse anomalies, and random noises.
    Then YPS-SSD detect the hot-spots by monitoring  the model residual through Shewhart control chart.
    Although the main idea in YPS-SSD looks similar to our proposed PoSSTenD, there are two significant difference.
    The first difference is that, YPS-SSD focus the normal distributed tensor, while our PoSSTenD method focus on the Poisson distributed tensor.
    In other words, YPS-SSD ignore the effect of the population size, and aims at detecting the hot-spots in the number of infected events.
    However, for our PoSSTenD method, we take the population size into consideration and target on detecting the hot-spots in infectious rates.
    The second difference is that, YPS-SSD utilizes Shewhart control chart to monitor the hot-spots, which is not sensitive to the small hot-spots.
    Yet, our proposed PoSSTenD method uses CUSUM control chart, which has more capability to detect small hot-spots.
    For the selection of basis in YPS-SSD and our PoSSTenD method, we use the same basis for a fair comparison, i.e.,
    $
     \myB_{m,1} \in \mathbb R^{n_1 \times p_1},
     \myB_{m,2} \in \mathbb R^{n_2 \times p_2},
     \myB_{m,3} \in \mathbb R^{n_3 \times p_3}
   $
   are B-spline basis with order 3 and over equally spaced knots
   $\{1, 8, 15,\ldots, 50\}$,
   $\{1, 9.1667, 17.3333, \ldots, 50\}$,
   $\{1, 9.1667, 17.3333, \ldots, 50\}$,
   respectively \cite[see][]{BsplineToolbox} and $\myB_{h,1}, \myB_{h,2}, \myB_{h,3}$ are all set as identity matrix.

    The third benchmark method is the Lasso-based method proposed by  \cite{zou2009multivariate} and we denote it as ``ZQ-Lasso''.
    The main idea of ZQ-Lasso is to integrate the multivariate Exponentially Weighted Moving Average (EWMA) charting scheme.
    Under the assumption that the hot-spots are sparse, the Lasso model is applied to the EWMA statistics.
    If the Mahalanobis distance between the expected response (the Lasso estimator) and observed values is larger than a pre-specified control limit, temporal hot-spots are detected, with non-zero entries of the Lasso estimator are declared as spatial hot-spots.
    For the control limits and the penalty parameters of the Lasso-based method, we use the same criterion as our proposed PoSSTenD method.

    The fourth benchmark is the dimension-reduction method proposed by \cite{PCA}, and we denote it as ``DBS-PCA''.
    The DBS-PCA method uses PCA to extract a set of uncorrelated new features that are linear combinations of original variables.
    Note that DBS-PCA fails to localize the spatial hot-spots, and it can only detect the temporal change-point when the PCA-projected Mahalanobis distance is larger than a pre-specified control limit.
    For this control limit, we set it by using the same criterion as our proposed PoSSTenD method.
    In both our simulations and case study, we select three principle components, since they can explain more than $90\%$ cumulative percentage of variance (CPV).

    Finally, the fifth benchmark is the traditional $\text{T}^2$ control chart \cite[see][]{T2} method and we denote it as ``$\text{T}^2$''.
    The control limit of $\text{T}^2$ is set by using the same criterion as our proposed PoSSTenD method.
    Since the $\text{T}^2$ control chart method is a well-defined method, we skip the detailed description, and more details  can be found in \cite{T2}.

\subsection{Simulation Results}
\label{sec: simulation result}
    In this subsection, we compare our proposed PoSSTenD method with five benchmark methods with the focus on the performance of  hot-spots detection in Section \ref{sec: sim -- comparision of hot-spots detection} and localization \ref{sec: sim -- comparision of hot-spots localization}.
    The five benchmark methods are NMC-scan-stat method proposed by \cite{neill2006bayesian}, YPS-SSD proposed by \cite{SSD}, ZQ-Lasso proposed by \cite{zou2009multivariate}, DBS-PCA proposed by \cite{PCA}, and the traditional Hotelling $\text{T}^2$ control chart reviewed in \cite{T2}.
    All simulation results below are based on $1000$ Monte Carlo replications.

\subsubsection{Comparison of Hot-spots Detection}
\label{sec: sim -- comparision of hot-spots detection}

    In this section, we compare the performance of hot-spots detection, i.e., the detection delay after the occurrence of hot-spots.
    The criterion we use is $\text{ARL}_1$.
    Because $\text{ARL}_1$ measures the delay after the change occurs, the smaller the $\text{ARL}_1$, the better detection performance.
    The comparison results are visualized in Figure \ref{fig: sim -- ARL1 plot} (the numerical numbers to generate this plot are available in Appendix \ref{appendix: ARL1 data}).
    
    For our proposed PoSSTenD method (marked with blue), it has a small $\text{ARL}_1$ no larger than $4.2780$ under both two scenarios.
    This indicates that the PoSSTenD method can provide a rapid alarm after the hot-spots occur, even if there are increasing or decreasing global trends.
    This good performance is mainly due to its ability to remove the global trend and capture the small hot-spots by the CUSUM control chart.
    
    For the NMC-scan-stat method, it is hard to estimate the exact $\text{ARL}_1$ because it focuses on the hot-spots localization, not sequential change point detection.
    So we will not report its $\text{ARL}_1$.
    
    For the YPS-SSD method (marked with red), it can successfully detect hot-spots when there is an increasing trend in population.
    For example, its $\text{ARL}_1$ is around $2$ when $\delta = 0.05$ and there is an increasing trend in population.
    However, when the hot-spots are small and there is a decreasing trend in population, then YPS-SSD has very limited power to detect the hot-spots.
    For example, its $\text{ARL}_1$ is around $11$ when $\delta = 0.05$ and there is a decreasing trend in population.
    If $\text{ARL}_1 \approx 11$, it is bad, because it means the YPS-SSD method is unable to detect any hot-spots among a total of 26 years with hot-spots occurrence after the 15-th year.
    This is because YPS-SSD targets on the detection of hot-spots among the number of infectious people, rather than the infectious rates.
    So when the hot-spots are small and there is a decreasing trend in population, the numbers of infectious people have a very stable trend, which makes it difficult for YPS-SSD to detect the hot-spots.
    
    For ZQ-Lasso (marked with pink), it has a relatively larger $\text{ARL}_1$.
    For example, its $\text{ARL}_1$ is around 9 when there is an increasing trend in the population.
    And its $\text{ARL}_1$ is above 9 when $delta \leq 0.15$ and there is a decreasing trend in the population.
    This is not surprising because ZQ-Lasso is unable to separate the global trend and local hot-spots.
    
    For the DBS-PCA method (marked with green), its $\text{ARL}_1$ is also large compared with other methods.
    Specifically, its $\text{ARL}_1$ is above 8 for both two scenarios.
    And the $\text{T}^2$ method fails to detect the hot-spots in all scenarios within the entire $n_3 = 26$ (simulated) years.
    The reason for the unsatisfying results of DBS-PCA and $\text{T}^2$ is that they are designed based on the global mean change, which cannot take into account the non-stationary global mean trend and the sparsity of the hot-spots.
    
    In conclusion, our proposed PoSSTenD method has short detection delay, no matter there is increasing/decreasing population trend.
    Yet, for the other benchmarks, their detection performance is not robust to the population trend.
    Especially when there is a decreasing population trend, they all have relatively long detection delays. 
    
    \begin{figure}[htbp]
      \centering
      \begin{tabular}{cc}
        \includegraphics[width=0.43\textwidth]{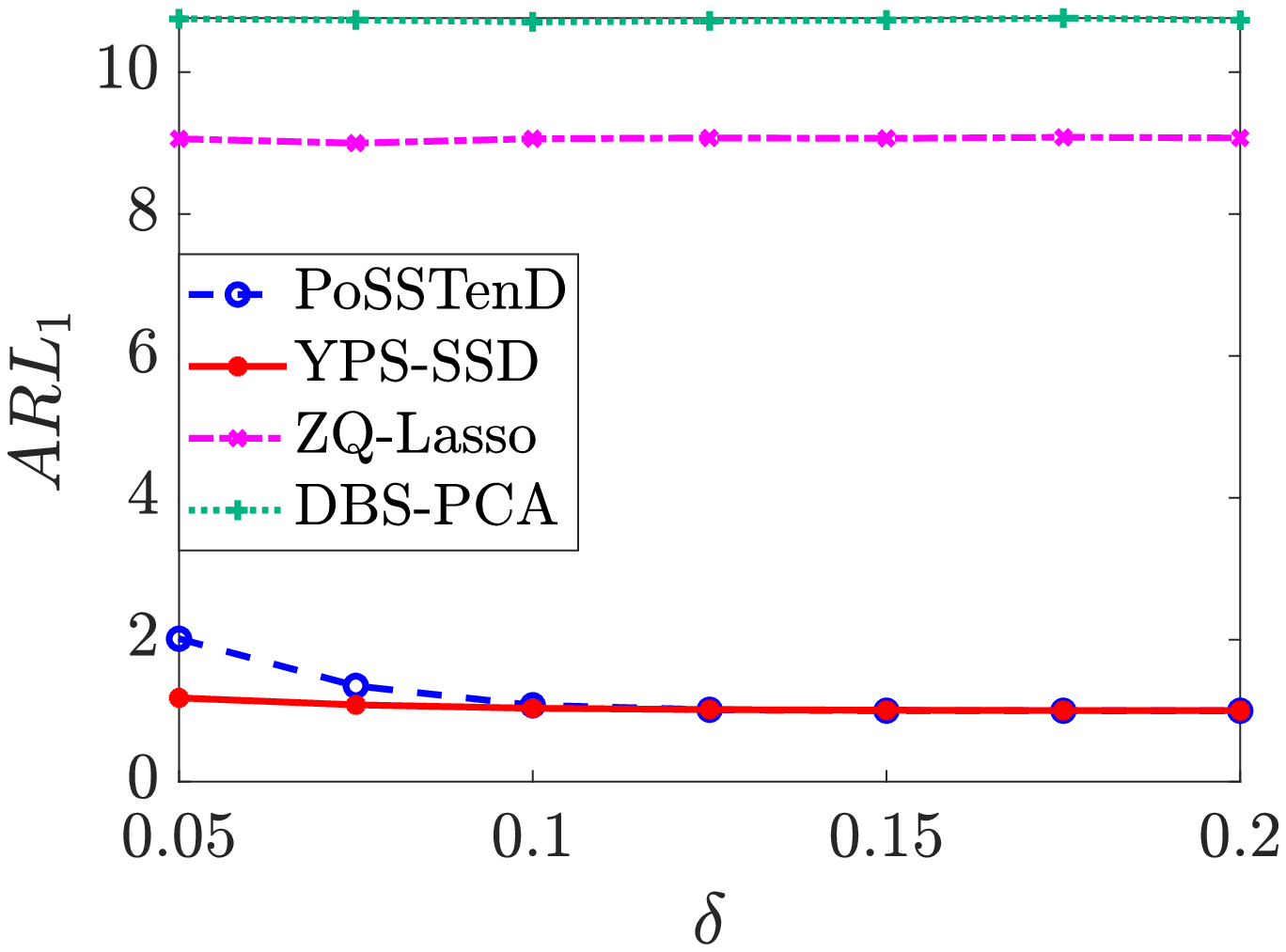} &
        \includegraphics[width=0.43\textwidth]{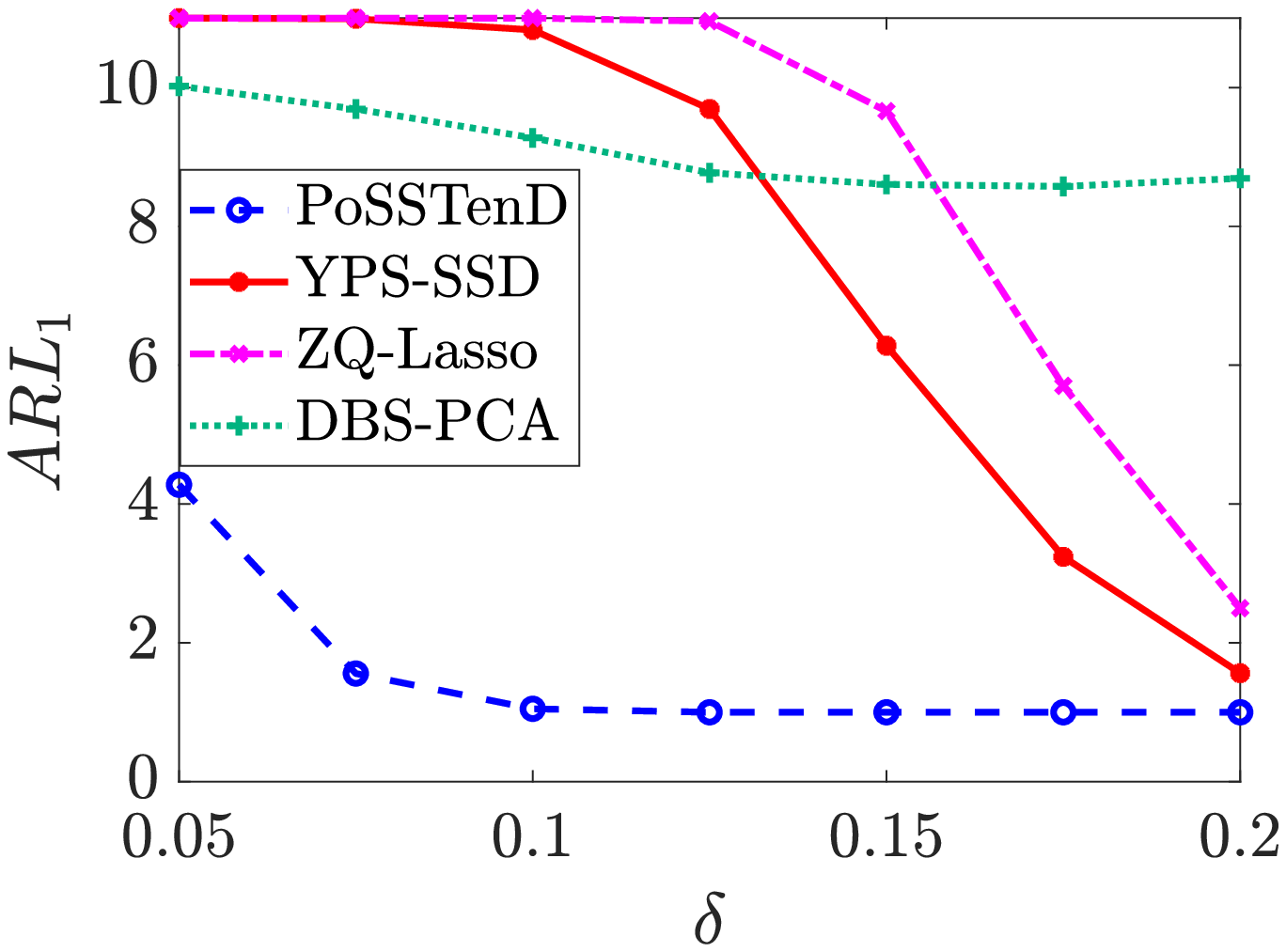} \\
        (a) population with an increasing trend &
        (b) population with a decreasing trend
      \end{tabular}
      \caption{The $\text{ARL}_1$ plot of our proposed PoSSTenD method, and the three benchmark methods, i.e., YPS-SSD method, ZQ-Lasso method and DBS-PCA method.
      \label{fig: sim -- ARL1 plot}}
    \end{figure}

\subsubsection{Comparison of Hot-spots Localization}

\label{sec: sim -- comparision of hot-spots localization}

    In this section, we compare the performance on hot-spots localization.
    The evaluation criterion are:
    (1) precision, defined as the proportion of detected hot-spots that are true hot-spots;
    (2) recall, defined as the proportion of the hot-spots that are correctly identified;
    (3) F-measure, a single criterion that combines the precision and recall by calculating their harmonic mean.
    In this paper, we classify a overall good method if it has a high precision, together with a high recall and a high F-measure.
    The localization performance of the compared six methods is available in Table \ref{table: simulation -- precison}.

    For our proposed PoSSTenD method, its localization performance is satisfactory no matter the population size is increasing or decreasing over time.
    For instance, when the population size is decreasing over time and $\delta = 0.2$, our method has $81.13\%$ precision, $54.52\%$ recall, and $64.41\%$ F-measure, which outperforms the benchmark methods.
    The only weakness of our proposed PoSSTenD method is that it has a relatively high false positive rate when $\delta$ is small.
    For example, its precision, recall, and F-measure are $46.10\%$, $15.55\%$ and $23.25\%$ when $\delta = 0.05$ and the population is increasing. 
    Though the above performance is not excellent, it is still satisfying given the challenging to detect sparse hot-spots among high-dimensional data, let along there are time-varifying global trend mean.
    
    For the NMC-scan-stat method, it has much lower precision than our proposed PoSSTenD method.
    For example, its precision, recall and F-measure are $2.50\%, 26.69\%$ and $4.51\%$ when $\delta = 0.2$ and the population is decreasing. 
    Its limited power is mainly caused by the property of scan statistics, which tends to detect clustered hot-spots.
    Accordingly, the NMC-scan-stat method detects fewer hot-spots and misses true hot-spots.
    It is worth noting that the precision/recall/F-measure might be overestimated for the NMC-scan-stat method: we record all the Monte Carlo runs, even if it is a false alarm because scan-stat fails to report $\text{ARL}_1$.
    
    For the YPS-SSD method, it has a high false positive rate in all scenarios, especially when the magnitude of hot-spots is small ($\delta = 0.05$).
    For example, its recall are $24.36\%$ when $\delta = 0.2$ and the population is decreasing.
    Besides, we notice that its precision, recall, and F-measure are all zero when $\delta = 0.05$ and there is a decreasing trend in population.
    This is because that the YPS-SSD method fails to detect when the hot-spots occur, and thus fails to further localize where the hot-spots occur.
    
    For the ZQ-Lasso method, it has high precision when the population is increasing, or $\delta = 0.2$.
    However, it also has high false positive rate.
    For example, its precision, recall, F-measure are $99.71\%$, $10.11\%$ and $18.36$ when $\delta = 0.2$ and the population is increasing. 
    We don't think its performance is overall good, since it has unsatisfactory performance in the sense of small recall.
    The physical meaning of the $99.71\%$-precision and $10.11\%$-recall is that, the ZQ-Lasso method detects almost ``everywhere'' as hot-spots, so it has around 100\% precision. 
    However, as shown in Section \ref{sec: data generation mechanism}, there are only 10\% of them are true hot-spots, so the recall is 10\%. 
    In practice, it is not preferred because it gives too much false alarm.
    The aforementioned unsatisfactory performance is due to two reasons.
    First, ZQ-Lasso lacks the ability to separate the global trend mean and local hot-spots, so ZQ-Lasso tends to localize all states as hot-spots which cause lots of false alarm.
    Second, ZQ-Lasso aims at hot-spots localization of the number of infectious people, instead of infectious rates.
    Thus, when there is an increasing trend of the population, ZQ-Lasso mistakenly regards it as the potential hot-spots in the number of infectious people.
    
    For the DBS-PCA and $\text{T}^2$ methods, their localization performance is not reported since they can only detect when hot-spots occur and are incapable to localize where hot-spots occur.
    
    Moreover, we visualize the hot-spots localization results from one Monte Carlo simulation in Figure \ref{fig: simulation map} under the scenario when $\delta = 0.2$ and population is decreasing over time.
    Different rows refer to different types of disease.
    And we select the first five disease as representative.
    In the first column, the blue states are the true hot-spots (true positive), whereas the white states are the normal states (true negative).
    In the second, third, and fourth columns, the red states are the detected hot-spots (true positive + false positive) by our proposed PoSSTenD method, NMC-scan-stat, and YPS-SSD method respectively.
    Different color represents how likely it is hot-spots: the darker red, the more likely it is.
    From Figure \ref{fig: simulation map}, we can see that the NMC-scan-stat method tends to detect clustered hot-spots, however, there is no clear pattern for the hot-spots detected by our proposed PoSSTenD method and YPS-SSD method.
    The difference between our proposed PoSSTenD method and the YPS-SSD method is that we have a lower false positive rate and a higher true positive rate.
    This is because that, our proposed PoSSTenD method is designed to detect and localize hot-spots in the infectious rate, which is aligned with the data generation mechanism introduced in Section \ref{sec: data generation mechanism}.
    However, the YPS-SSD method detects and localizes hot-spots in the number of infectious people, which leads to an inaccurate localization in our simulations.
    
    In conclusion, our proposed PoSSTenD method  has satisfying localization performance no matter the population is increasing or decreasing.
    For the other benchmarks, they either have relatively high false positive rate, or high false negative rate, or lack the ability to localize the hot-spots.
    Besides, they are also not robust to the population trend, especially when there is an increasing population trend. 
    
    \begin{table}[htbp]
    \caption{hot-spots detection and localization performance by five methods: our proposed method, NMC-scan-stat, YPS-SSD, ZQ-Lasso, DBS-PCA and $\text{T}^2$ under the population with neural growth and positive hot-spots
    \label{table: simulation -- precison}}
	\centering
    \begin{adjustbox}{max width=0.95\textwidth}
    \centering
    \begin{threeparttable}
	\begin{tabular}{c|cccc|cccc}
		\hline
		\multicolumn{1}{c|}{ \multirow{2}{*}{methods} } &
        \multicolumn{4}{c|}{population with increasing trend} &
        \multicolumn{4}{c }{population with decreasing trend}\\
		\cline{2-9}
        &\multicolumn{8}{c}{$\delta = 0.05$}\\
		\hline
		\multicolumn{1}{c|}{} &
        precision & recall & F-measure & $\text{ARL}_1$ &
        precision & recall & F-measure & $\text{ARL}_1$ \\
        \hline
		PoSSTenD
		& 0.4610&  \bf{0.1555} & \bf{0.2325} &2.0150
        & \bf{0.3393}&  \bf{0.1945} & \bf{0.2470}&\bf{4.2780}\\
		& (0.0775) & (0.0257) & (0.0381) &(1.4610)
        & (0.1638) & (0.0937) & (0.1187) &(3.6105)\\
		NMC-scan-stat
        &0.0113 &  0.1193& 0.0203 & -
        &0.0110 & 0.1134 & 0.0198 & - \\
		& (0.0144) & (0.1655) & (0.0256) &(-)
        & (0.0141) & (0.1575) & (0.0253) &(-)\\
        YPS-SSD
		&0.4184  & 0.1176& 0.1836&\bf{1.1806}
        &0.0000  & 0.0000& 0.0000&11.0000 \\
		& (0.0678) & (0.0190) & (0.0296) &(0.5138)
        & (0.0000) & (0.0000) & (0.0000) &(0.0000)\\
		ZQ-Lasso
		&\bf{0.9926}  & 0.1003& 0.1822&9.0609
        &0.0000  & 0.0000& 0.0000& 11.0000\\
		& (0.0168) & (0.0013) & (0.0023) &(1.0593)
        & (0.0000) & (0.0000) & (0.0000) &(0.0000)\\
		DBS-PCA
		& - & - & - &10.7515
        & - & - & - &10.0200 \\
		& - & - & - &(0.6330)
        & - & - & - &(2.0995) \\
		$\text{T}^2$
		& - & - & - &11.0000
        & - & - & - &11.0000 \\
		& - & - & - &(0.0000)
        & - & - & - &(0.0000) \\
		\hline
        &\multicolumn{8}{c}{$\delta = 0.2$}\\
        \hline
		PoSSTenD
		&\bf{ 0.7530}& \bf{0.3165}& \bf{ 0.4453}& \bf{1.0000}
        &\bf{0.8113}& \bf{0.5352 }& \bf{ 0.6441}& \bf{1.0000}\\
		& (0.0614) & (0.0269) & (0.0351) &(0.0000)
        & (0.0539) & (0.0428) & (0.0420) &(0.0000)\\
		NMC-scan-stat
        & 0.0211 & 0.2289 & 0.0381 & -
        & 0.0250 &0.2669  & 0.0451 & - \\
		& (0.0162) & (0.1985) & (0.0287) &(-)
        & (0.0176) & (0.2034) & (0.0312) &(-)\\
        YPS-SSD
		&0.5664  & 0.1617& 0.2515 &1.0010
        &0.9376  & 0.2436 & 0.3866& 1.5630\\
		& (0.0678) & (0.0194) & (0.0299) &(0.0316)
        & (0.2136) & (0.0563) & (0.0889) &( 2.1750)\\
		ZQ-Lasso
		&0.9971  & 0.1011& 0.1836 &9.0730
        &0.8990  & 0.0899& 0.1635&2.4990 \\
		& (0.0083) & (0.0016) & (0.0027) &(1.0406)
        & (0.3015) & (0.0301) & (0.0548) &(2.9734)\\
		DBS-PCA
		& - & - & - &10.7330
        & - & - & - &8.6920 \\
		& - & - & - &(0.7939)
        & - & - & - &(3.6631) \\
		$\text{T}^2$
		& - & - & - & 11.0000
        & - & - & - & 11.0000\\
		& - & - & - &(0.0000)
        & - & - & - &(0.0000) \\
		\hline
	\end{tabular}
    \begin{tablenotes}
      \footnotesize
        \item[1] The above results are based on 1000 simulations
    \end{tablenotes}
    \end{threeparttable}
    \end{adjustbox}
    \end{table}

    \begin{figure}[htbp]
      \centering
      \begin{tabular}{ccccc}
        disease 1&
        \includegraphics[width=0.18\textwidth]{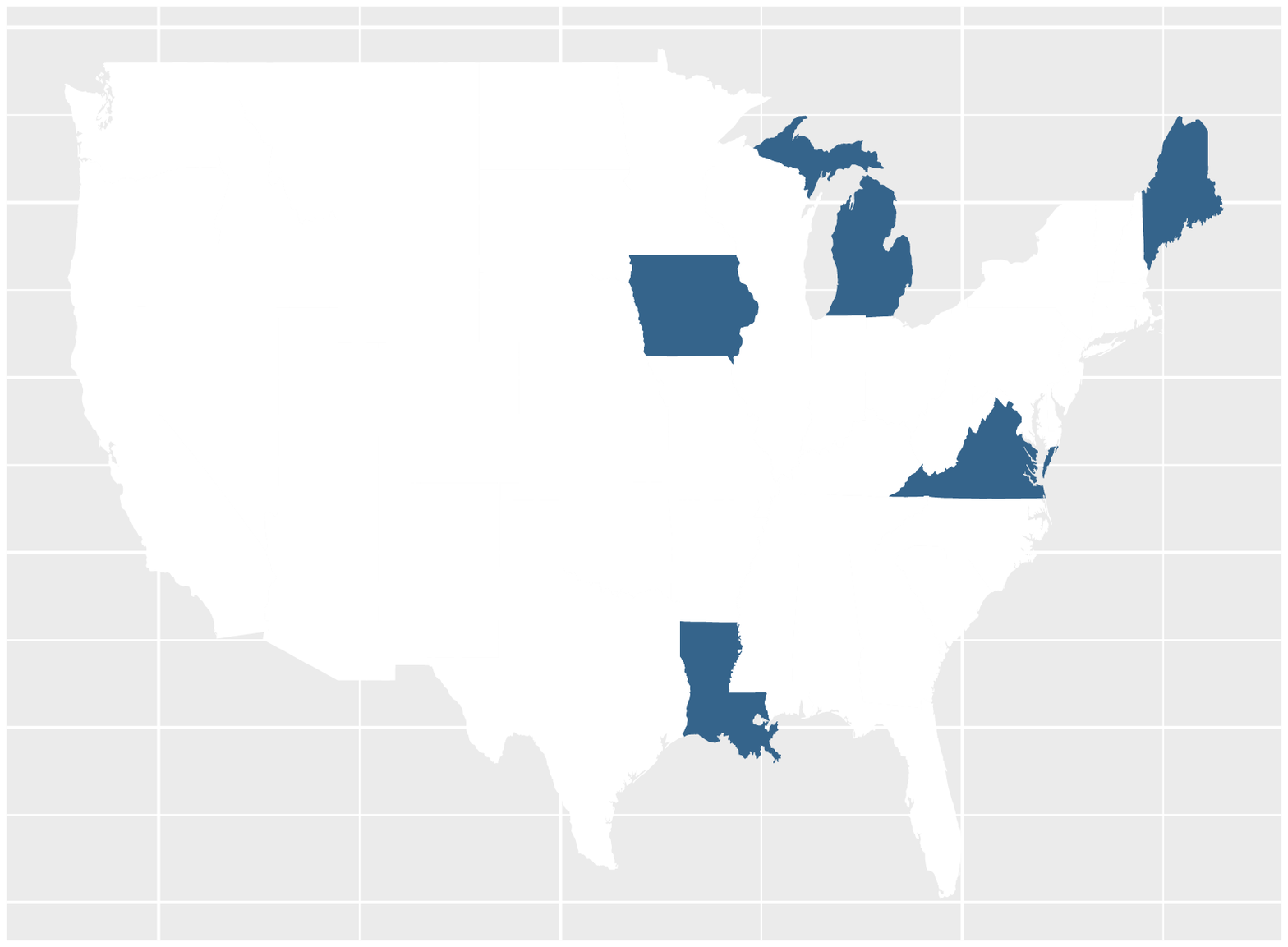} &
        \includegraphics[width=0.18\textwidth]{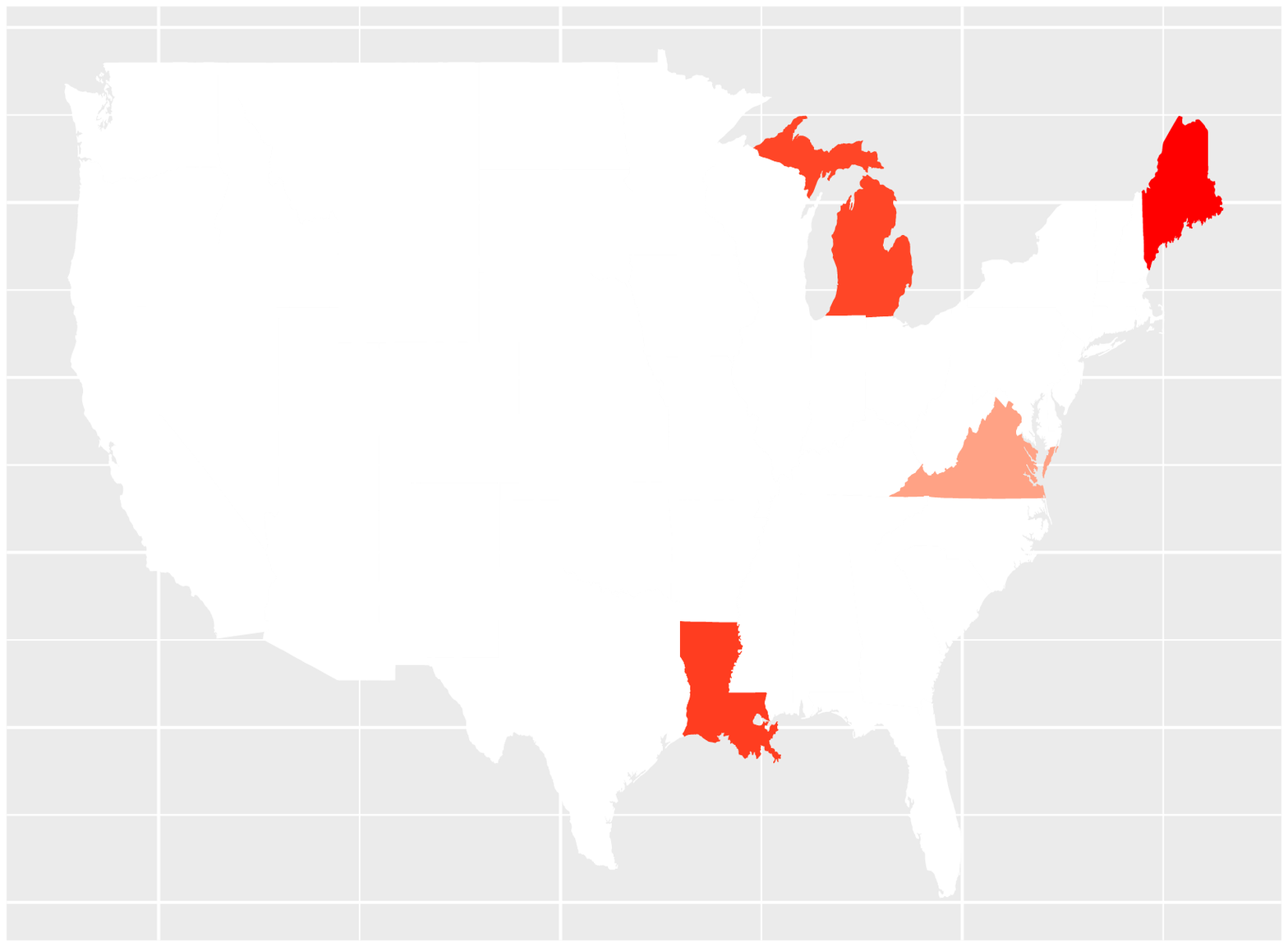}  &
        \includegraphics[width=0.18\textwidth]{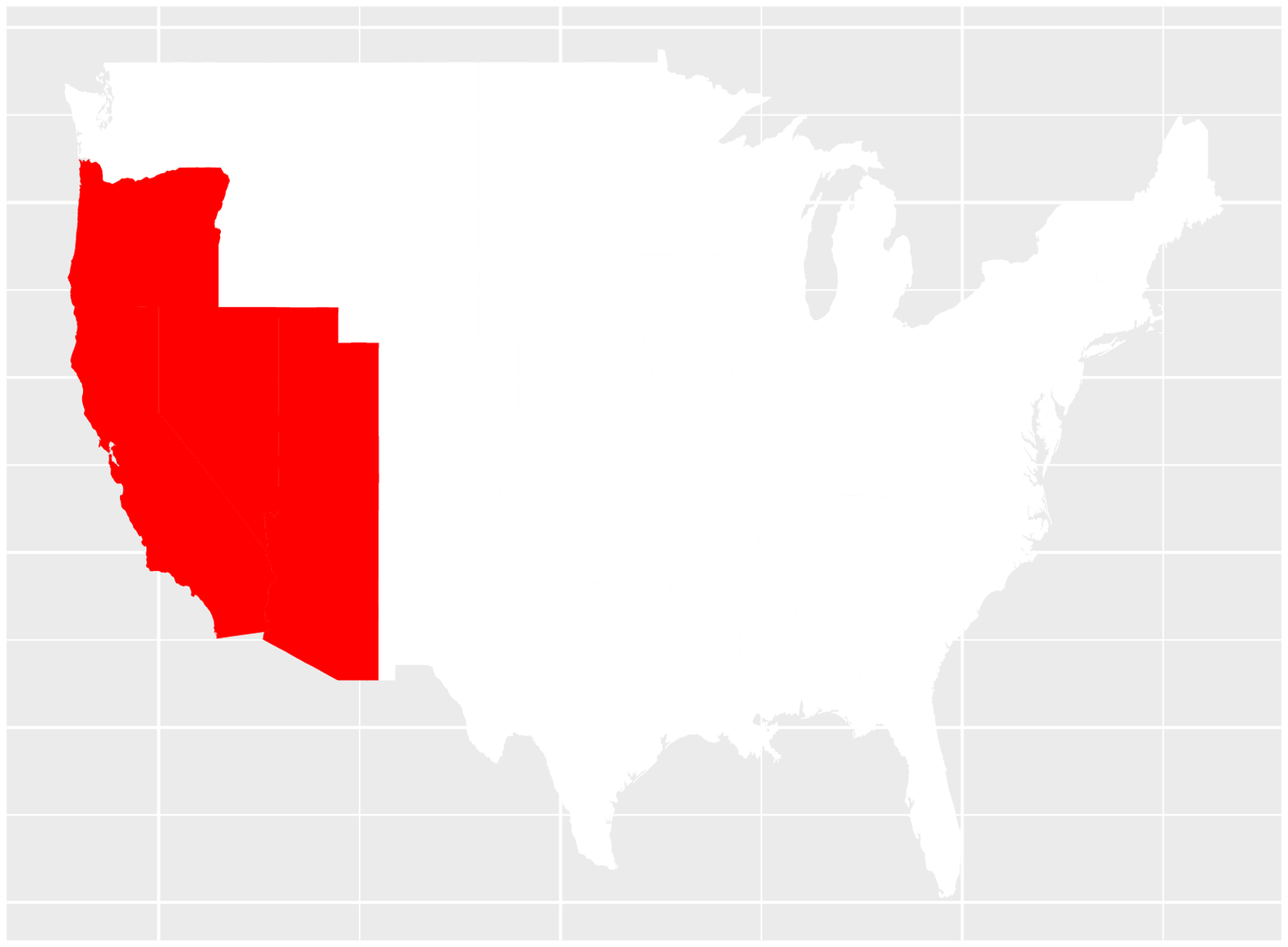} &
        \includegraphics[width=0.18\textwidth]{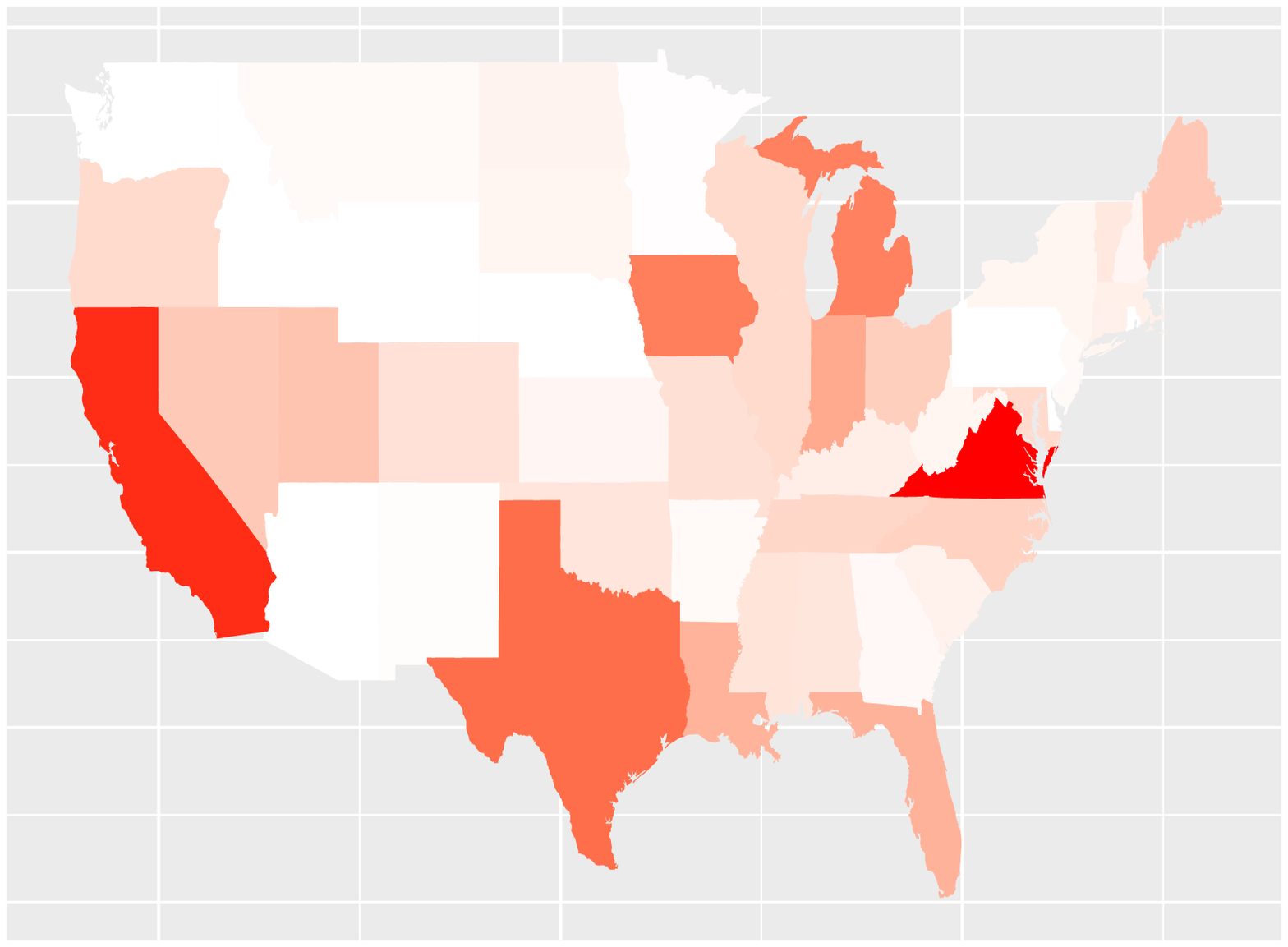}  \\
        &
        (a.1) true &
        (a.2) PoSSTenD  &
        (a.3) NMC-scan-stat &
        (a.4) YPS-SSD \\
        disease 2&
        \includegraphics[width=0.19\textwidth]{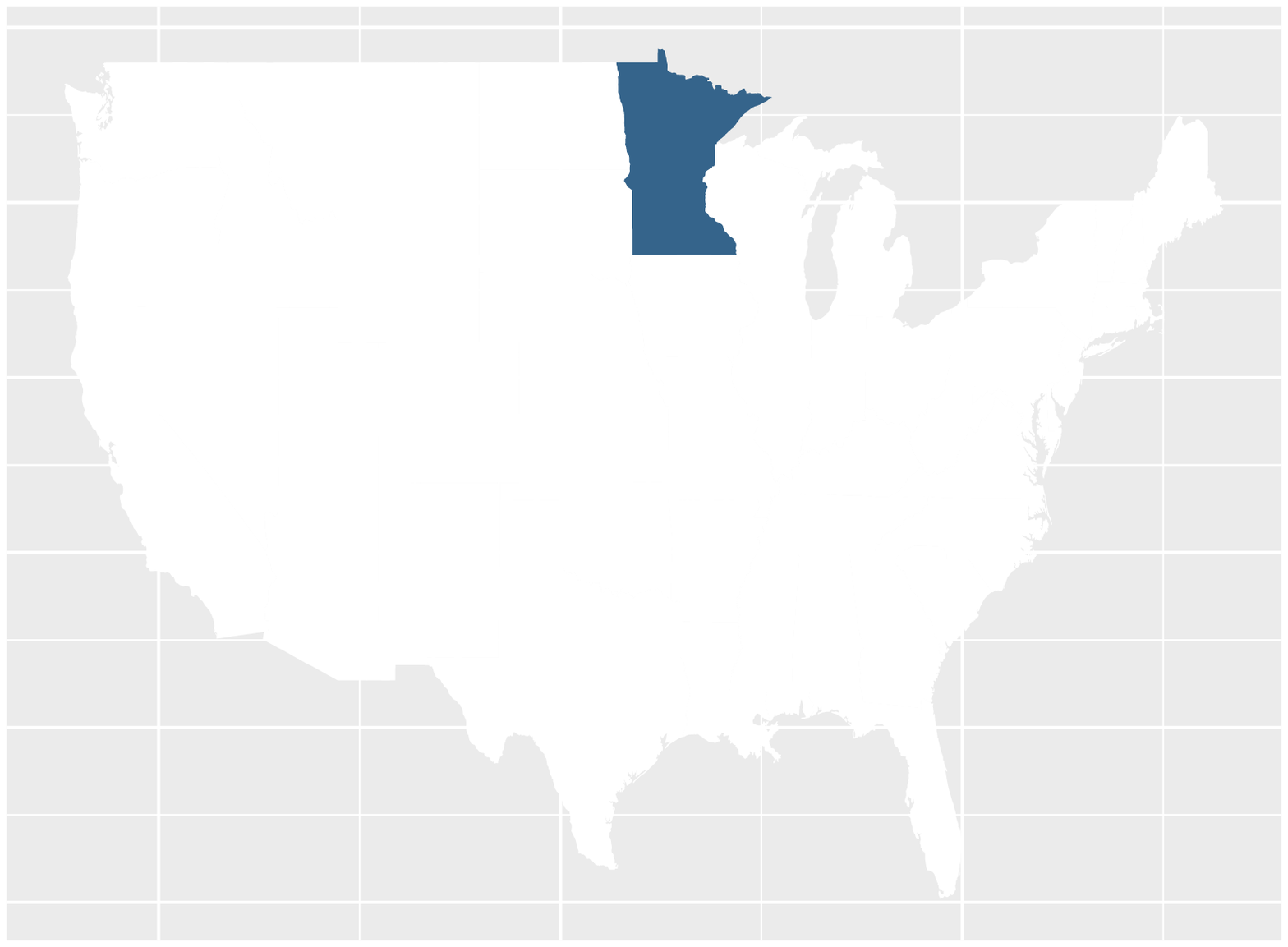} &
        \includegraphics[width=0.19\textwidth]{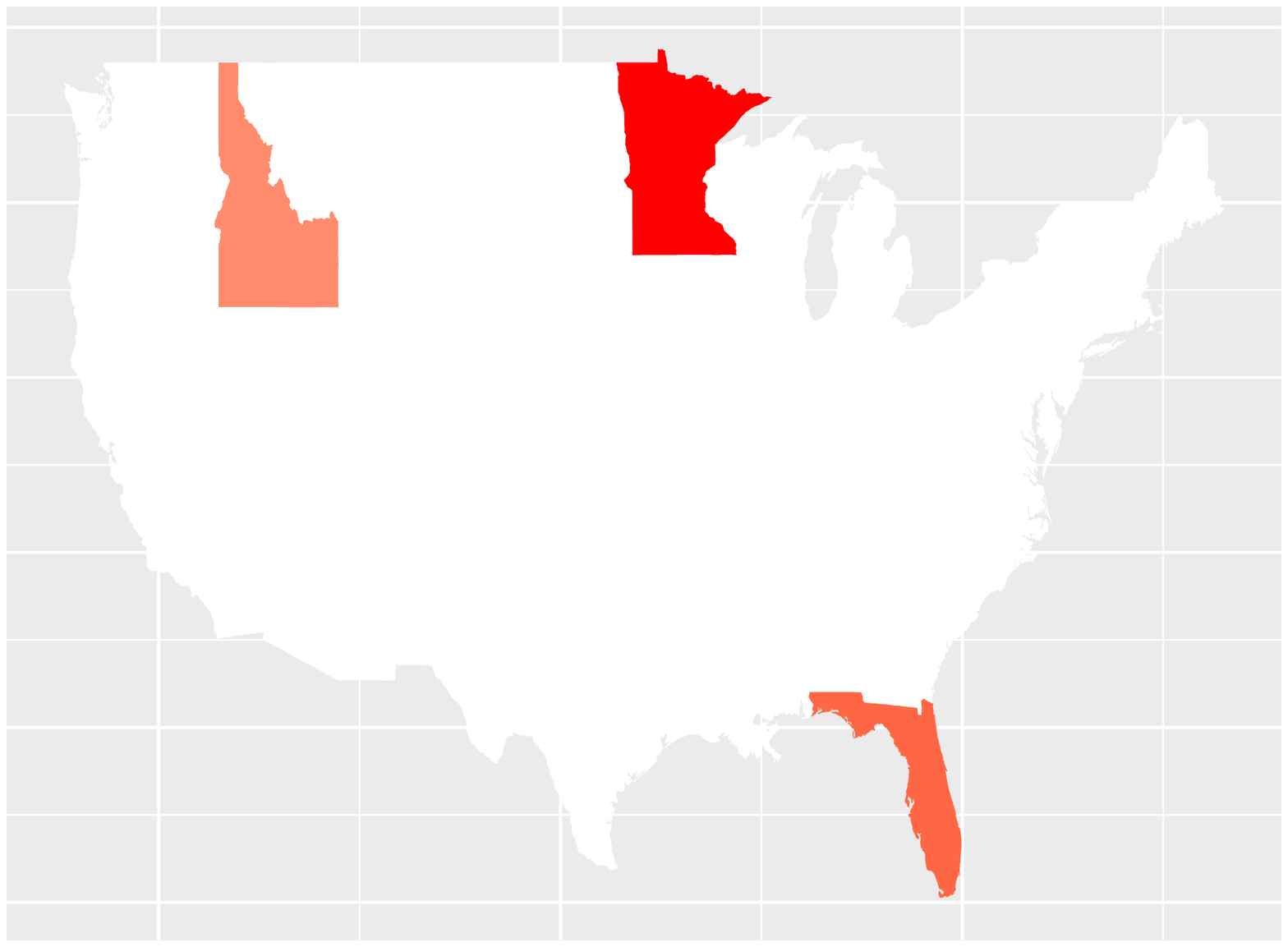}  &
        \includegraphics[width=0.19\textwidth]{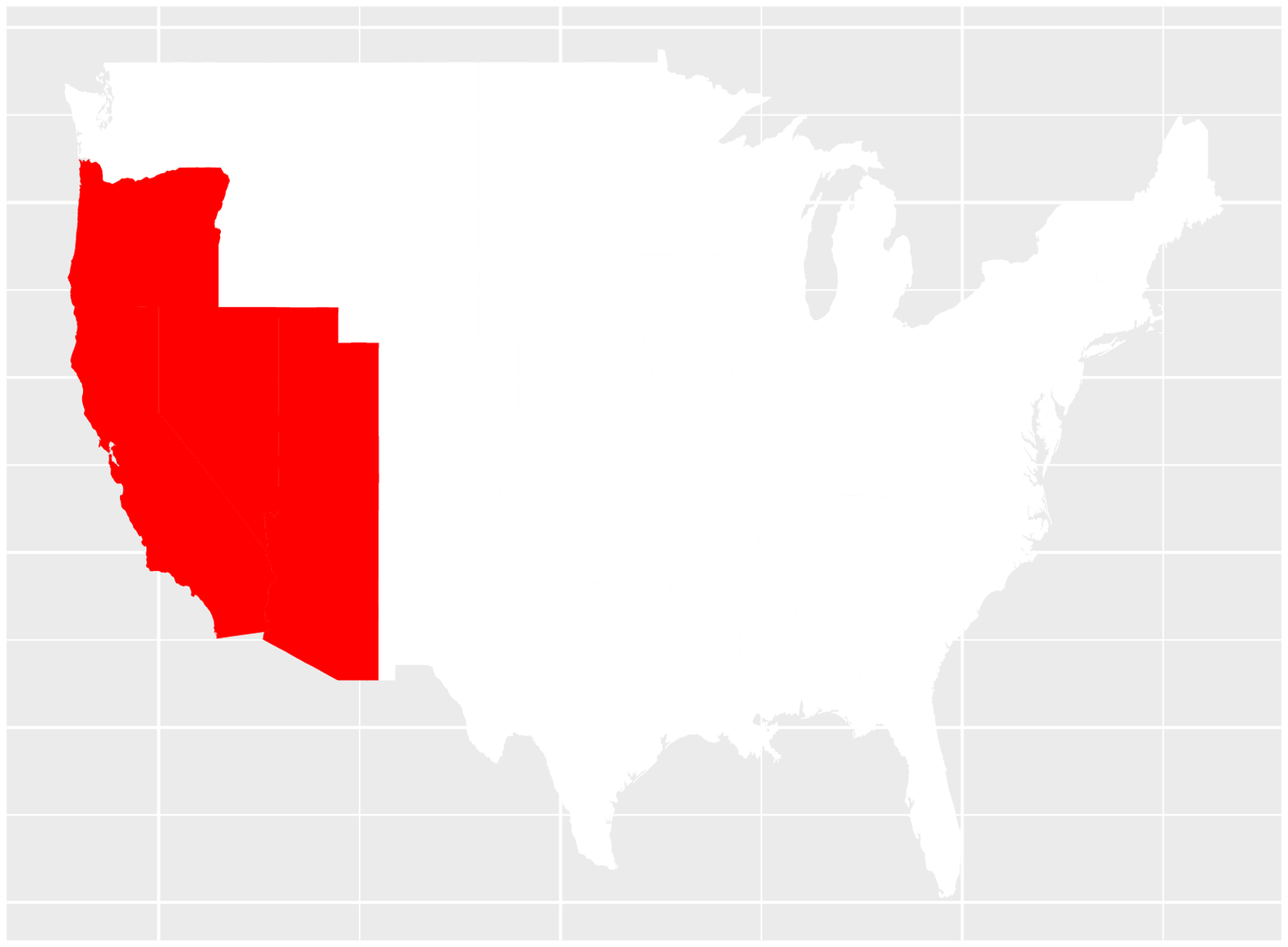}  &
        \includegraphics[width=0.19\textwidth]{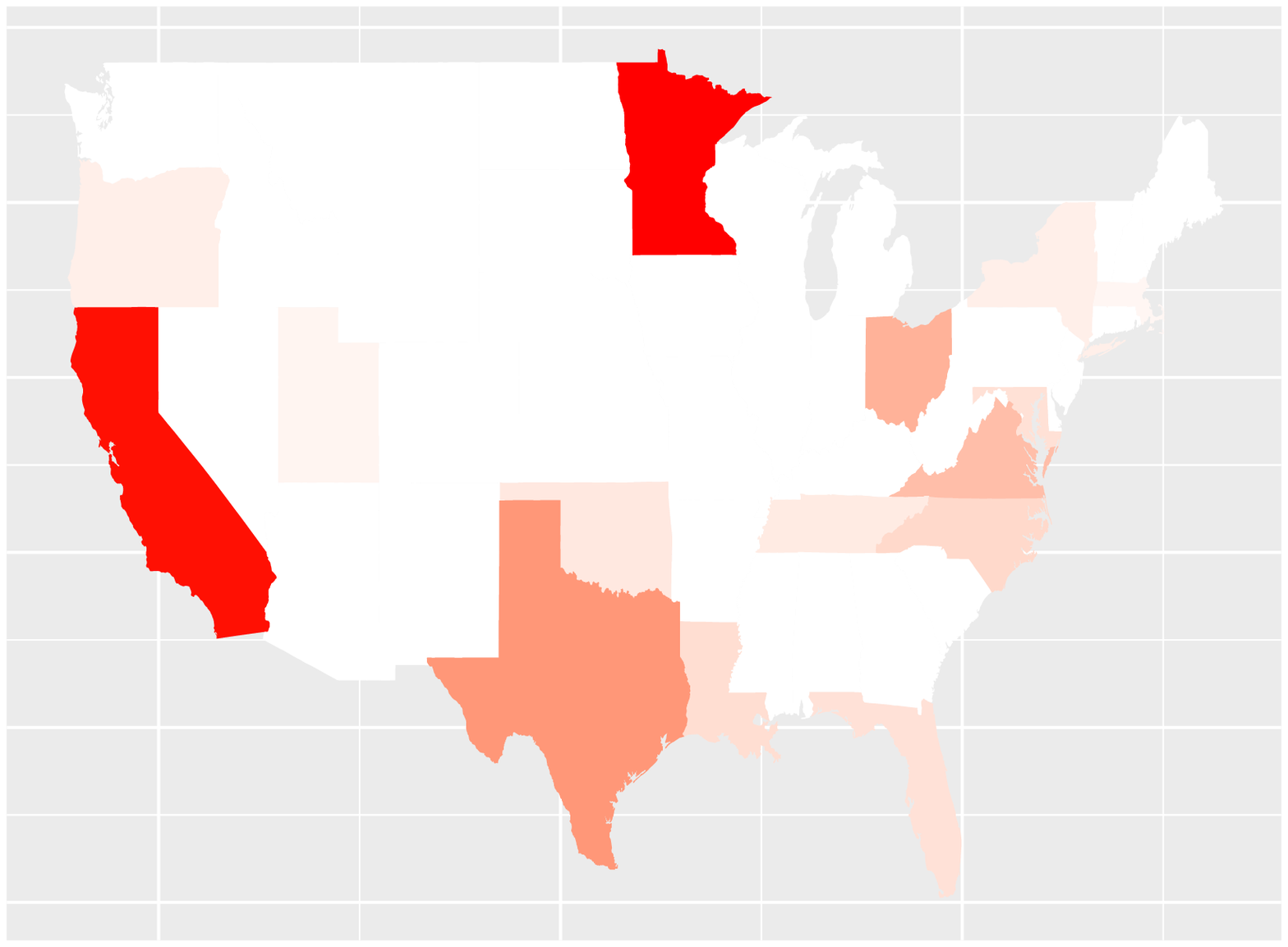} \\
        &
        (b.1) true &
        (b.2) PoSSTenD  &
        (b.3) NMC-scan-stat &
        (b.4) YPS-SSD \\
        disease 3&
        \includegraphics[width=0.19\textwidth]{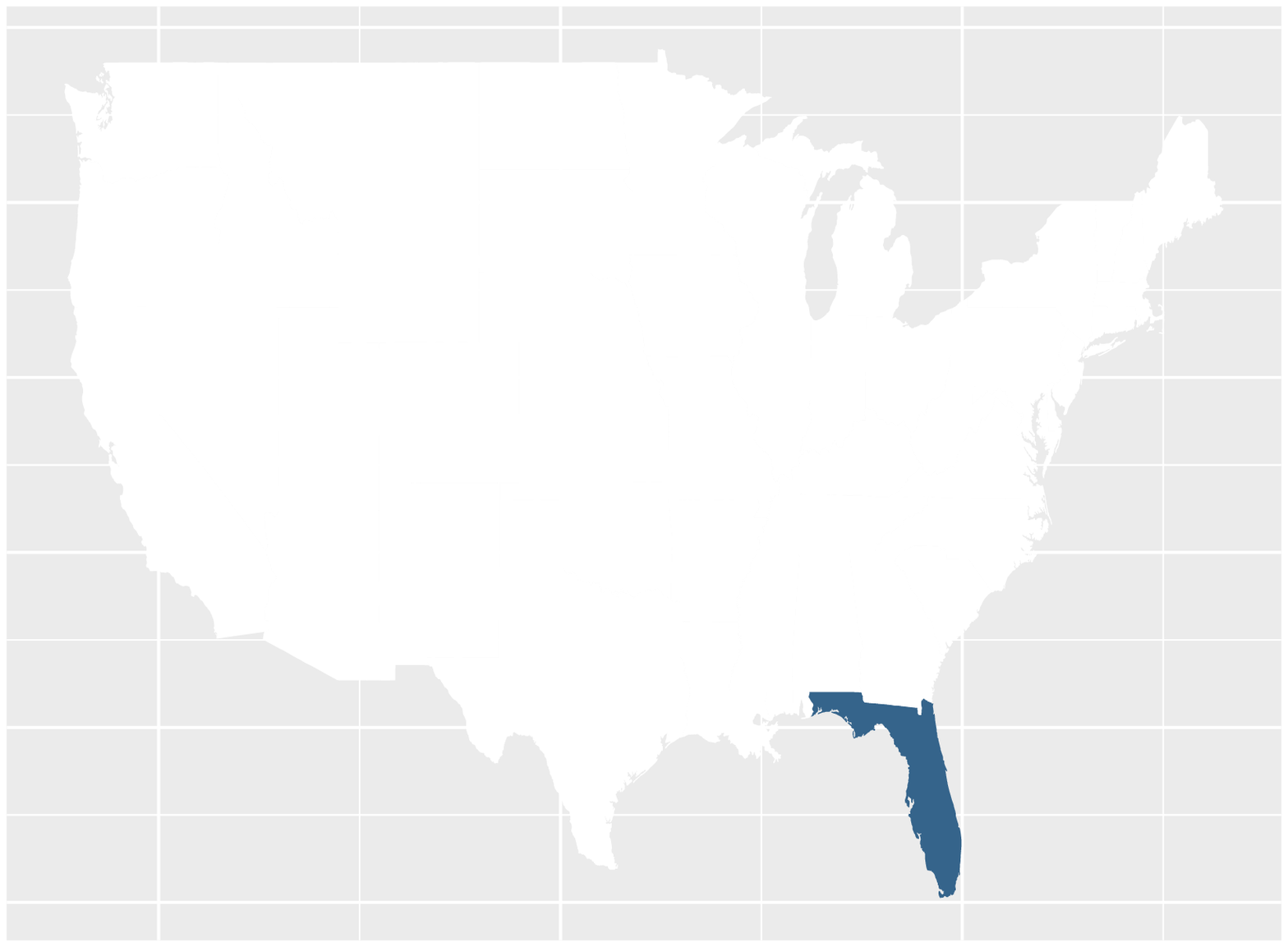} &
        \includegraphics[width=0.19\textwidth]{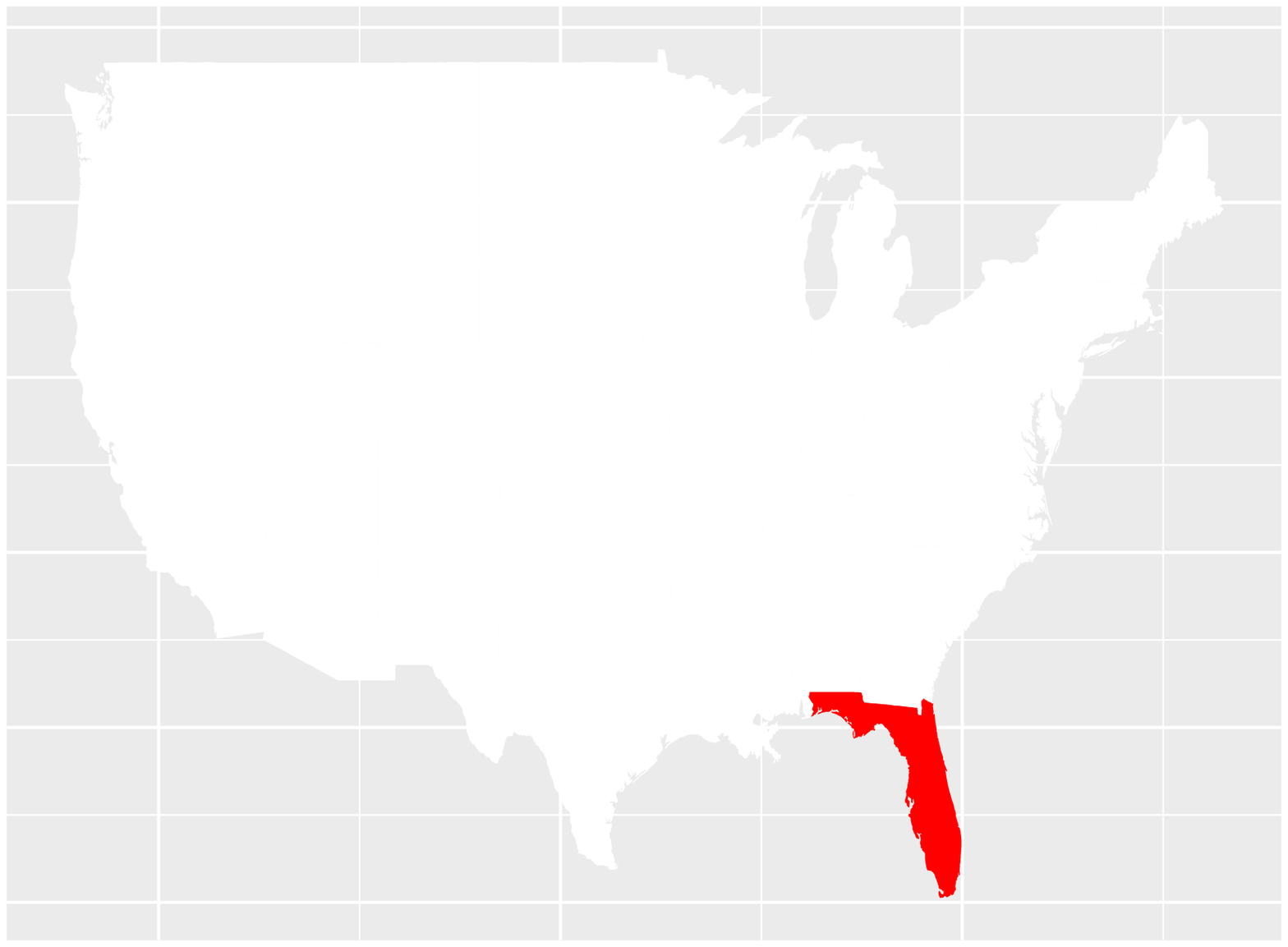}  &
        \includegraphics[width=0.19\textwidth]{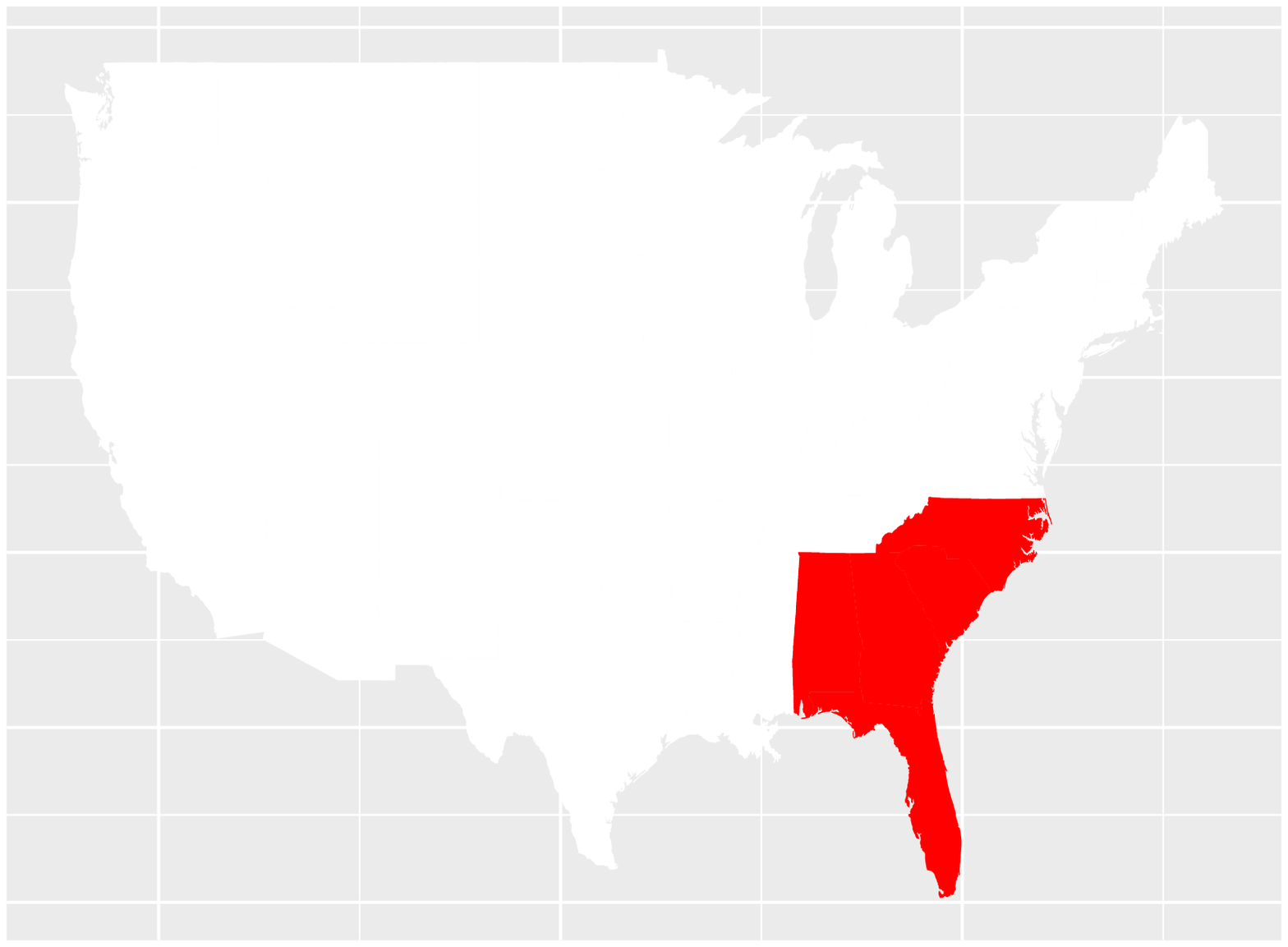} &
        \includegraphics[width=0.19\textwidth]{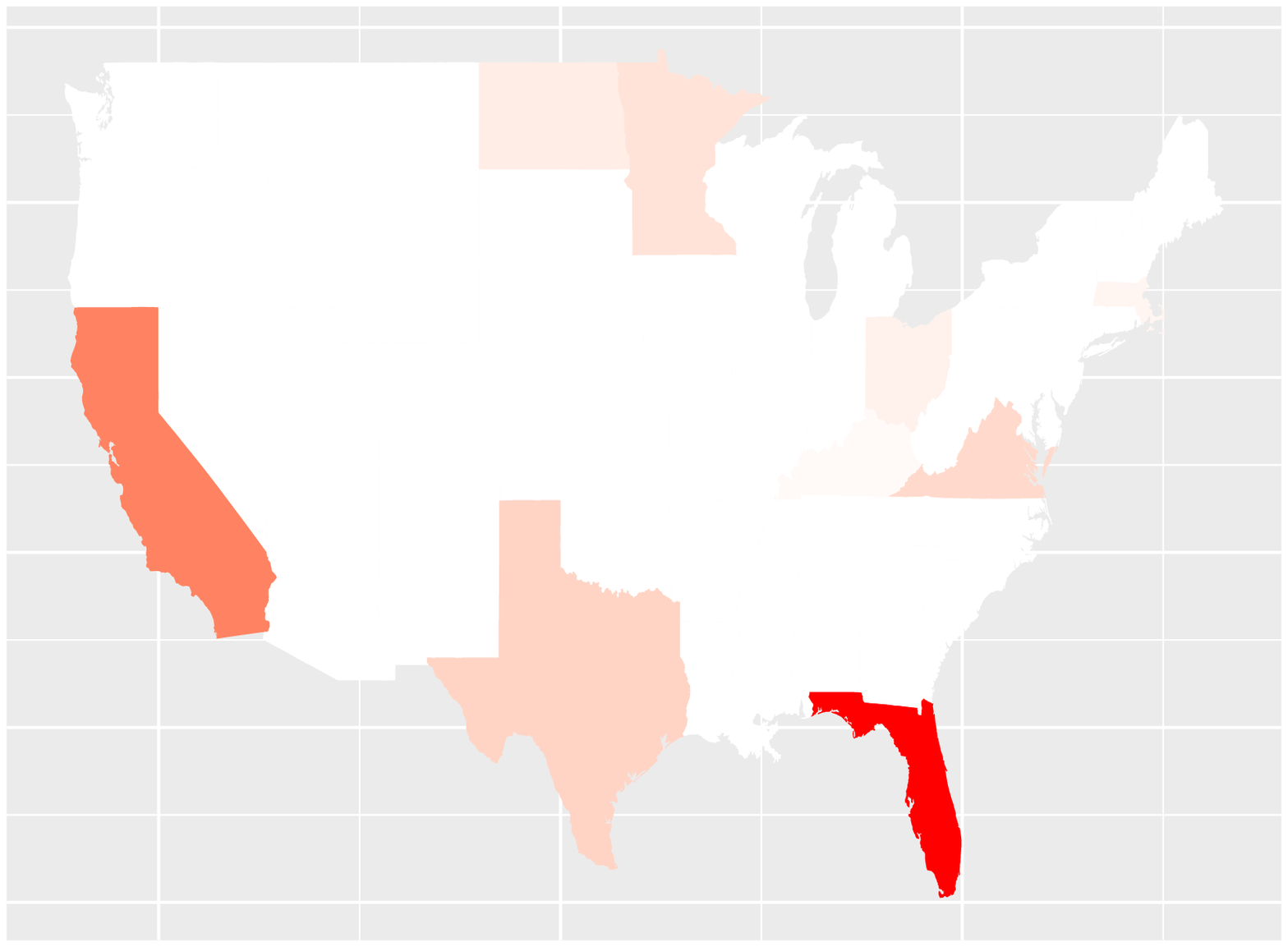} \\
        &
        (c.1) true &
        (c.2) PoSSTenD  &
        (c.3) NMC-scan-stat &
        (c.4) YPS-SSD \\
        disease 4&
        \includegraphics[width=0.19\textwidth]{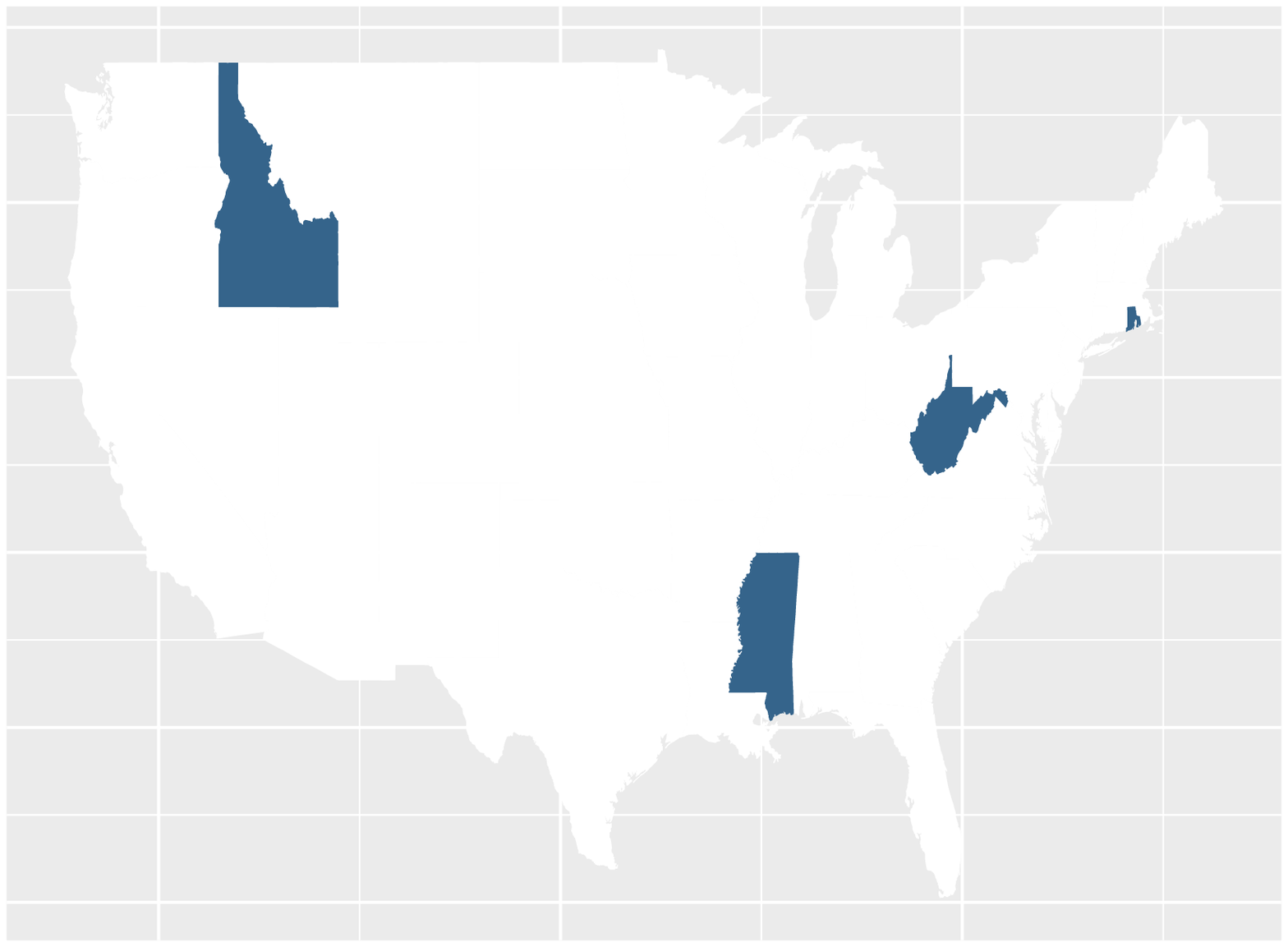} &
        \includegraphics[width=0.19\textwidth]{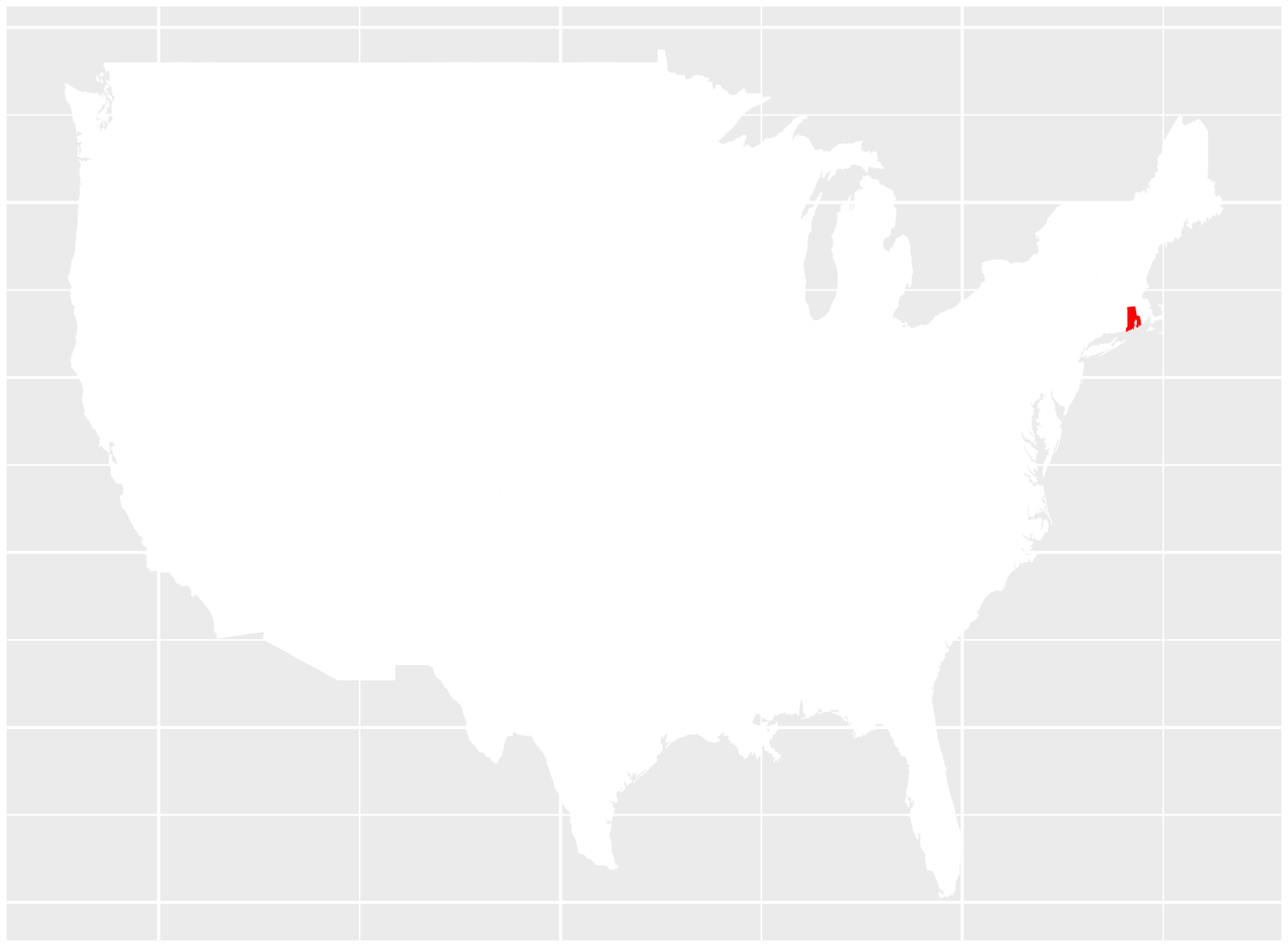}  &
        \includegraphics[width=0.19\textwidth]{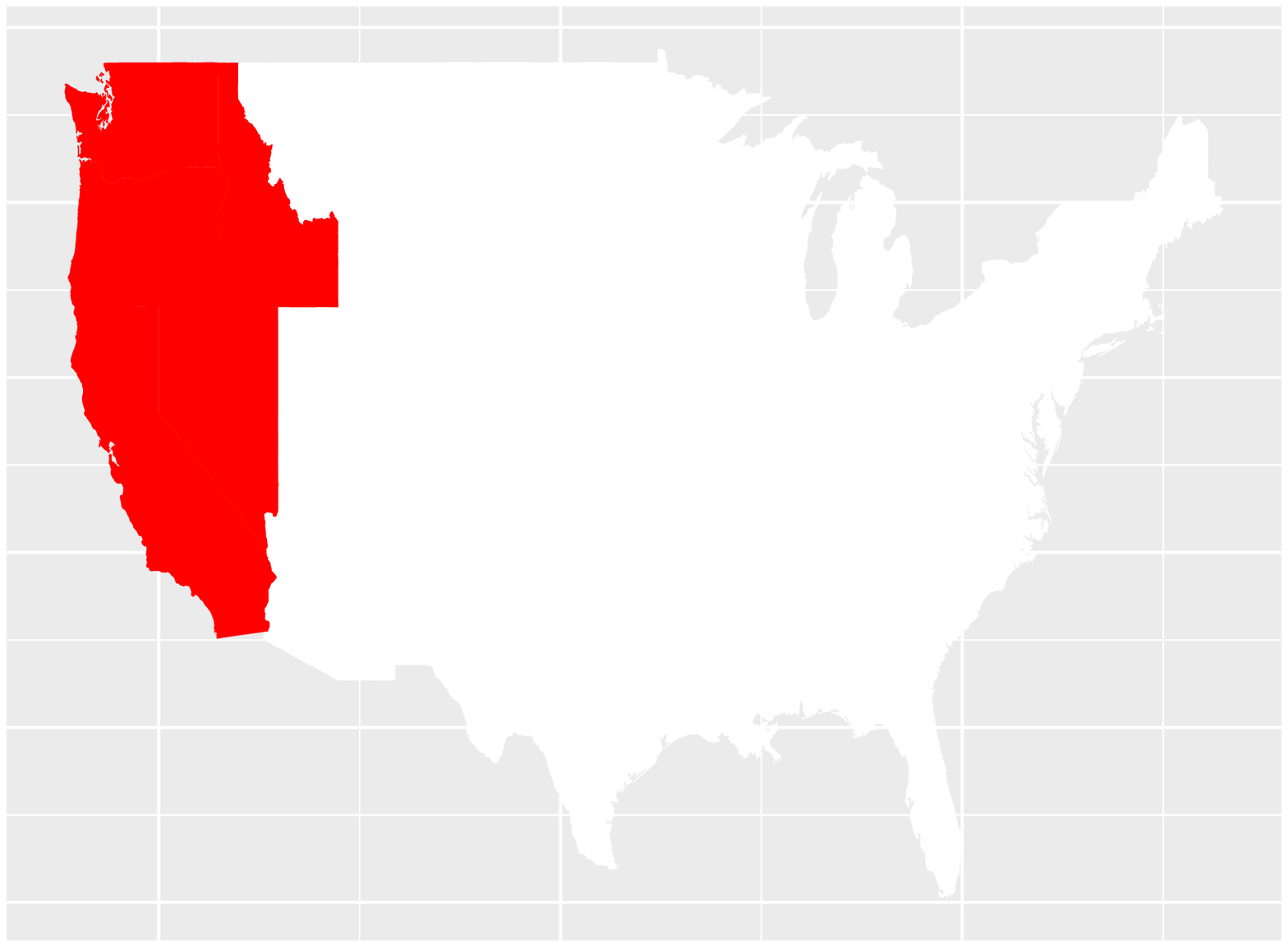} &
        \includegraphics[width=0.19\textwidth]{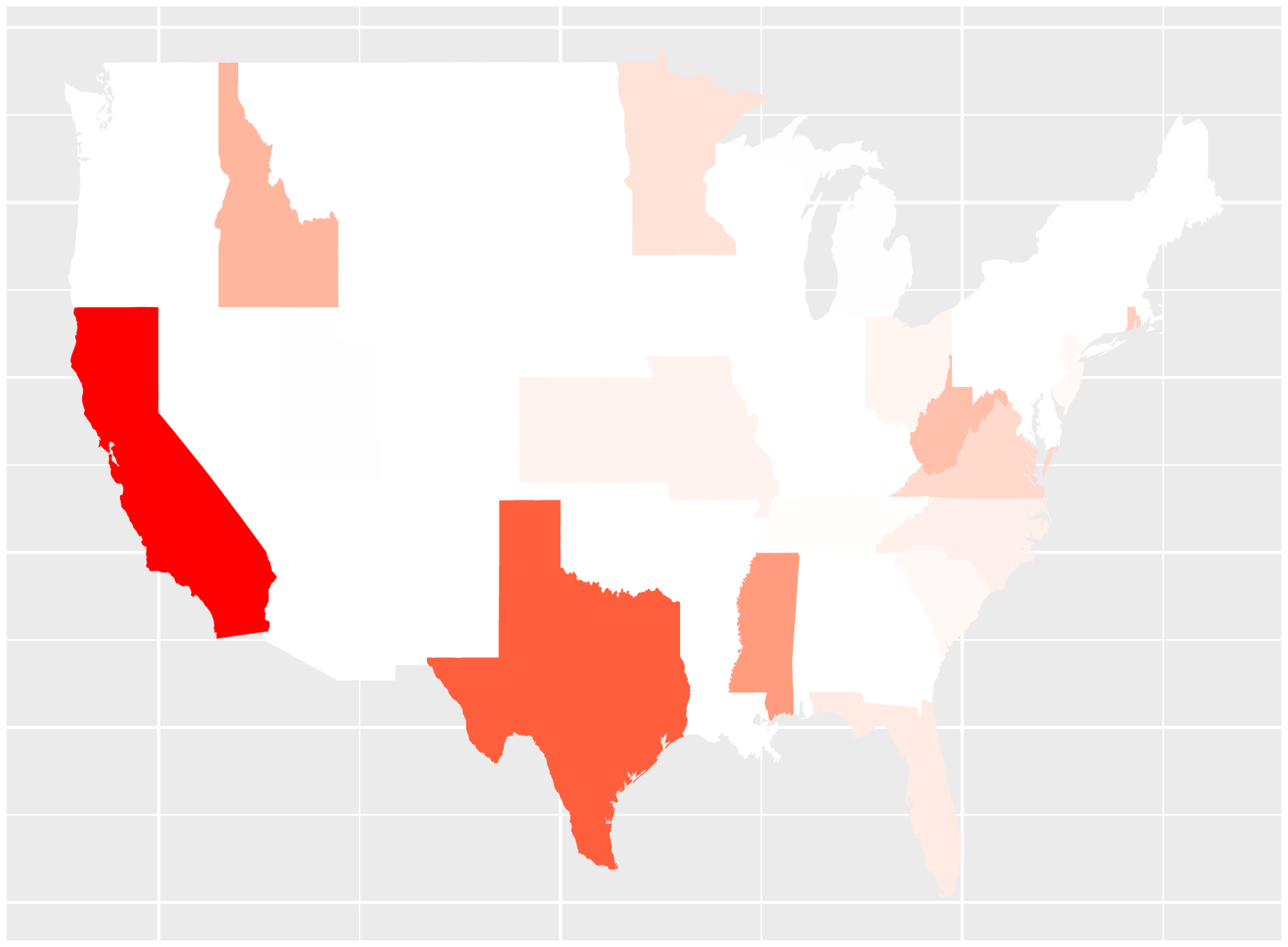} \\
        &
        (d.1) true &
        (d.2) PoSSTenD  &
        (d.3) NMC-scan-stat &
        (d.4) YPS-SSD \\
        disease 5&
        \includegraphics[width=0.19\textwidth]{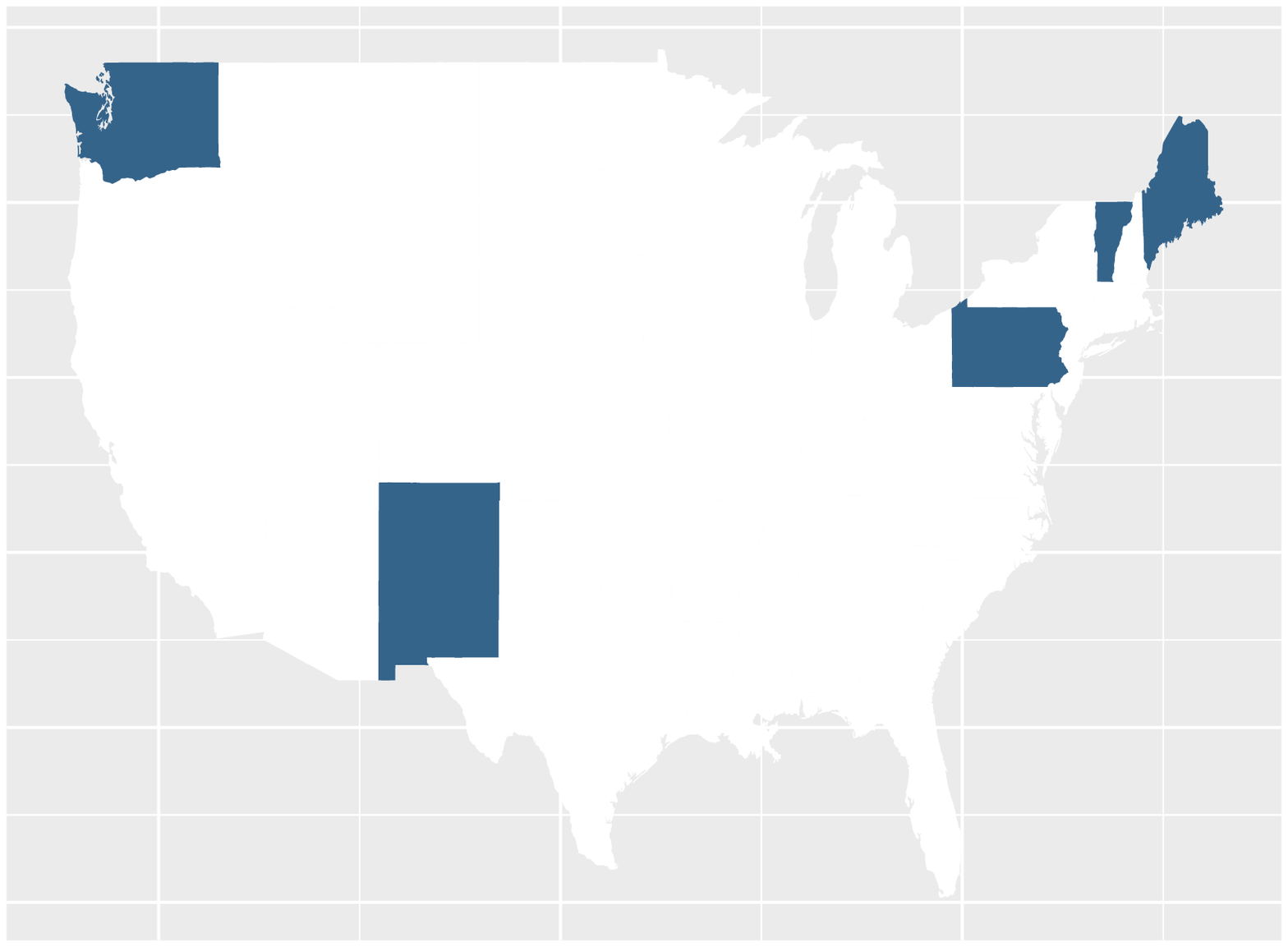} &
        \includegraphics[width=0.19\textwidth]{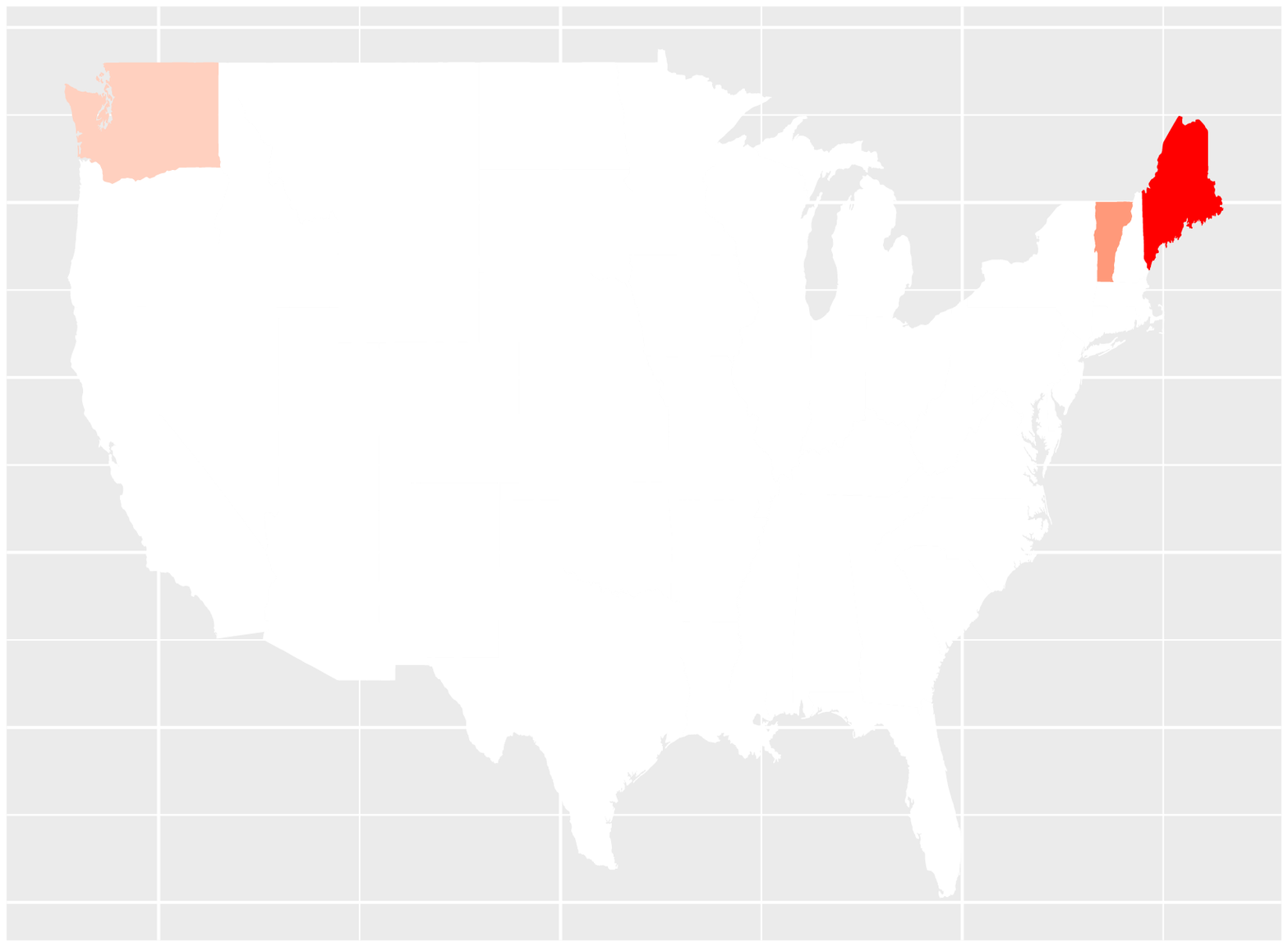}  &
        \includegraphics[width=0.19\textwidth]{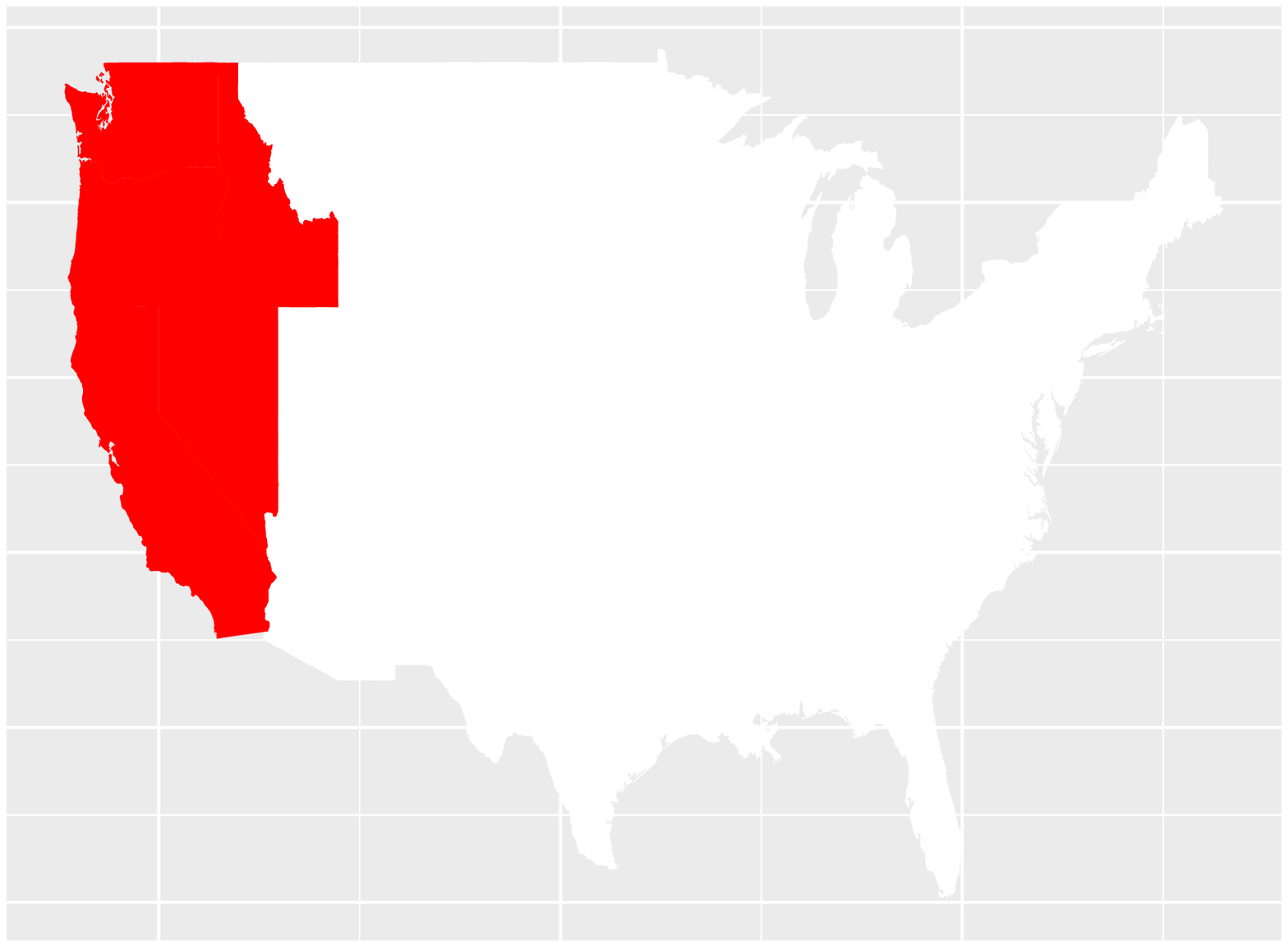}  &
        \includegraphics[width=0.19\textwidth]{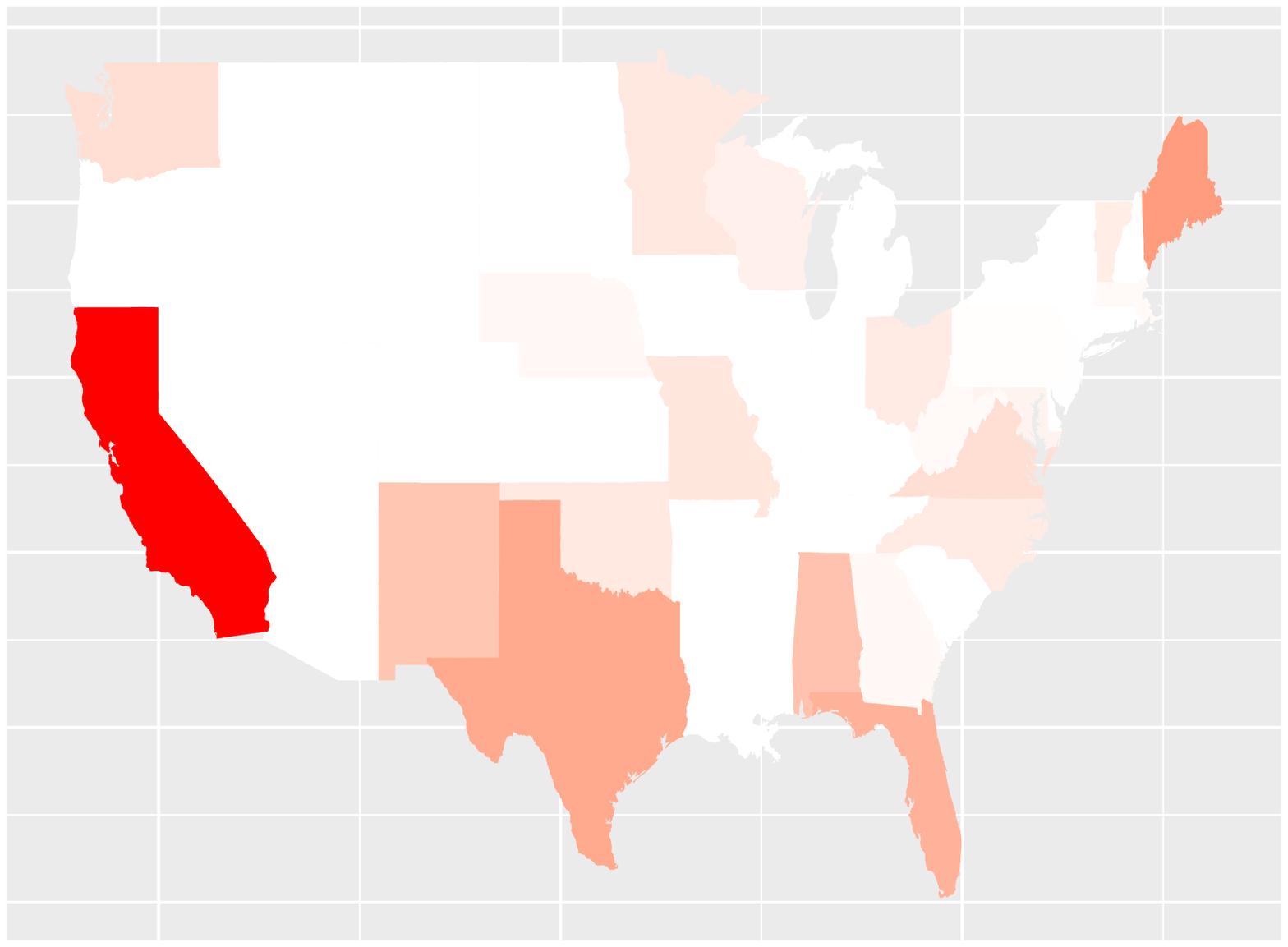} \\
        &
        (e.1) true &
        (e.2) PoSSTenD  &
        (e.3) NMC-scan-stat &
        (e.4) YPS-SSD \\
      \end{tabular}
      \caption{Comparison of true hot-spots and hot-spots detected by our PoSSTenD method and YPS-SSD method under the decreasing population size and large hot-spots $\delta = 0.2$
      \label{fig: simulation map}}
    \end{figure}

\subsubsection{Validation of Model Fitness}

    In this section, we demonstrate that our proposed PoSSTenD method leads to a reasonably well estimation for the global trend mean.
    The criterion we use is the Squared-Root of Mean Square Error (SMSE).
    Table \ref{table: sim -- fitness} shows the SMSE of the PoSSTenD, and YPS-SSD method.
    We do not report the SMSE of the other baseline methods (NMC-scan-stat, ZQ-Lasso, DBS-PCA, and $\text{T}^2$), becasue they cannot model the global trend mean.
    It is clear from Tables \ref{table: sim -- fitness} that, our proposed PoSSTenD method performs well in terms of the background fitness in both two scenarios.
    
    \begin{table}[htbp]
    \caption{Fitting error of our method and YPS-SSD (decreasing population size and positive hot-spots)
    \label{table: sim -- fitness}}
    \centering
    \begin{adjustbox}{max width=0.95\textwidth}
    \centering
    \begin{threeparttable}
      \begin{tabular}{c|cccccccc}
        \hline
        Fitting error &
        $\delta = 0.05$ &
        $\delta = 0.075$ &
        $\delta = 0.1$ &
        $\delta = 0.125$ &
        $\delta = 0.15$&
        $\delta = 0.175$ &
        $\delta = 0.2$    \\
        \hline
        &\multicolumn{7}{c}{population with increasing trend} \\
        \cline{2-8}
        PoSSTenD
            & $ 0.0259\times 10^5$
            & $0.0259 \times 10^5$
            & $0.0601\times 10^5 $
            & $0.0601\times 10^5 $
            & $0.0601\times 10^5 $
            & $ 0.0988\times 10^5 $
            & $ 0.0988 \times 10^5 $   \\
            & ($0.0150\times 10^3$)
            & ($0.0143 \times 10^3$)
            & ($0.0397\times 10^3 $)
            & ($0.0401\times 10^3 $)
            & ($0.0423\times 10^3 $)
            & ($0.0723\times 10^3$)
            & ($0.0702\times 10^3$)  \\
        YPS-SSD 
            & $ 1.3218 \times 10^5$
            & $1.3370\times 10^5$
            & $1.3520\times 10^5 $
            & $1.3698\times 10^5 $
            & $1.3899\times 10^5 $
            & $ 1.4098\times 10^5 $
            & $1.4282 \times 10^6 $   \\
            & ($1.1723 \times 10^3$)
            & ($1.6122 \times 10^3$)
            & ($2.0530\times 10^3 $)
            & ($2.6108\times 10^3 $)
            & ($3.0255\times 10^3 $)
            & ($3.7379\times 10^3$)
            & ($ 4.0804\times 10^3$)  \\
        \hline
        &\multicolumn{7}{c}{population with decreasing trend} \\
        \cline{2-8}
        PoSSTenD
            & $ 1.3202\times 10^4$
            & $1.3198 \times 10^4$
            & $1.3188 \times 10^4 $
            & $1.7187 \times 10^4 $
            & $1.7157\times 10^4 $
            & $ 2.1140 \times 10^4 $
            & $ 2.1091 \times 10^4 $   \\
            & ($65.6439$)
            & ($68.9465$)
            & ($65.4763 $)
            & ($0.0915 \times 10^3 $)
            & ($0.0934 \times 10^3 $)
            & ($0.1134 \times 10^3$)
            & ($ 0.1201\times 10^3$)  \\
        YPS-SSD & $ 4.7511\times 10^4$
            & $4.8117 \times 10^4$
            & $4.9009 \times 10^4 $
            & $5.0127\times 10^4 $
            & $5.1698\times 10^4 $
            & $ 5.3398 \times 10^4 $
            & $ 5.5376 \times 10^4 $   \\
            & ($745.1110$)
            & ($870.4066$)
            & ($ 963.7134 $)
            & ($1.1771 \times 10^3 $)
            & ($1.3612 \times 10^3 $)
            & ($1.5949 \times 10^3$)
            & ($1.8306 \times 10^3$)  \\
        \hline
      \end{tabular}
    \begin{tablenotes}
      \footnotesize
        \item[1] The above results are based on 1000 simulations
        \item[2] The generated population size is of order $10^{5}$.
    \end{tablenotes}
    \end{threeparttable}
    \end{adjustbox}
    \end{table}


\section{Case Study}
\label{sec: case study}

    In this section, we apply our proposed PoSSTenD method to the infectious disease dataset described in  Section \ref{sec: data description}.
    There are only two missing values, and we handle them by the mean imputation with a reasonable assumption that they are missing at random.
    If one encounters the missing-not-at-random case, one can use the methods in \cite{chen2019semipara,chen2018semiparainf,chen2021instrument, chen2018functional}. 
    After the missing data is addressed, we use PoSSTenD for hot-spots detection and localization and compare its performance with other benchmarks as in Section \ref{sec: compared benchmarks}.
    
    First, we compare the performance of the detection delay.
    For all the methods, we set the control limits so that the average run length to false alarm constraint $\text{ARL}_0 = 50$ via Monte Carlo simulation under the assumption that data from the first $15$ years are in control.
    For the setting of the parameters and the selection of basis, they are the same as that in Section \ref{sec: simulation}.
    For our proposed PoSSTenD method, we build a CUSUM control chart utilizing the test statistic in Section \ref{sec: hot-spots detection}, which is shown in Figure \ref{fig: case study control chart}.
    From this figure, we can see that the hot-spots are detected in 2017 by our proposed PoSSTenD method.
    For the benchmark methods for comparison, we also use
    YPS-SSD \cite[see][]{SSD},
    ZQ-Lasso \cite[see][]{zou2009multivariate},
    DBS-PCA \cite[see][]{PCA} and
    $\text{T}^2$ \cite[see][]{T2}
    to our motivating dataset and summarize the performance of the detection of a hot-spots in Table \ref{table: case study temporal detection}.
    Note that the value in Table \ref{table: case study temporal detection} is the first year that raises alarm, i.e, $\min_{t=1986,\ldots,2014}\{t: W_{t}^{+}> L\}$.
    Our proposed PoSSTenD method, YPS-SSD, and ZQ-Lasso all raise alarms of hot-spots in the year 2017, while other benchmarks fail to detect any hot-spots (we do not represent the hot-spots year of NMC-scan-stat, as it does not report the hot-spots).
    While nobody knows the ground truth when hot-spots occur in this real-world dataset, our numerical simulation experiences suggest that year 2017 is likely a hot-spot.

    \begin{table}[htbp]
    \centering
    \begin{adjustbox}{max width=0.95\textwidth}
    \begin{tabular}{c|cccccc}
        \hline
        methods & PoSSTenD & NMC-scan-stat & YPS-SSD & ZQ-Lasso & DBS-PCA & $\text{T}^2$ \\
        \cline{2-5}
        \hline
        Year when an alarm is raised & 2017 & - & 2017 & 2017 & None & None\\
        \hline
    \end{tabular}
    \end{adjustbox}
    \caption{Detection of change-point year in infectious rate dataset.
             The label ``Year when an alarm is raised'' is first year that raises alarm, i.e, $\min_{t=2008,\ldots,2018}\{t: W_{t}^{+}> L\}$,  where $W_{t}^{+}$ is the CUSUM statistics defined in equation \eqref{equ: Wt}, and $L$ is control limits to achieve the average run length to false alarm constraint $\text{ARL}_0= 50$ via Monte Carlo simulation under the assumption that data from the first 15 years are in control.
    \label{table: case study temporal detection}
    }
    \end{table}

    \begin{figure}
      \centering
      \includegraphics[width=0.4 \textwidth]{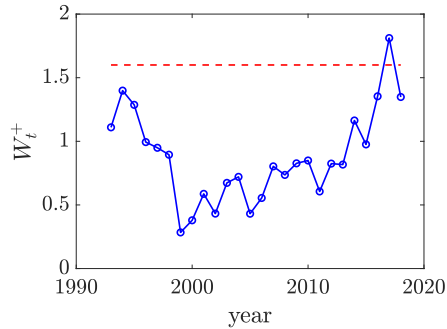}
      \caption{Control chart of our proposed method}
      \label{fig: case study control chart}
    \end{figure}

    Next, after the temporal detection of hot-spots, we need to further localize the hot-spots in the sense that we need to find out which state and which type of disease may lead to the occurrence of the temporal hot-spot in 2017.
    Because for the baseline methods, DBS-PCA and $\text{T}^2$  can only realize the detection of temporal changes, and ZQ-Lasso declares all states as hot-spots, we only show the localization of spatial hot-spots by our proposed PoSSTenD method, NMC-scan-stat and YPS-SSD method in Figure \ref{fig: case study hot-spots map}.
    In Figure \ref{fig: case study hot-spots map}, one row is one type of disease and we select three types of diseases as representatives, i.e., mumps, syphilis, and pertussis.
    The first column is the raw data of the number of infected people in 2017, the second, third, and fourth columns are the hot-spots localized by our proposed PoSSTenD method, NMC-scan-stat, and YPS-SSD method, respectively.
    It can be seen from Figure \ref{fig: case study hot-spots map} that, our proposed PoSSTenD method realizes more sparse hot-spots localization.
    Given the simulation results in Section \ref{sec: simulation}, our proposed PoSSTenD method has very high precision, recall, and F measure, we can declare that these states are highly likely to be hot-spots.
    For NMC-scan-stat, it trends to localize clustered hot-spots, which has a very low recall.
    For YPS-SSD, since it has relative lower precision and recall than our method, there are more false-alarm than our method.
    This is reasonable for us to conclude that, our method has better performance in hot-spots localization than the selected benchmarks.

    \begin{figure}[t]
      \centering
      \begin{tabular}{ccccc}
      Mumps&
      \includegraphics[width=0.18\textwidth]{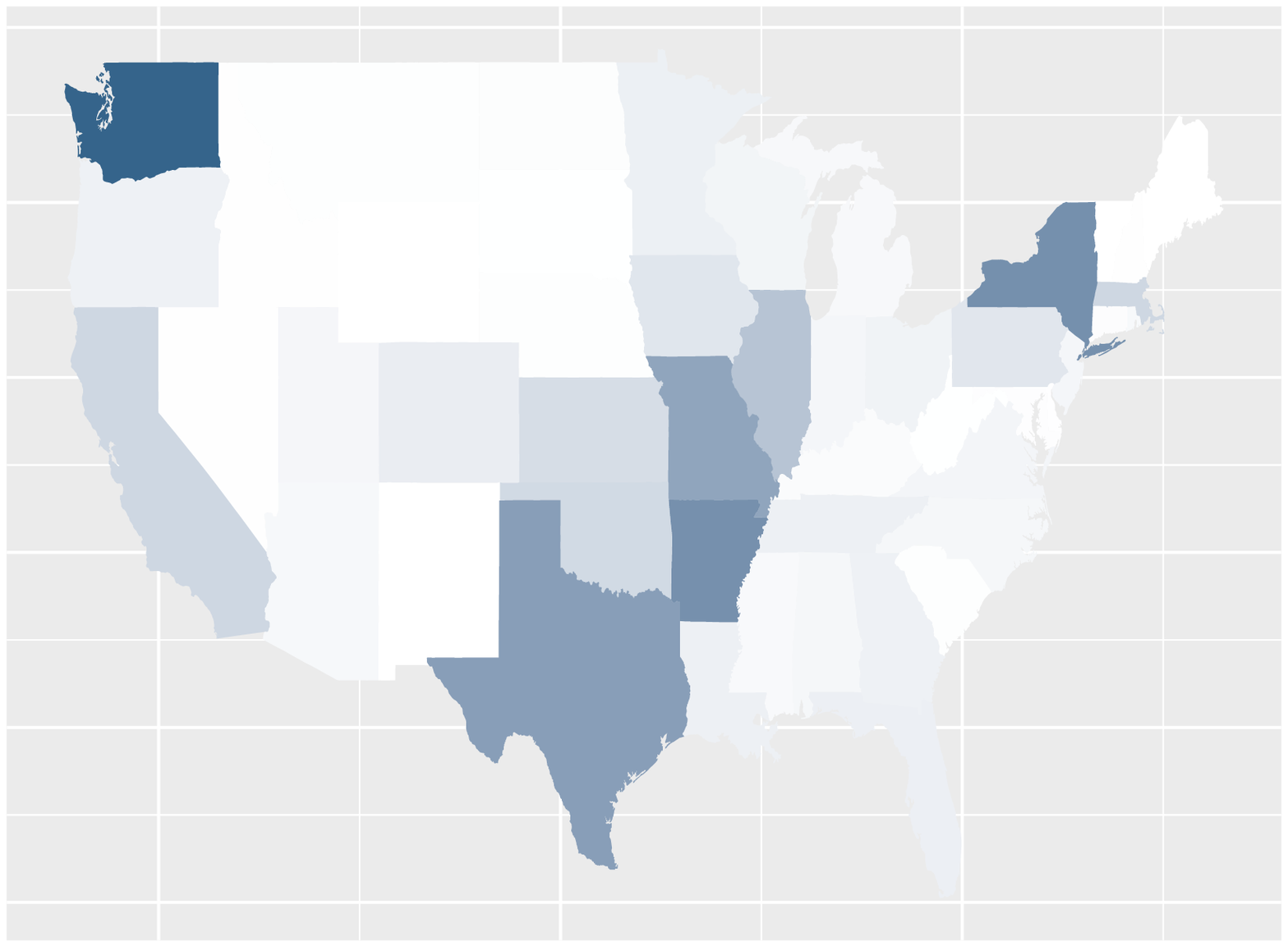} &
      \includegraphics[width=0.18\textwidth]{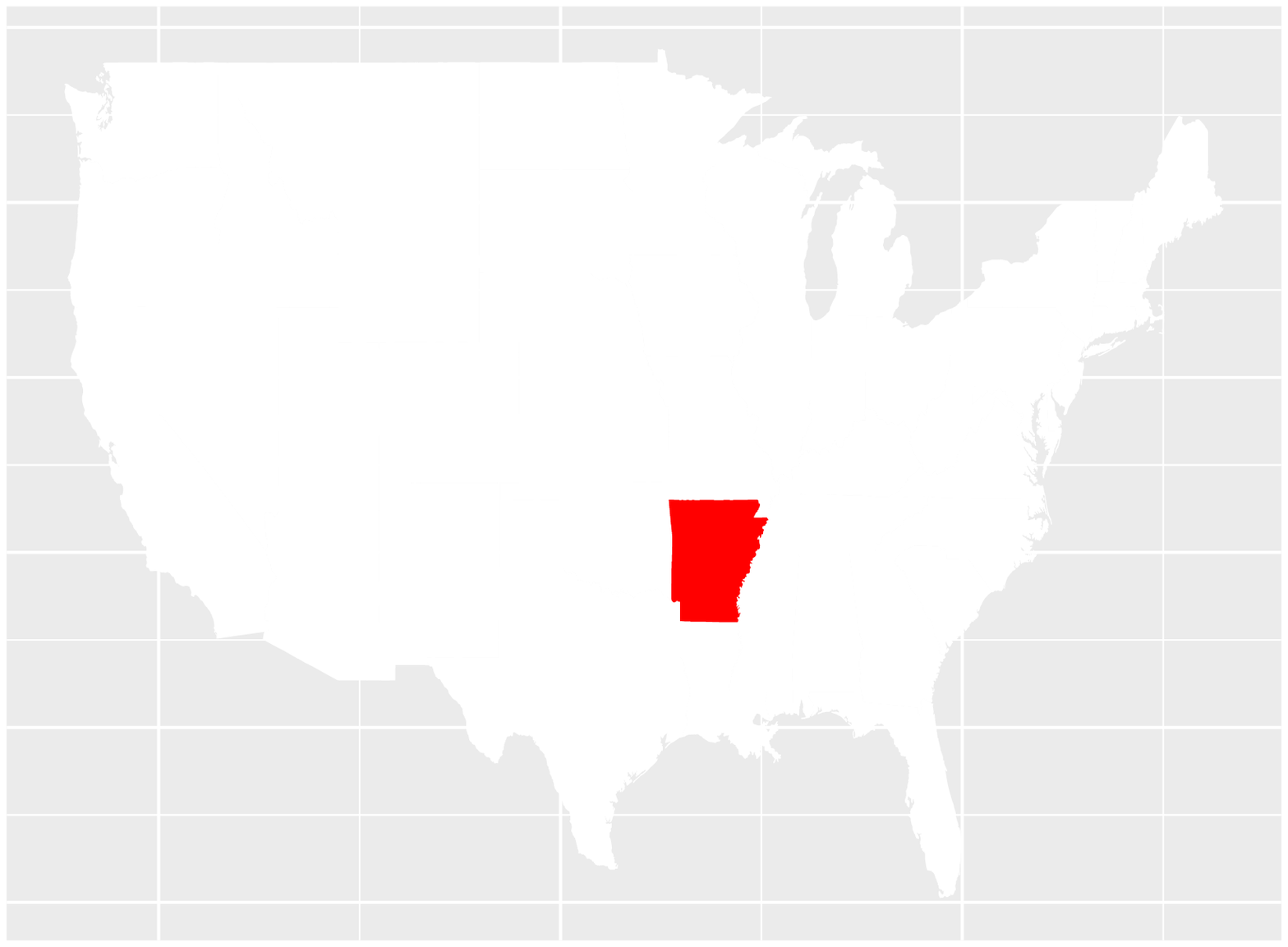} &
      \includegraphics[width=0.18\textwidth]{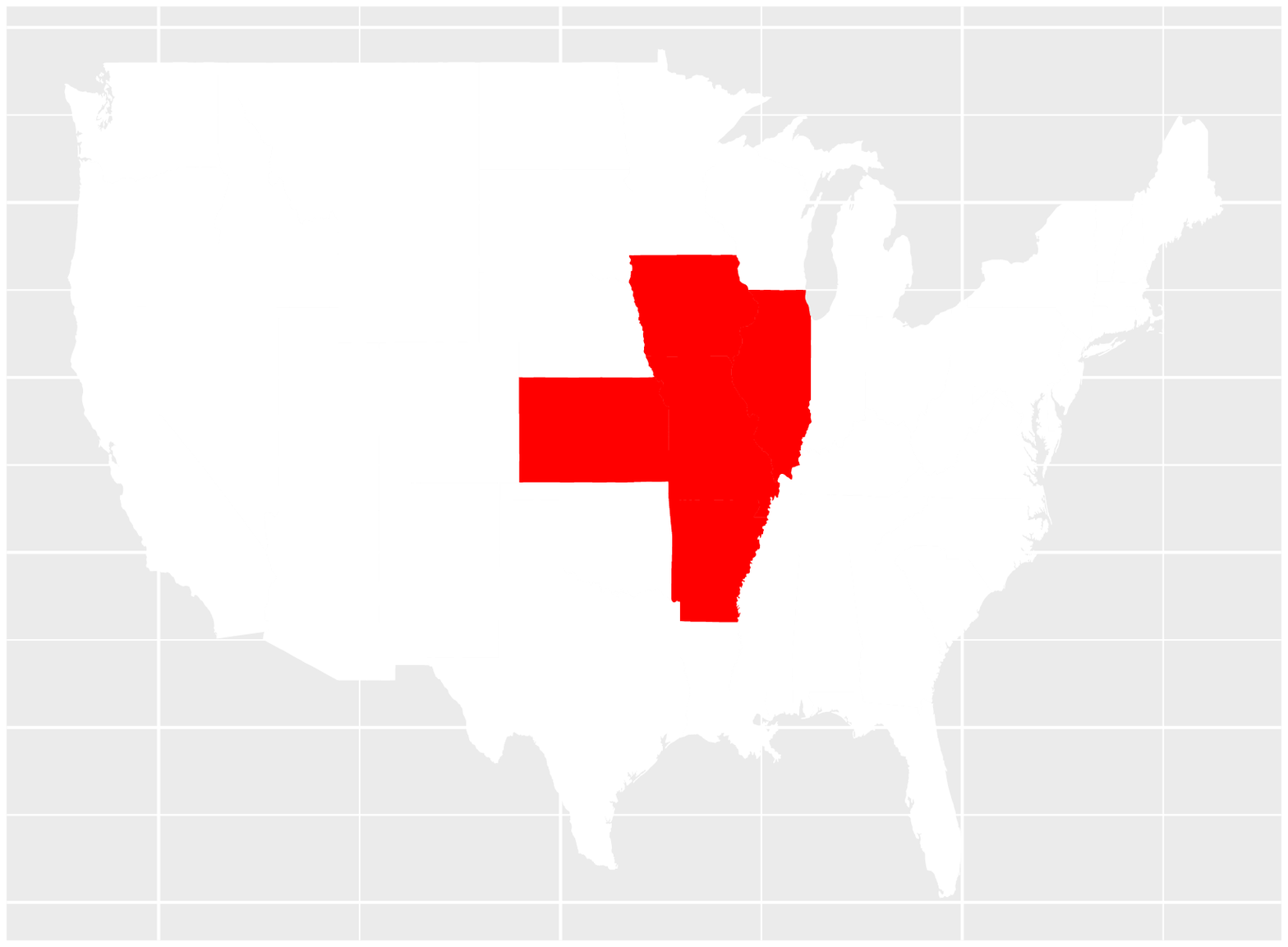} &
      \includegraphics[width=0.18\textwidth]{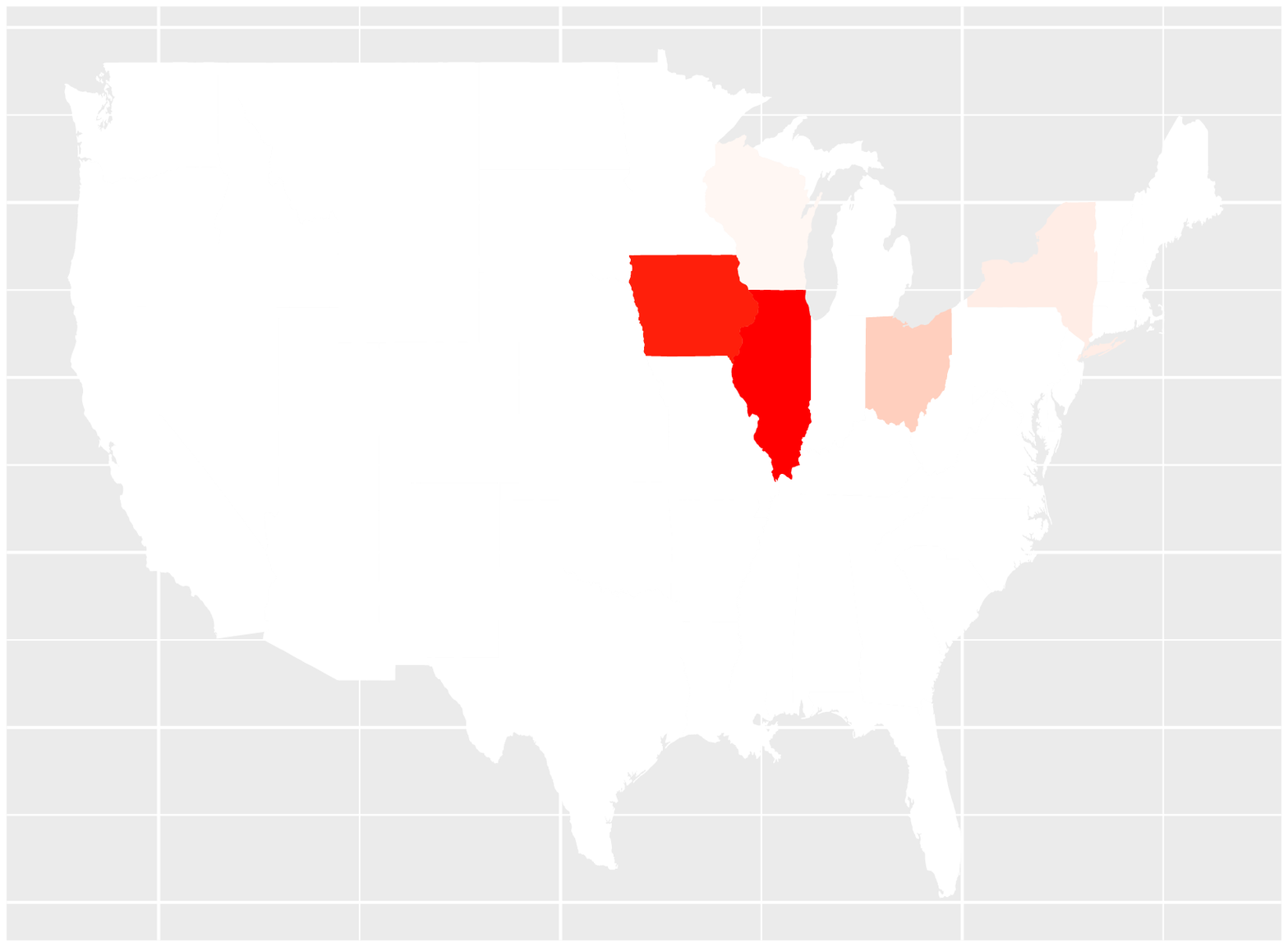} \\
      &(a.1)  2017 raw data & (a.2) PoSSTend & (a.3) NMC-scan-stat & (a.4) YPS-SSD \\
      Syphilis&
      \includegraphics[width=0.18\textwidth]{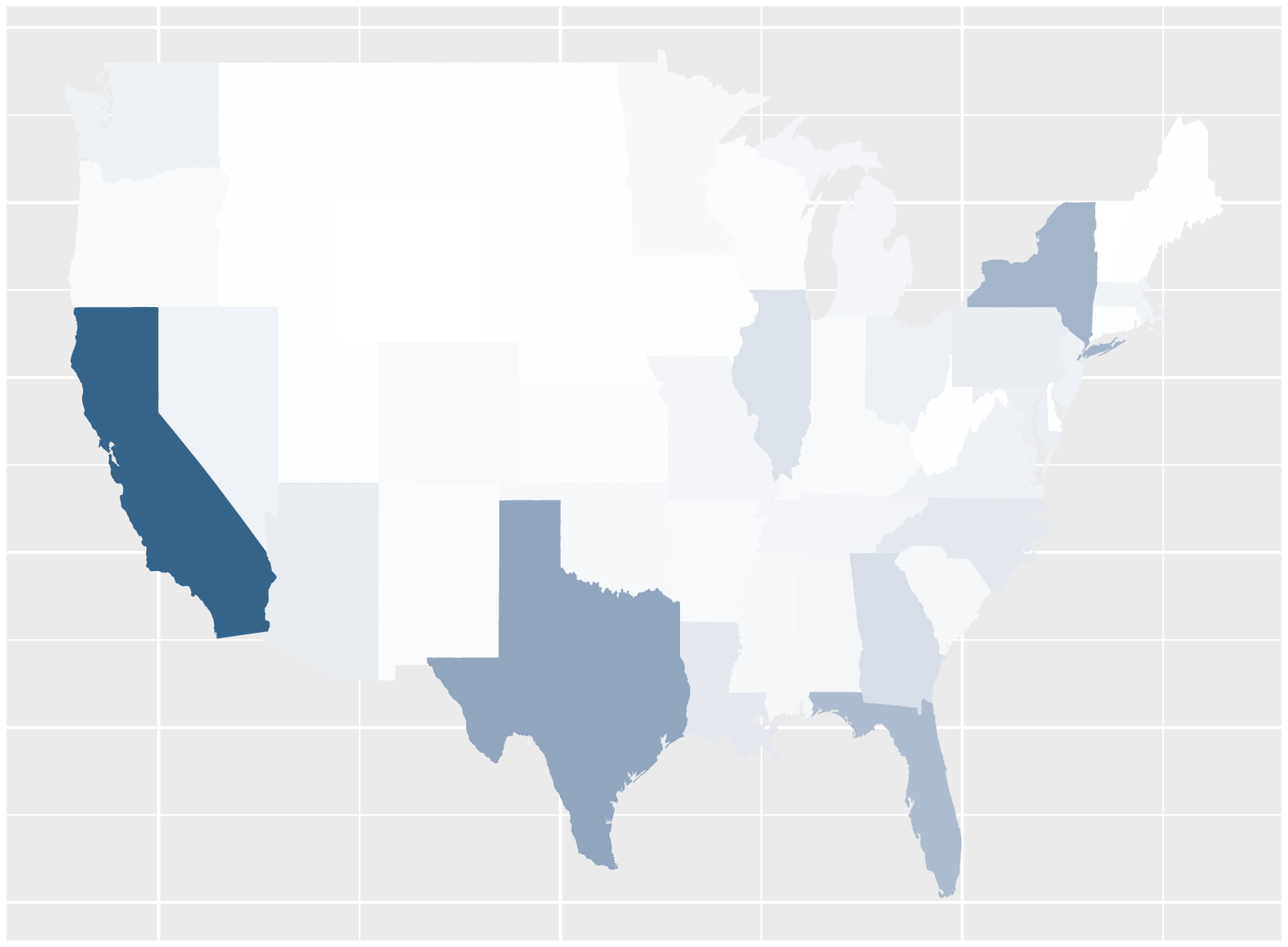} &
      \includegraphics[width=0.18\textwidth]{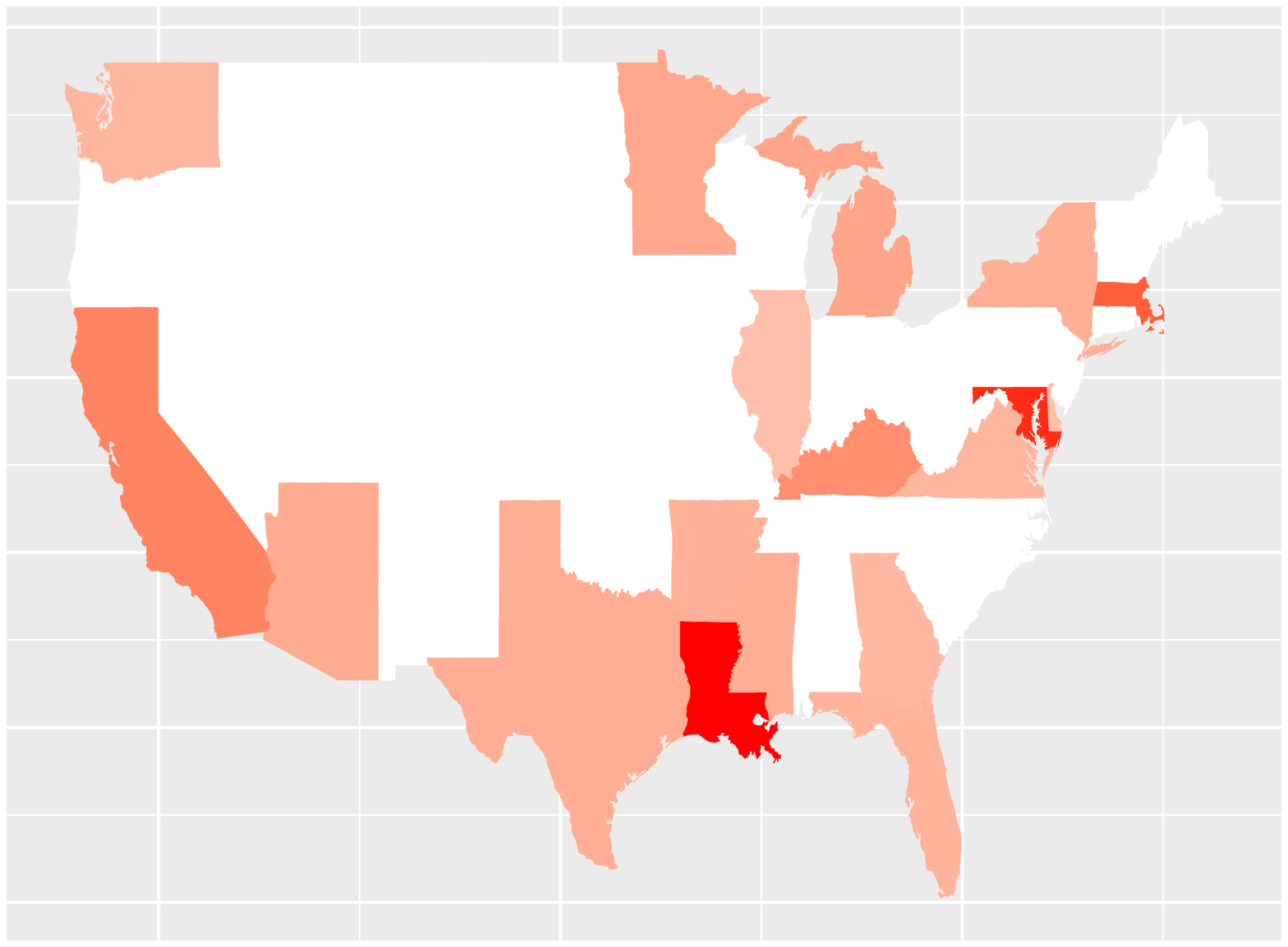} &
      \includegraphics[width=0.18\textwidth]{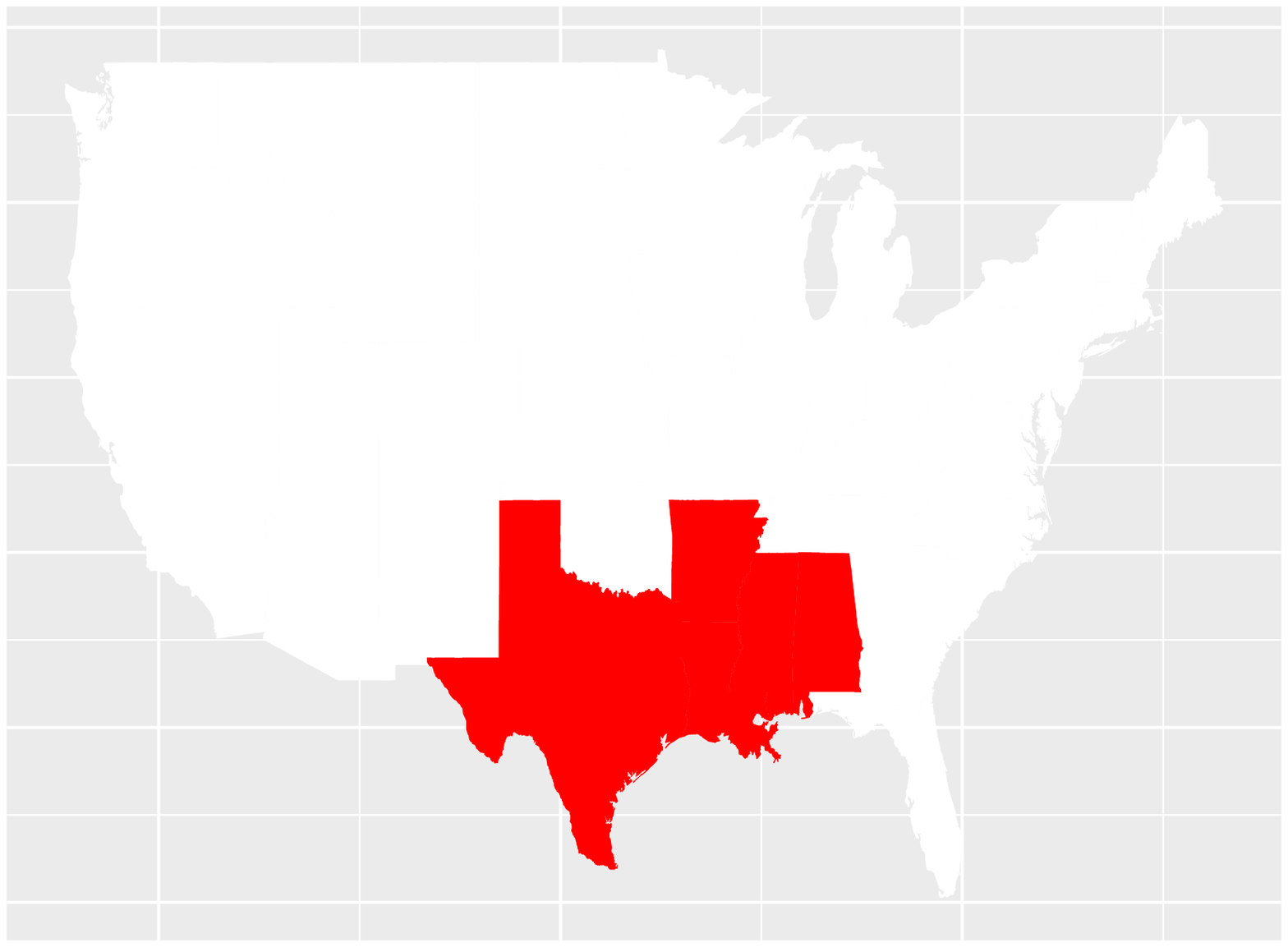} &
      \includegraphics[width=0.18\textwidth]{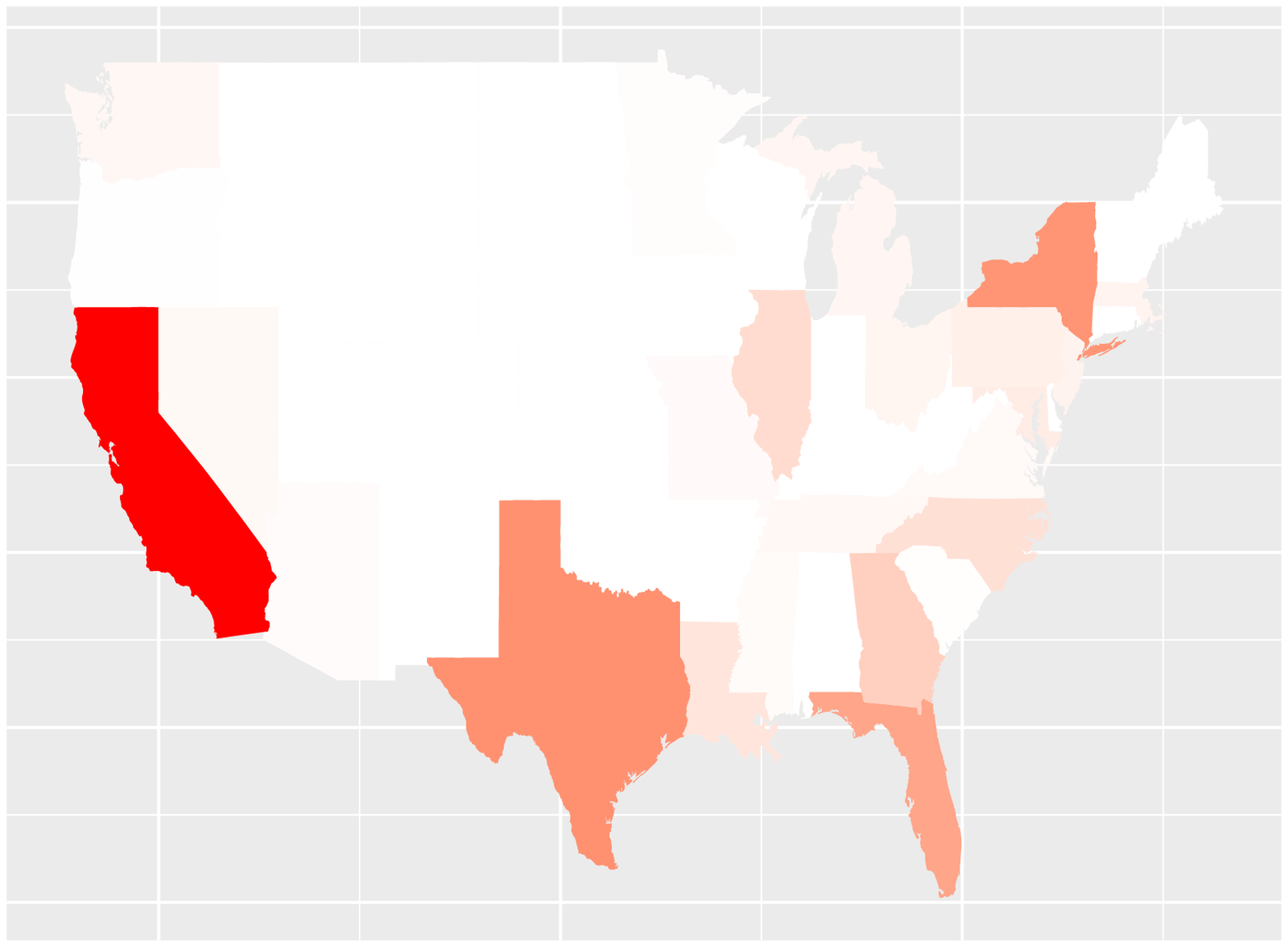} \\
      &(b.1)  2017 raw data & (b.2) PoSSTend & (b.3) NMC-scan-stat & (b.4) YPS-SSD \\
      Pertussis&
      \includegraphics[width=0.18\textwidth]{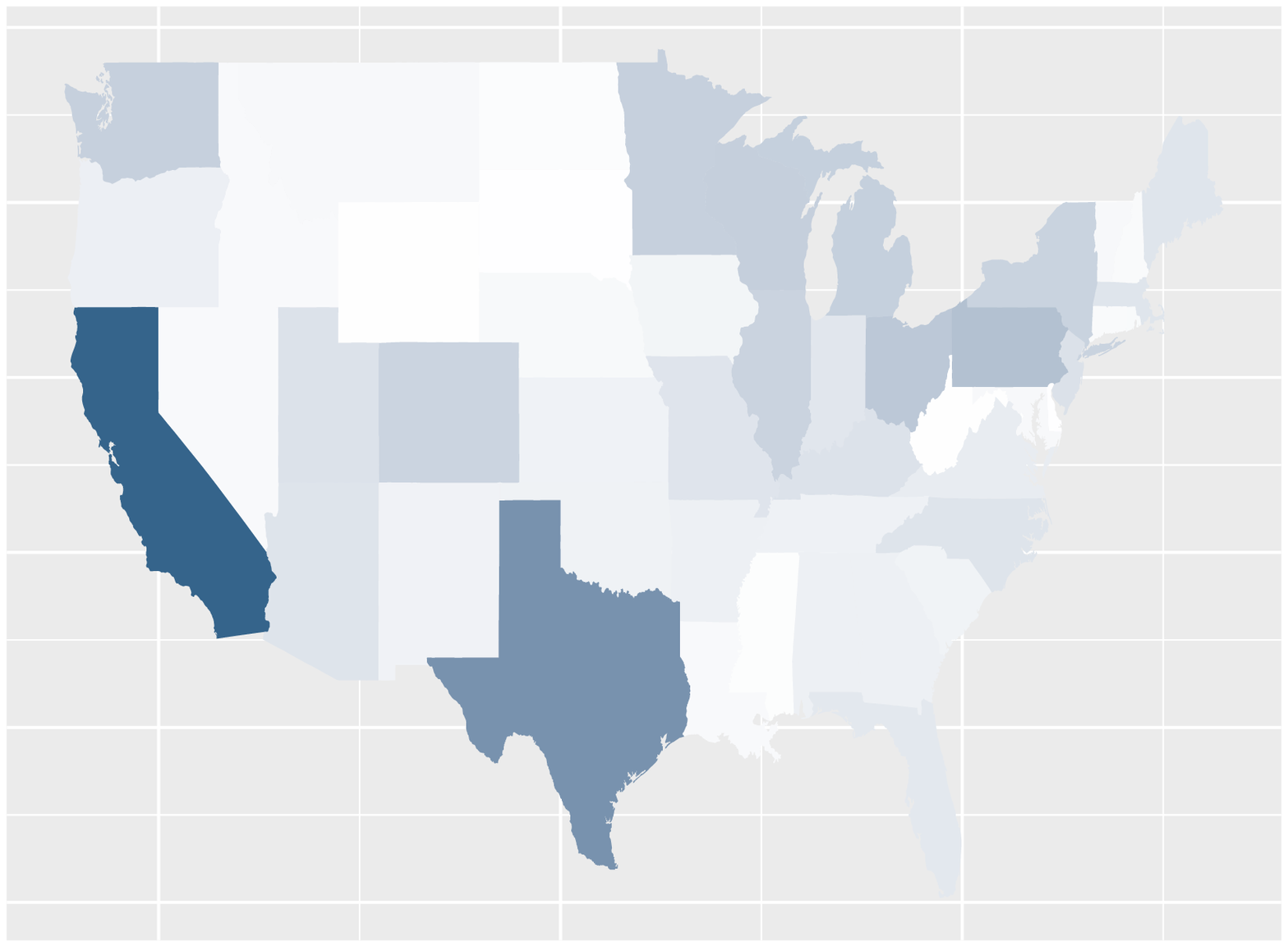} &
      \includegraphics[width=0.18\textwidth]{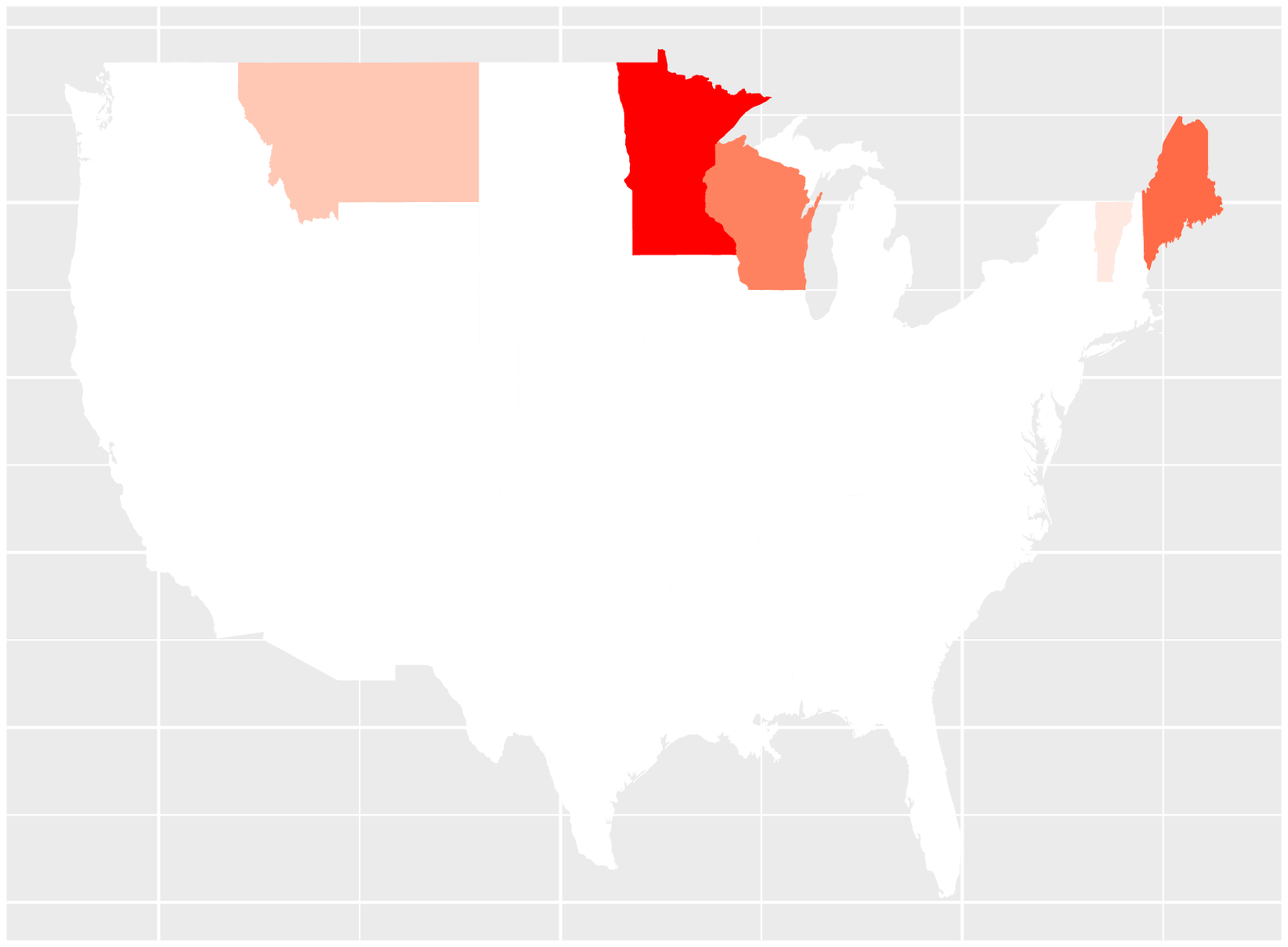} &
      \includegraphics[width=0.18\textwidth]{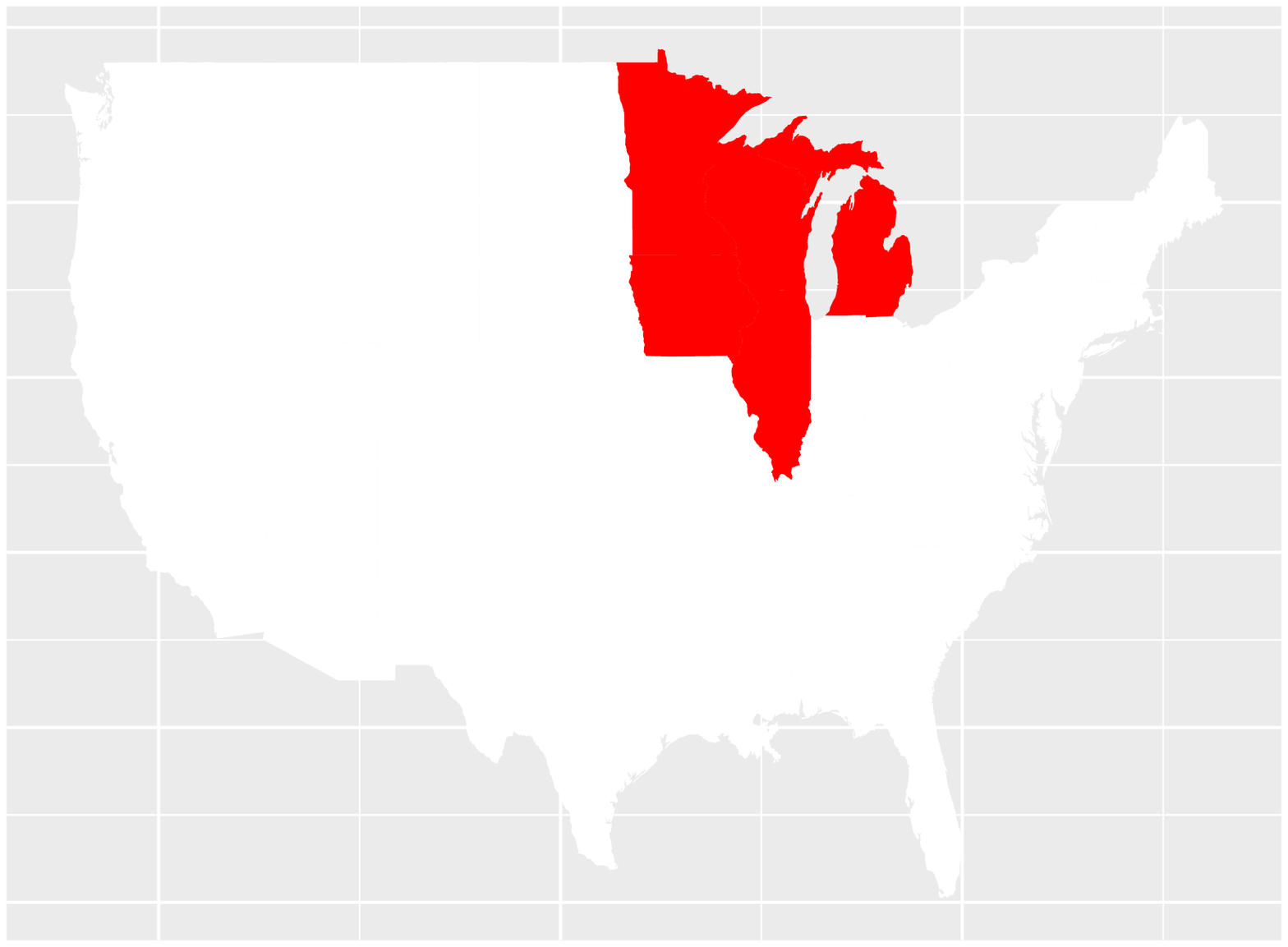} &
      \includegraphics[width=0.18\textwidth]{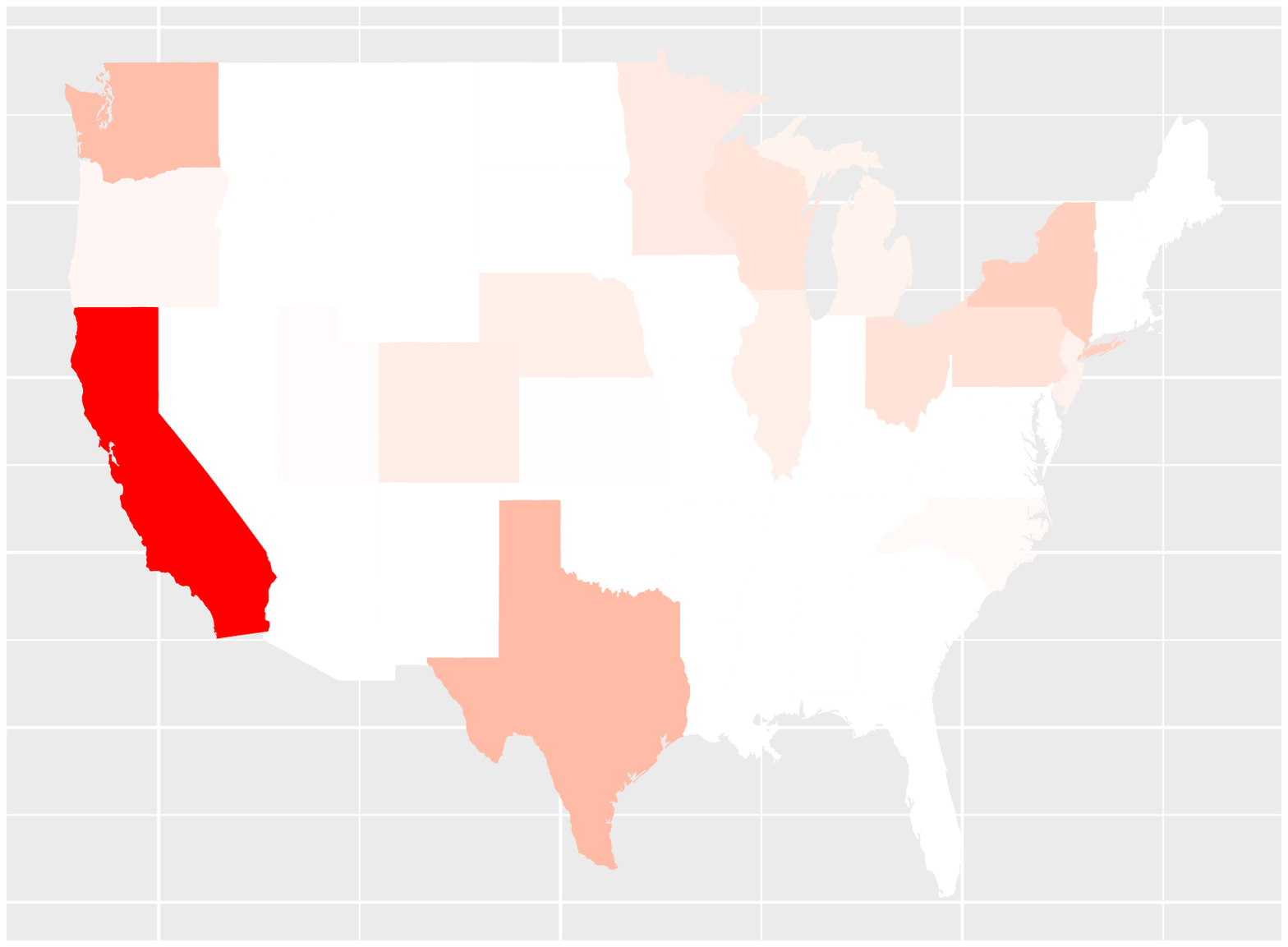} \\
      &(c.1) 2017 raw data  & (c.2) PoSSTend & (c.3) NMC-scan-stat & (c.4) YPS-SSD \\
      \end{tabular}
      \caption{
      hot-spots Detection result of mumps, syphilis and pertussis in 2017 by our proposed PoSSTenD method (second column), NMC-scan-stat (third column) and YPS-SSD method (last column).
      The first column is the raw data of the number of infected people of mumps, syphilis and pertussis in 2017. }
      \label{fig: case study hot-spots map}
    \end{figure}

\section*{Acknowledgments}
The authors are grateful to the Editor, the Associate Editor and two anonymous reviewers for their constructive comments that greatly improved the quality and presentation of this article.  
This project is partially supported by the Transdisciplinary Research Institute for Advancing Data Science (TRIAD), \url{http://triad.gatech.edu}, which is a part of the TRIPODS program at NSF and locates at Georgia Tech, enabled by the NSF grant CCF-1740776.
Huo is supported in part by NSF grant DMS-2015363.
Mei is supported in part by NSF grant DMS-2015405.

\newpage

\bibliographystyle{apalike}
\bibliography{referenceTensorPoisson}

\appendix
\section{Proof for Proposition \ref{prop: alg IRLS}}
\label{proof: IRLS}
\begin{proof}
  According to Newton-Raphson method, we update from $\mytheta_m^{(k+1)}$ to $\mytheta_m^{(k)}$ as follows:
  \begin{eqnarray}
  \label{equ: update theta m}
  \mytheta_m^{(k+1)}
  =
  \mytheta_m^{(k)}
  -
  \left[
    \frac{\partial^2}{\partial \mytheta_m \partial \mytheta_m^\top}
    F \left( \mytheta_m, \mytheta_h \right)
  \right]^{-1}
  \frac{\partial}{\partial \mytheta_m}
  F\left( \mytheta_m, \mytheta_h \right),
  \end{eqnarray}
  where we have
  \begin{eqnarray*}
  \frac{\partial}{\partial \mytheta_m}
  F\left( \mytheta_m, \mytheta_h \right)
  & = &
  \sum_{i=1}^{n}
  - y_i \myx_i
  +
  n_i e^{\left( \myx_i^\top \mytheta_m +  z_i^\top \mytheta_h \right)} \myx_i
  \triangleq
  -\myX^\top \myy + \myX^\top \mygamma \\
  \frac{\partial^2}{\partial \mytheta_m \partial \mytheta_m^\top}
  F\left( \mytheta_m, \mytheta_h \right)
  & = &
  \sum_{i=1}^{n}
   n_i e^{\left( \myx_i^\top \mytheta_m +  \myz_i^\top \mytheta_h \right)} \myx_i \myx_i^\top
  \triangleq
  \myX^\top \myW \myX \\
  \frac{\partial}{\partial \mytheta_h}
  F\left( \mytheta_m, \mytheta_h \right)
  & = &
  \sum_{i=1}^{n}
  - \frac{1}{N_i} y_i \myz_i + e^{\left( \myx_i^\top \mytheta_m +  \myz_i^\top \mytheta_h \right)} \myz_i
  \triangleq
  -\myZ^\top\myy + \myZ^\top \myeta,
  \end{eqnarray*}
  where the $i$-th entry of vector $\mygamma$ is
  $
    n_i e^{\left( \myx_i^\top \mytheta_m^{(k-1)} +  \myz_i^\top \mytheta_h^{(k-1)} \right)}
  $
  and matrix $\myW$ is a diagonal matrix whose $(i,i)$-th entry is
  $
    n_i e^{\left( \myx_i^\top \mytheta_m^{(k-1)} +  \myz_i^\top \mytheta_h^{(k-1)} \right)}.
  $
  So we can rewrite equation \eqref{equ: update theta m} as follows:
  \begin{eqnarray*}
    \mytheta_m^{(k+1)}
    & = &
    \mytheta_m^{(k)}
    -
    \left[
      \frac{\partial^2}{\partial \mytheta_m \partial \mytheta_m^\top}
      F\left( \mytheta_m, \mytheta_h \right)
    \right]^{-1}
    \frac{\partial}{\partial \mytheta_m}
    F\left( \mytheta_m, \mytheta_h \right) \\
    & = &
    \mytheta_m^{(k)}
    +
    \left( \myX^\top \myW \myX \right)^{-1} \myX^\top\left( \myy - \mygamma  \right) \\
    & = &
    \left( \myX^\top \myW \myX \right)^{-1}
    \left( \myX^\top \myW \myX \right)
    \mytheta_m^{(k)}
    +
    \left( \myX^\top \myW \myX \right)^{-1}
    \myX^\top
    \left( \myy - \mygamma  \right) \\
    & = &
    \left( \myX^\top \myW \myX \right)^{-1}
    \myX^\top \myW
    \left[
      \myX \mytheta_m^{(k)} + \myW^{-1} \left( \myy - \mygamma \right)
    \right] \\
    & \triangleq &
    \left( \myX^\top \myW \myX \right)^{-1}
    \myX^\top \myW
    \myeta,
  \end{eqnarray*}
  where $\myeta = \myX \mytheta_m^{(k)} + \myW^{-1} \left( \myy - \mygamma \right)$.
\end{proof}

\section{Estimation Results for the Logistic Model of Population Size in Section \ref{sec: simulation}}
\label{appendix: logistic model for population}
    The estimation of $\{\phi_{i,j,1}, \phi_{i,j,2}, \phi_{i,j,3} \}_{i = 1,\ldots, 49}$ is summarized in Table \ref{table: sim -- estimation of phi}.

    \begin{table}
      \centering
      \begin{adjustbox}{max width=0.95\textwidth}
      \centering
      \begin{tabular}{c|cccc|c|cccc}
        \hline
        &state&
        $\widehat\phi_{i,j,1}$ & $\widehat\phi_{i,j,2}$ & $\widehat\phi_{i,j,3}$ &
        &state&
        $\widehat\phi_{i,j,1}$ & $\widehat\phi_{i,j,2}$ & $\widehat\phi_{i,j,3}$ \\
        \hline
        $i=1$ &(Alabama& 80.6404   & -4.0135 &  66.0162& $i=26$ &Nebraska& 90.1254  & 168.2284 & 109.1338\\
        $i=2$ &(Arizona& 82.7195   & 2.6681 & 12.4089 & $i=27$ &Nevada&   35.6347 &   6.3429 &  11.1336  \\
        $i=3$ &Arkansas& 37.4245  & -15.4787 & 28.1058 & $i=28$&New Hampshire & 13.8539 &  -13.8127 &   10.6643   \\
        $i=4$ &California& 541.8391   & -8.4830 & 33.6786 & $i=29$&New Jersey & 92.5943 &  -21.7473 &   13.7113 \\
        $i=5$ &Colorado& 98.7504   & 17.0426 &  28.2669& $i=30$ &New Mexico& 24.5653 &  -12.2919 &   20.4325 \\
        $i=6$ &Connecticut& 38.0542  & -33.2342 & 20.1424 & $i=31$ &New York& 187.9958 &  -10.3970 &    3.8098  \\
        $i=7$ &Delaware& 18.8717   & 23.0528 & 40.3679 & $i=32$ &North Carolina& 209.7977  &  25.8509 &   34.7033  \\
        $i=8$ &District of Columbia& 84.8814  & 234.3312 & 85.0780 & $i=33$ &North Dakota& 61.2268 &  266.2989 &  120.0412  \\
        $i=9$ &Florida&  472.3076   & 32.7741 & 35.0153 & $i=34$ &Ohio&  120.8197 &  -61.9571 &   26.2627  \\
        $i=10$ &Georgia& 128.0025   &-1.0137  & 17.0737 & $i=35$&Oklahoma & 170.7639 &  136.1955  &  92.1323  \\
        $i=11$ &Idaho&   47.9991   & 46.7379 & 38.0065 & $i=36$ &Oregon& 80.6280  &  22.5818 &   43.4817  \\
        $i=12$ &Illinois&  130.1208  &-16.1579  &8.3610  & $i=37$ &Pennsylvania& 119.5912 & -156.0891 &    4.3750  \\
        $i=13$ &Indiana& 87.0352  & -26.3047 &42.7584  & $i=38$&Rhode Island & 10.6925 &  -17.1867  &   7.8247   \\
        $i=14$ &Iowa& 110.5464  & 168.3283 & 154.6580 & $i=39$ &South Carolina& 287.8320  & 116.8317 & 59.0815\\
        $i=15$ &Kansas& 35.6595  & -33.1733 & 38.3685 & $i=40$ &South Dakota& 63.0675  & 210.4712 &  100.9319  \\
        $i=16$ &Kentucky& 61.6822  & -21.2109 & 47.4401 & $i=41$ &Tennessee& 126.5228  &  18.9025  &  45.5198 \\
        $i=17$ &Louisiana&  138.2943  & 170.8833 & 212.9227 & $i=42$ &Texas&1296.6000  & 79.1413 & 42.5022\\
        $i=18$ &Maine& 13.9479 & -37.1852 & 19.3080 & $i=43$ &Utah& 87.5383 &   44.0237 &   32.6645 \\
        $i=19$ &Maryland& 107.4051  & 10.3277 & 56.6712 & $i=44$ &Vermont&  287.8320  & 116.8317 &59.0815 \\
        $i=20$ &Massachusetts& 172.8622   & 72.9025 & 113.4860 & $i=45$ &Virginia& 287.8320  & 116.8317 & 59.0815\\
        $i=21$ &Michigan& 100.0291   & -9.4310 & 3.8436 & $i=46$ &Washington& 72.4669   &  0.2911  &   6.1436   \\
        $i=22$ &Minnesota&  79.6624  & -10.0102 & 41.2455 & $i=47$ &West Virginia&  40.7546 & 388.1061 & 1747.3000\\
        $i=23$ &Mississippi& 30.6879  &-20.8025  & 12.1407 & $i=48$&Wisconsin &  63.7490 &  -29.2415 &   23.1078\\
        $i=24$ &Missouri& 68.6103 & -26.4275 & 23.9467 & $i=49$&Wyoming & 60.6138 &  173.6913  &  66.5942  \\
        $i=25$ &Montana& 65.8753  & 179.3000 & 92.5485 &  &  &  & \\
        \hline
      \end{tabular}
      \end{adjustbox}
      \caption{Estimation of $\{\phi_{i,j,1}, \phi_{i,j,2}, \phi_{i,j,3} \}_{i = 1,\ldots, 49}$ in Section \ref{sec: simulation}
      \label{table: sim -- estimation of phi}}
    \end{table}

\section{Table to Generate Figure \ref{fig: sim -- ARL1 plot}}
\label{appendix: ARL1 data}
    In this section, we present the table to generate Figure \ref{fig: sim -- ARL1 plot} in Table \ref{table: sim -- ARL1}.

    \begin{table}
    \caption{$\text{ARL}_1$ of our proposed PoSSTenD method, YPS-SSD method and ZQ-Lasso under population with decreasing trend
    \label{table: sim -- ARL1}}
    \centering
    \begin{adjustbox}{max width=0.95\textwidth}
    \centering
    \begin{threeparttable}
      \begin{tabular}{c|ccccccccc}
        \hline
        Fitting error    &
        $\delta = 0.05$  &
        $\delta = 0.075$ &
        $\delta = 0.1$   &
        $\delta = 0.125$ &
        $\delta = 0.15$  &
        $\delta = 0.175$ &
        $\delta = 0.2$    \\
        \hline
        &\multicolumn{7}{c}{population with increasing trend} \\
        \cline{2-8}
        PoSSTenD 
                 &2.0150  &1.3500& 1.0780 & 1.0140 & 1.0010 &1.0000& 1.0000\\
                 &(1.4610)&(0.6049)& (0.3000)&(0.1175) & (0.0316)&(0.0000)&(0.0000)\\
        YPS-SSD  &1.1806  & 1.0830& 1.0350 &1.0160 & 1.0080 & 1.0030&1.0010 \\
                 &(0.5138)&(0.3259)& (0.2093)&(0.1542) & (0.1093)&(0.0547)&(0.0316)\\
        ZQ-Lasso &9.0609  & 9.0000&  9.0620&9.0730 & 9.0670 & 9.0820& 9.0730\\
                 &(1.0593)&(1.1096)& (1.0580)&(1.0162) & (1.0704)&(1.0865)&(1.0406)\\
        DBS-PCA  &10.7515  &10.7360 & 10.7090 &10.7220 & 10.7340 & 10.7620& 10.7330\\
                 &(0.6330)&(0.7778)& (0.8666)&(0.7832) & (0.7442)&(0.8295)&(0.7939)\\
        \hline
        &\multicolumn{7}{c}{population with decreasing trend} \\
        \cline{2-8}
        PoSSTenD 
                 &4.2780  & 1.5580 & 1.0500 & 1.0000& 1.0000 & 1.0000& 1.0000\\
                 &(3.6105)&(1.1908)& (0.2357)&(0.0000) & (0.0000)&(0.0000)&(0.0000)\\
        YPS-SSD  &11.0000  & 10.9900 & 10.8300 & 9.6910& 6.2810 & 3.2390&1.5630 \\
                 &(0.0000)&(0.3162)& (1.2620)&(3.2690) & (4.8147)&(3.9708)&(2.1750)\\
        ZQ-Lasso &11.0000  & 11.0000 & 11.0000 & 10.9550& 9.6600 & 5.7020& 2.4990\\
                 &(0.0000)&(0.0000)& (0.0000)&(0.6383) & (3.3238)&(4.7103)&(2.9734)\\
        DBS-PCA  &10.0200  & 9.6900 &  9.2750 & 8.7720& 8.6040 & 8.5750& 8.6920\\
                 &(2.0995)&(2.6342)& (3.0684)&(3.4392) & (3.6128)&(3.6477)&(3.6631)\\
        \hline
      \end{tabular}
    \begin{tablenotes}
      \footnotesize
        \item[1] The above results are based on 1000 simulations
    \end{tablenotes}
    \end{threeparttable}
    \end{adjustbox}
    \end{table}

\bibliographystyle{abbrv}
\bibliography{referenceTensorPoisson}

\end{document}